\def\archive{1}
\newcommand{\C}{C}
\newtheorem{theorem}{Theorem}
\newtheorem{proposition}{Proposition}
\author{Bolun Wang, Zachary Ferguson, Teseo Schneider, Xin Jiang, Marco Attene, Daniele Panozzo}
\newcommand{\citet}[1]{\cite{#1}}
\definecolor{FNColor}{rgb}{0.83, 0.19, 0.19}
\definecolor{FPColor}{rgb}{0.3, 0.52, 0.89}
\definecolor{ZachColor}{rgb}{0.9, 0.5, 0}
\definecolor{Red}{rgb}{1, 0, 0}
\definecolor{Pink}{rgb}{1, 0, 1}
\definecolor{TeseoColor}{rgb}{0.15, 0.68, 0.38}
\definecolor{BolunColor}{rgb}{0, 0, 0.9}
\definecolor{MColor}{rgb}{0.1, 0.8, 0.7}
\definecolor{TableHeader}{rgb}{0.556,0.231,0.396}
\definecolor{ReviewColor}{rgb}{0.1, 0.8, 0.7}
\definecolor{ReviewwColor}{rgb}{0.4, 0.3, 0.9}
\definecolor{Review3Color}{rgb}{0.8, 0, 0.8}
\newcommand{\review}[1]{{\leavevmode\color{ReviewColor} #1}}
\newcommand{\revieww}[1]{{\leavevmode\color{ReviewwColor} #1}}
\newcommand{\reviewww}[1]{{\leavevmode\color{Review3Color} #1}}
\providecommand{\finalversion}{1} 
	\renewcommand{\review}[1]{{#1}}
    \renewcommand{\revieww}[1]{{#1}}
    \renewcommand{\reviewww}[1]{{#1}}
\definecolor{dpurple}{rgb}{0.4,0,0.4}
\definecolor{dgreen}{rgb}{0,0.3,0}
\definecolor{dgrey}{rgb}{0.5,0.5,0.5}
\definecolor{orange}{rgb}{0.9,0.4,0}
\lstdefinelanguage{Simple}{
    morecomment=[l]{//},
    morekeywords={if, for, else, break, continue, return, in, length, not, struct, and, or, NULL},
    morekeywords = [2]{NO_COLLISION, COLLISION},
    morekeywords = [3]{prism_bounding_box, prism_edge_cube, cube_edge_prism
parity_check},
    morekeywords = [4]{xxx}
    otherkeywords = {[,]},
}
\let\originalleft\left
\let\originalright\right
\renewcommand{\left}{\mathopen{}\mathclose\bgroup\originalleft}
\renewcommand{\right}{\aftergroup\egroup\originalright}
\renewcommand{\geq}{\geqslant}
\renewcommand{\leq}{\leqslant}
\newcommand{\reals}{\mathbb{R}}
\renewcommand{\C}{C}
\newcommand{\p}{p}
\renewcommand{\v}{v}
\renewcommand{\d}{d}
\newcommand{\RR}{\reals}
\newcommand{\incl}{\mathlarger{\mathlarger{\mathlarger{{\square}}}}}
\newcommand{\handcrafted}{handcrafted\xspace} 
\newcommand{\Handcrafted}{Handcrafted\xspace}
\newcommand{\LE}[1]{\mathcal{L}(#1)}
\newcommand{\RI}[1]{\mathcal{R}(#1)}
\begin{document}

\title{A Large Scale Benchmark and an Inclusion-Based Algorithm for Continuous Collision Detection}

\ifx\archive\undefined
  \author{Bolun Wang}
  \affiliation{%
    \department{LMIB \& NLSDE \& School of Mathematical Sciences \& Shenyuan Honor College}
    \institution{Beihang University}
  }
  \affiliation{%
    \institution{ New York University}
  }
  \email{wangbolun@buaa.edu.cn}
  \authornote{Joint first authors}

  \author{Zachary Ferguson}
  \affiliation{%
    \institution{New York University}
  }
  \email{zfergus@nyu.edu}
  \authornotemark[1]

  \author{Teseo Schneider}
  \affiliation{%
    \institution{New York University}
  }
  \affiliation{%
    \institution{University of Victoria}
  }

  \author{Xin Jiang}
  \affiliation{%
    \department{LMIB \& NLSDE \& School of Mathematical Sciences}
    \institution{ Beihang University}
  }
  \affiliation{%
    \institution{ Peng Cheng Laboratory in Shenzhen}
  }

  \author{Marco Attene}
  \affiliation{%
    \institution{IMATI - CNR}
  }

  \author{Daniele Panozzo}
  \affiliation{%
    \institution{New York University}
  }
  \email{panozzo@nyu.edu}

  \begin{abstract}
    We introduce a large-scale benchmark for continuous collision detection (CCD) algorithms, composed of queries manually constructed to highlight challenging degenerate cases and automatically generated using existing simulators to cover common cases. We use the benchmark to evaluate the accuracy, correctness, and efficiency of state-of-the-art continuous collision detection algorithms, both with and without minimal separation.

We discover that, despite the widespread use of CCD algorithms, existing algorithms are either: (1) correct but impractically slow, (2) efficient but incorrect, introducing false negatives which will lead to interpenetration, or (3) correct but over conservative, reporting a large number of false positives which might lead to inaccuracies when integrated into a simulator.

By combining the seminal interval root-finding algorithm introduced by Snyder in 1992 with modern predicate design techniques, we propose a simple and efficient CCD algorithm. This algorithm is competitive with state-of-the-art methods in terms of runtime while conservatively reporting the time of impact and allowing an explicit trade-off between runtime efficiency and the number of false positives reported.
  \end{abstract}

  %
  \begin{CCSXML}
    <ccs2012>
    <concept>
    <concept_id>10010147.10010371.10010352.10010381</concept_id>
    <concept_desc>Computing methodologies~Collision detection</concept_desc>
    <concept_significance>500</concept_significance>
    </concept>
    <concept>
    <concept_id>10010147.10010371.10010352.10010379</concept_id>
    <concept_desc>Computing methodologies~Physical simulation</concept_desc>
    <concept_significance>300</concept_significance>
    </concept>
    </ccs2012>
  \end{CCSXML}

  \ccsdesc[500]{Computing methodologies~Collision detection}
  \ccsdesc[300]{Computing methodologies~Physical simulation}
  %
  %

  \keywords{continuous collision detection, computational geometry, physically based animation}
\fi

\ifx\archive\undefined
  \begin{teaserfigure}
    \parbox{.31\linewidth}{
\parbox{0.02\linewidth}{\centering\rotatebox{90}{\scriptsize{Number of False Positives}}}\hfill
\parbox{0.95\linewidth}{\centering\includegraphics[width=\linewidth]{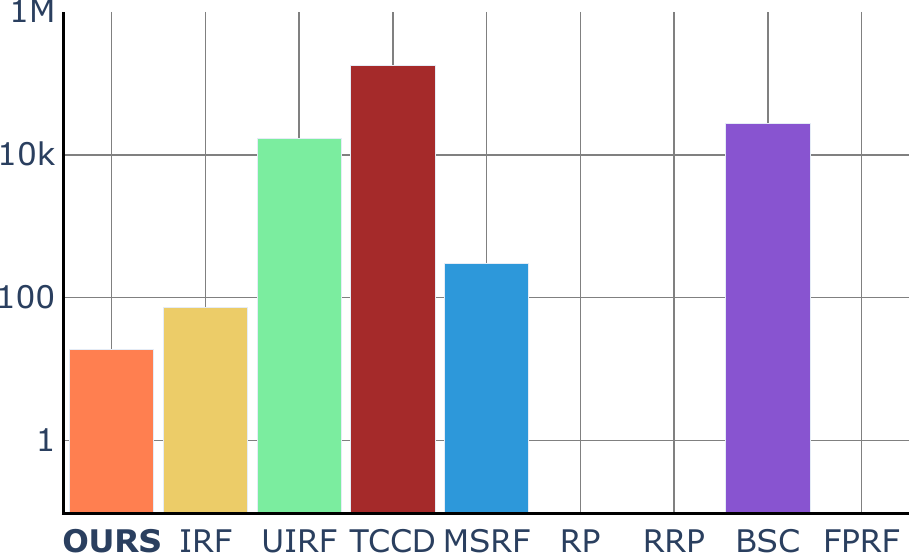}}}\hfill
\parbox{.31\linewidth}{
\parbox{0.02\linewidth}{\centering\rotatebox{90}{\scriptsize{Number of False Negatives}}}\hfill
\parbox{0.95\linewidth}{\centering\includegraphics[width=\linewidth]{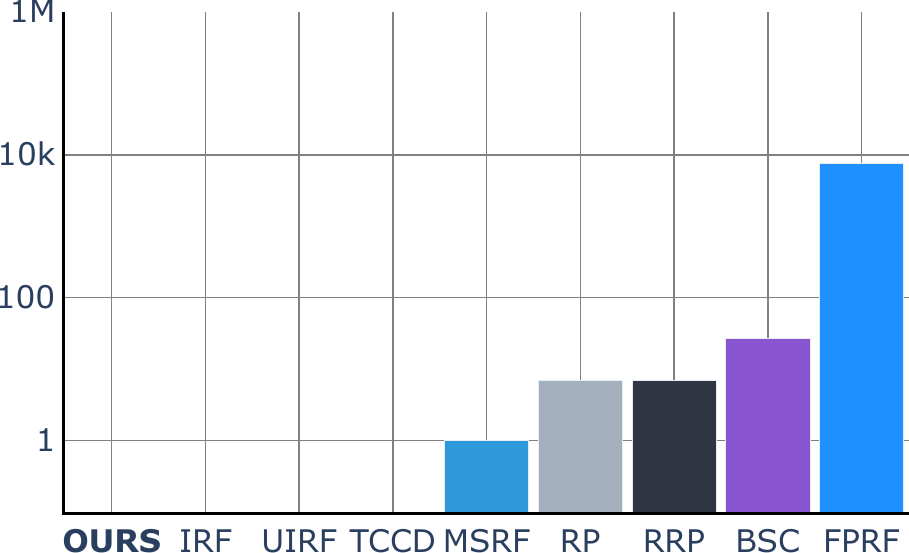}}}\hfill
\parbox{.31\linewidth}{
\parbox{0.02\linewidth}{\centering\rotatebox{90}{\scriptsize{Average Time (\SI[]{}{\micro\second})}}}\hfill
\parbox{0.95\linewidth}{\centering\includegraphics[width=\linewidth]{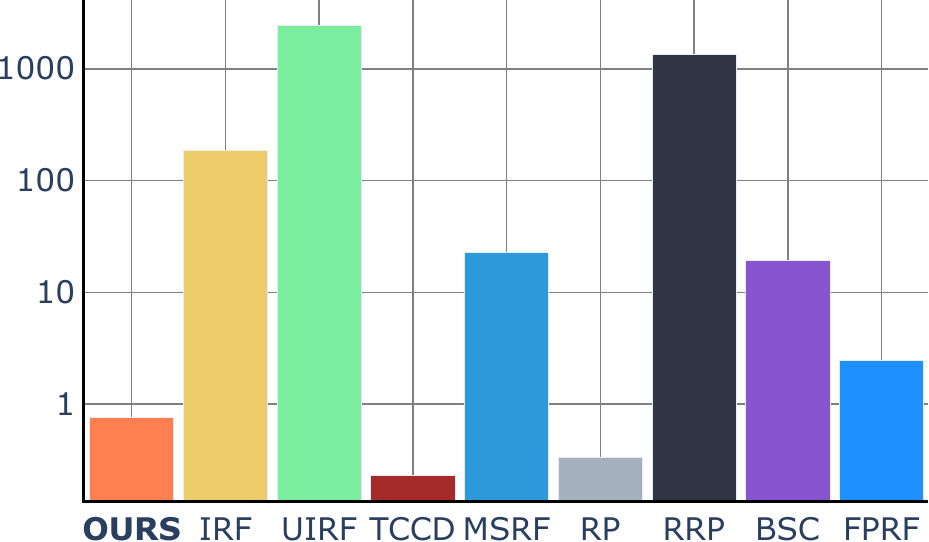}}}\hfill
\caption{An overview of the results of our study of different CCD methods run on 60 million queries (both vertex-face and edge-edge). For each method, we show the number of false positives (i.e., the method detects a collision where there is none), the number of false negatives (i.e., the method misses a collision), and the average run time.  Each plot reports results in a logarithmic scale. False positives and negatives are computed with respect to the ground truth computed using Mathematica~\cite{Mathematica}. Acronyms are defined in Section~\ref{sec:comparison}.}
  \end{teaserfigure}
  \maketitle
\else
  \date{}
  \maketitle
  
  \begin{figure*}[ht]\centering
    \input{teaser}
  \end{figure*}
\fi

\section{Introduction}

\review{Collision detection and response are two separate, yet interconnected, problems  in computer graphics and scientific computing. Collision detection specializes in finding when and if two objects collide, while collision response uses this information to deform the objects following physical laws. A large research effort has been invested in the latter problem,  assuming that collision detection can be solved reliably and efficiently. In this study we focus on the former, using an experimental approach based on large scale testing. We use existing collision response methods to generate collision detection queries to investigate the pros and cons of existing collision detection algorithms.} 

Static collision detection is popular in interactive applications due to its efficiency, its inability to detect collisions between fast moving objects passing through each other (tunneling) hinders its applicability. To address this limitation, continuous collision detection (CCD) methods have been introduced: by solving a more computationally intensive problem, usually involving finding roots of a low-degree polynomial, these algorithms can detect any collision happening in a time step, often assuming linear trajectories. 

The added robustness makes this family of algorithms popular, but they can still fail due to floating-point rounding errors. Floating point failures are of two types: false negatives, i.e., missed collisions, which lead to interpenetration, and false positives, i.e., detecting collisions when there are none. 

\review{Most collision response algorithms can tolerate minor imperfections, using heuristics to recover from physically invalid states (in reality, objects cannot inter-penetrate). However, these heuristics have parameters that needs to be tuned for every scene to ensure stability and faithfulness in the simulation~\cite{Li2020IPC}. Recently, the collision response problem has been reformulated to avoid the use of heuristics, and the corresponding parameter tuning, by disallowing physically invalid configurations~\cite{Li2020IPC}}.
\reviewww{For instance, in the attached video, the method in~\cite{Li2020IPC} cannot recover from interpenetration after the CCD misses a collision leading to an unnatural ``sticking'' and eventual failure of the simulation.}
\review{This comes with a heavier burden on the CCD algorithm used, which should never report false negatives.} 

We introduce a large benchmark of CCD queries with ground truth computed using the \emph{exact, symbolic solver} of Mathematica~\cite{Mathematica}, and evaluate the correctness (lack of false negatives), conservativeness (false positive count), and runtime efficiency of existing state of the art algorithms. The benchmark is composed of both manually designed queries to identify degenerate cases (building upon~\cite{erleben2018methodology}) and a large collection of real-world queries extracted from simulation sequences. 
On the algorithmic side, we select representative algorithms from the three main approaches existing in the literature for CCD root-finding: \emph{inclusion-based bisection methods}~\cite{Snyder1993Interval,Redon2002fast}, \emph{numerical methods}~\cite{vouga2010AVCM,wang2015tightccd}, and \emph{exact methods}~\cite{brochu2012efficient,tang2014fast}.
Thanks to our benchmark, we identified missing cases that were not handled by previous methods, and we did a best effort to fix the corresponding algorithms and implementations to account for these cases.

The surprising conclusion of this study (Section~\ref{sec:comparison}) is that the majority of the existing CCD algorithms produce false negatives, except three: (1) symbolic solution of the system and evaluation with exact arithmetic computed using Mathematica~\cite{Mathematica}, (2) Bernstein sign classification (BSC) with conservative error analysis~\cite{wang2015tightccd}, and (3) inclusion-based bisection root finding~\cite{Snyder1993Interval,Redon2002fast}. (1) is extremely expensive and, while it can be used for generating the ground truth, it is impractical in simulation applications. (2) is efficient but generates many false positives and the number of false positives depends on the geometric configuration and velocities involved. (3) is one of the oldest methods proposed for CCD. It is slow compared to state of the art algorithms, but it is correct and allows precise control of the trade-off between false positives and computational cost. 

This extensive analysis and benchmark inspired us to introduce a specialization of the classical inclusion-based bisection algorithm proposed in~\cite{snyder1992interval} to the specific case of CCD for triangular meshes (Section~\ref{sec:method}). The major changes are: a novel inclusion function, an efficient strategy to perform bisection, and the ability to find CCD roots with minimal separation (Section~\ref{sec:msccd}). Our novel inclusion function:
\begin{enumerate}
\item is tighter leading to smaller boxes on average thus making our method more accurate (i.e., less false positive);
\item reduces the root-finding problem into the iterative evaluation of a Boolean function, which allows replacing explicit interval arithmetic with a more efficient floating point filtering;
\item can be vectorized with \texttt{AVX2} instructions.
\end{enumerate}
With these modifications, our inclusion-based bisection algorithm is only $3\times$ slower on average than the fastest inaccurate CCD algorithm. At the same time it is provably conservative, provides a controllable ratio of false positives (within reasonable numerical limits), supports minimal separation, and reports the time of impact. We also discuss how to integrate minimal separation CCD in algorithms employing a line search to ensure the lack of intersections, which are common in locally injective mesh parametrization and have been recently introduced in physical simulation by~\citet{Li2020IPC}.

\review{Our dataset is available at the  \href{https://archive.nyu.edu/handle/2451/61518}{NYU Faculty Digital Archive}}, while the implementation of all the algorithms compared in the benchmark, a reference implementation of our novel inclusion-based bisection algorithm, and scripts to reproduce all results (Section~\ref{sec:benchmark}) \review{are available on \href{https://continuous-collision-detection.github.io/}{our project web page}.} We believe this dataset will be an important element to support research in efficient and correct CCD algorithms, while our novel inclusion-based bisection algorithm is a practical solution that will allow researchers and practitioners to robustly check for collisions in applications where a $3\times$ slowdown in the CCD (which is usually only one of the expensive steps of a simulation pipeline) will be preferable over the risk of false negatives or the need to tune CCD parameters.

\section{Related Work}\label{sec:related}
We present a brief overview of the previous works on continuous collision detection for triangle meshes. \review{Our work focuses only on CCD for deformable triangle meshes and we thus exclude discussing methods approximating collisions using proxies (e.g., \citet*{Hubbard1995Interactive, Mirtich1996Impulse}).}

\paragraph{Inclusion-Based Root-Finding.} The generic algorithm in the seminal work of~\citet{snyder1992interval} on interval arithmetic for computer graphics is a conservative way to find collisions~\cite{VonHerzen:1990,Snyder1993Interval,Redon2002fast}. This approach uses inclusion functions to certify the existence of roots within a domain, using a bisection partitioning strategy. Surprisingly, this approach is not used in recent algorithms despite being provably conservative and simple. Our algorithm is based on this approach, but with two major extensions to improve its efficiency (Section~\ref{sec:method}).

\paragraph{Numerical Root-Finding.} The majority of CCD research focuses on efficient and accurate ways of computing roots of special cubic polynomials. Among these, a most popular cubic solver approach is introduced by~\citet{provot1997collision}, in which a cubic equation is solved to check for coplanarity, and then the overlapping occurrence is validated to determine whether a collision actually occurs. Refined constructions based on this idea have been introduced for rigid~\cite{Redon2002fast,kim2003collision} and deformable~\cite{Hutter2007optimized,Tang2011VolCCD} bodies.
However, all of these algorithms are based on floating-point arithmetic, requiring numerical thresholds to account for the unavoidable rounding errors in the iterative root-finding procedure. In fact, even if the cubic polynomial is represented exactly, its roots are generally irrational and thus not representable with floating-point numbers. Unfortunately, the numerical thresholds make these algorithms robust only for specific scenarios, and they can in general introduce false negatives.
Our approach has a moderately higher runtime than these algorithms, but it is guaranteed to avoid false negatives without parameter tuning. We benchmark~\citet{provot1997collision} using the implementation of~\citet{vouga2010AVCM} in Section~\ref{sec:benchmark}.

For most applications, false positives are less problematic than false negatives since a false negative will miss a collision, leading to interpenetration and potentially breaking the simulation.  
\citet{Tang2010fast} propose a simple and effective ﬁlter which can reduce both the number of false positives and the elementary tests between the primitives. \citet{Wang2014Defending} and~\citet{wang2015tightccd} improve its reliability by introducing forward error analysis, in which error bounds for floating-point computation are used to eliminate false positives. We benchmark the representative method of~\citet{wang2015tightccd} in Section~\ref{sec:benchmark}.

\paragraph{Exact Root-Finding.}
\citet{brochu2012efficient} and~\citet{tang2014fast} introduce algorithms relying on exact arithmetic to provide exact continuous collision detection. However, after experimenting with their implementations and carefully studying their algorithms, we discovered that they cannot always provide the exact answer (Section~\ref{sec:benchmark}). \citet{brochu2012efficient} rephrase the collision problem as counting the number of intersections between a ray and the boundary of a subset of $\RR^3$ bounded by bilinear faces. The ray casting and polygonal construction can be done using rational numbers (or more efficiently with floating point expansions) to avoid floating-point rounding errors. In~\cite{tang2014fast} the CCD queries are reduced to the evaluation of the signs of Bernstein polynomials and algebraic expressions, using a custom root finding algorithm. Our algorithm uses the geometric formulation proposed in~\cite{brochu2012efficient}, but uses a bisection strategy instead of ray casting to find the roots.
We benchmark both~\cite{brochu2012efficient} and~\cite{tang2014fast} in Section~\ref{sec:benchmark}.

\paragraph{Minimal Separation.}
Minimal separation CCD (MSCCD)~\cite{provot1997collision,Stam2009Nucleus,harmon2011IAGM,Libin:2019} reports collisions when two objects are at a (usually small) user-specified distance. These approaches have two main applications: (1) a minimal separation is useful in fabrication settings to ensure that the fabrication errors will not lead to penetrations, and (2) a minimal separation can ensure that, after floating-point rounding, two objects are still not intersecting, an invariant which must be preserved by certain simulation codes~\cite{harmon2011IAGM,Li2020IPC}. We benchmark~\cite{harmon2011IAGM} in Section~\ref{sec:msccd-results}. Our algorithm supports a novel version of minimal separation, where we use the $L^\infty$ norm instead of $L^2$ (Section~\ref{sec:msccd-method}).

\paragraph{Collision Culling.}
An orthogonal problem is efficient high-level collision culling to quickly filter out primitive pairs that do not collide in a time step. Since in this case it is tolerable to have many false positives, it is easy to find conservative approaches that are guaranteed to not discard potentially intersecting pairs \cite{Curtis2008Fast,Tang2009ICCD,Wong2006randomized,Tang2008Adjacency,volino1994efficient,provot1997collision,Mezger2003Hierarchical,Schvartzman2010Star,Pabst2010Fast,Zheng2012Energy,Zhang2007Continous,Govindaraju2005collision}. Any of these approaches can be used as a preprocessing step to any of the CCD methods considered in this study to improve performance.

\paragraph{Generalized Trajectories.}
The linearization of trajectories commonly used in collision detection is a well-established, practical approximation, ubiquitous in existing codes. There are, however, methods that can directly detect collisions between objects following polynomial trajectories~\cite{Pan2012Collision} or rigid motions~\cite{Tang2009C2A, Redon2002fast, canny1986collision, Zhang2007Continous}, and avoid the approximation errors due to the linearization. Our algorithm currently does not support curved trajectories and we believe this is an important direction for future work.

\section{Preliminaries and Notation}\label{sec:preliminaries}

Assuming that the objects are represented using triangular meshes and that every vertex moves in a linear trajectory in each time step, the first collision between moving triangles can happen either when a vertex hits a triangle, or when an edge hits another edge. 

Thus a continuous collision detection algorithm is a procedure that, given a vertex-face or edge-edge pair, equipped with their \emph{linear trajectories}, determines if and when they will touch. Formally, for the vertex-face CCD, given a vertex $\p$ and a face with vertices $\v_1, \v_2, \v_3$ at two distinct time steps $t^0$ and $t^1$ (we use the superscript notation to denote the time, i.e., $\p^0$ is the position of $\p$ at $t^0$), the goal is to determine if at any point in time between $t^0$ and $t^1$ the vertex is contained in the moving face. Similarly for the edge-edge CCD the algorithm aims to find if there exists a $t\in[t^0, t^1]$ where the two moving edges $(\p_1^t, \p_2^t)$ and $(\p_3^t, \p_4^t)$ intersect. We will briefly overview and discuss the pros and cons of the two major formulations present in the literature to address the CCD problem: \emph{multi-variate} and \emph{univariate}.

\paragraph{Multivariate CCD Formulation}
The most direct way of solving this problem is to parametrize the trajectories with a parameter $t \in [0,1]$ (i.e., $\p_i(t) = (1-t)\p_i^0 + t\p_i^1$ and $\v_i(t) = (1-t)\v_i^0 + t\v_i^1$) and write a multivariate polynomial whose roots correspond to intersections. That is finding the roots of
\[
F_{\text{vf}} \colon \Omega_{\text{vf}} = [0,1] \times \{u,v \geq 0 | u+v \leq 1\} \to \RR^3
\]
with
\begin{equation}\label{eq:F-vf}
F_{\text{vf}}(t,u,v) = \p(t) - \big((1-u-v)\v_1(t) + u\v_2(t) + v\v_3(t)\big),
\end{equation}
for the vertex-face case. Similarly for the edge-edge case the goal is to find the roots of
\[
F_{\text{ee}} \colon \Omega_{\text{ee}} = [0,1] \times [0,1]^2 \to \RR^3
\]
with
\begin{equation}\label{eq:F-ee}
F_{\text{ee}}(t,u,v) = \big((1-u)\p_1(t) + u\p_2(t)\big) - \big((1-v)\p_3(t) + v\p_4(t)\big).
\end{equation}
In other words, the CCD problem reduces to determining if $F$ has a root in $\Omega$ (i.e., there is a combination of valid $t,u,v$ for which the vector between the point and the triangle is zero)~\cite{brochu2012efficient}. 
The main advantage of this formulation is that it is direct and purely algebraic: there are no degenerate or corner cases to handle. The intersection point is parameterized in time and local coordinates and the CCD problem reduces to multivariate root-finding. However, finding roots of a system of quadratic polynomials is difficult and expensive, which led to the introduction of the univariate formulation.

\paragraph{Univariate CCD Formulation}

An alternative way of addressing the CCD problem is to rely on a \emph{geometric} observation: two primitives intersects if the four points (i.e., one vertex and the three triangle's vertices or the two pairs of edge's endpoints) are coplanar~\cite{provot1997collision}. This observation has the major advantage of only depending on time, thus the problem becomes finding roots in a univariate cubic polynomial: 
\begin{equation}\label{eq:univariate}
f(t) = \langle n(t), q(t)\rangle = 0,
\end{equation}
with
\[
n(t) = \big(\v_2(t)-\v_1(t)\big) \times \big(\v_3(t)-\v_1(t)\big)~\text{and}~
q(t) = \p(t)-\v_1(t)
\]
for the vertex-face case and
\[
n(t) = \big(\p_2(t)-\p_1(t)\big) \times \big(\p_4(t)-\p_3(t)\big)~\text{and}~
q(t) = \p_3(t)-\p_1(t)
\]
for the edge-edge case. Once the roots $t^\star$ of $f$ are identified, they need to be filtered, as not all roots correspond to actual collisions. While filtering is straightforward when the roots are finite, special care is needed when there is an infinite number of roots, such as when the two primitives are moving on the same plane. Handling these cases, especially while accounting for floating point rounding, is very challenging.

\section{Benchmark}\label{sec:benchmark}

\subsection{Dataset} 
\begin{figure}
    \centering
    \includegraphics[width=0.9\linewidth]{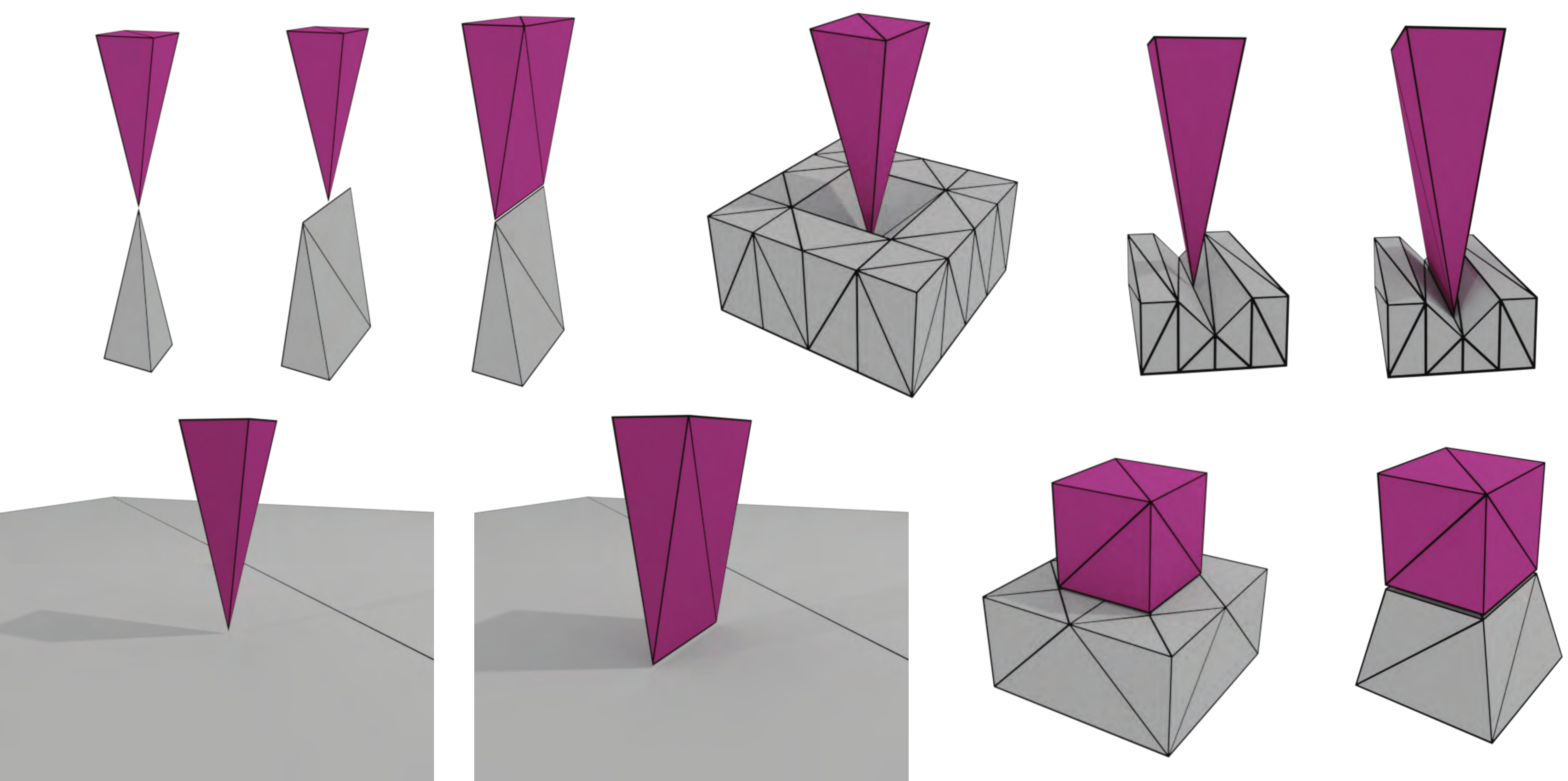}
    \caption{\review{Scenes from~\citet{erleben2018methodology} that are used to generate a large part of the \handcrafted dataset.}}
    \label{fig:hand-dataset}
\end{figure}

\begin{figure}
    \centering
    \includegraphics[width=0.9\linewidth]{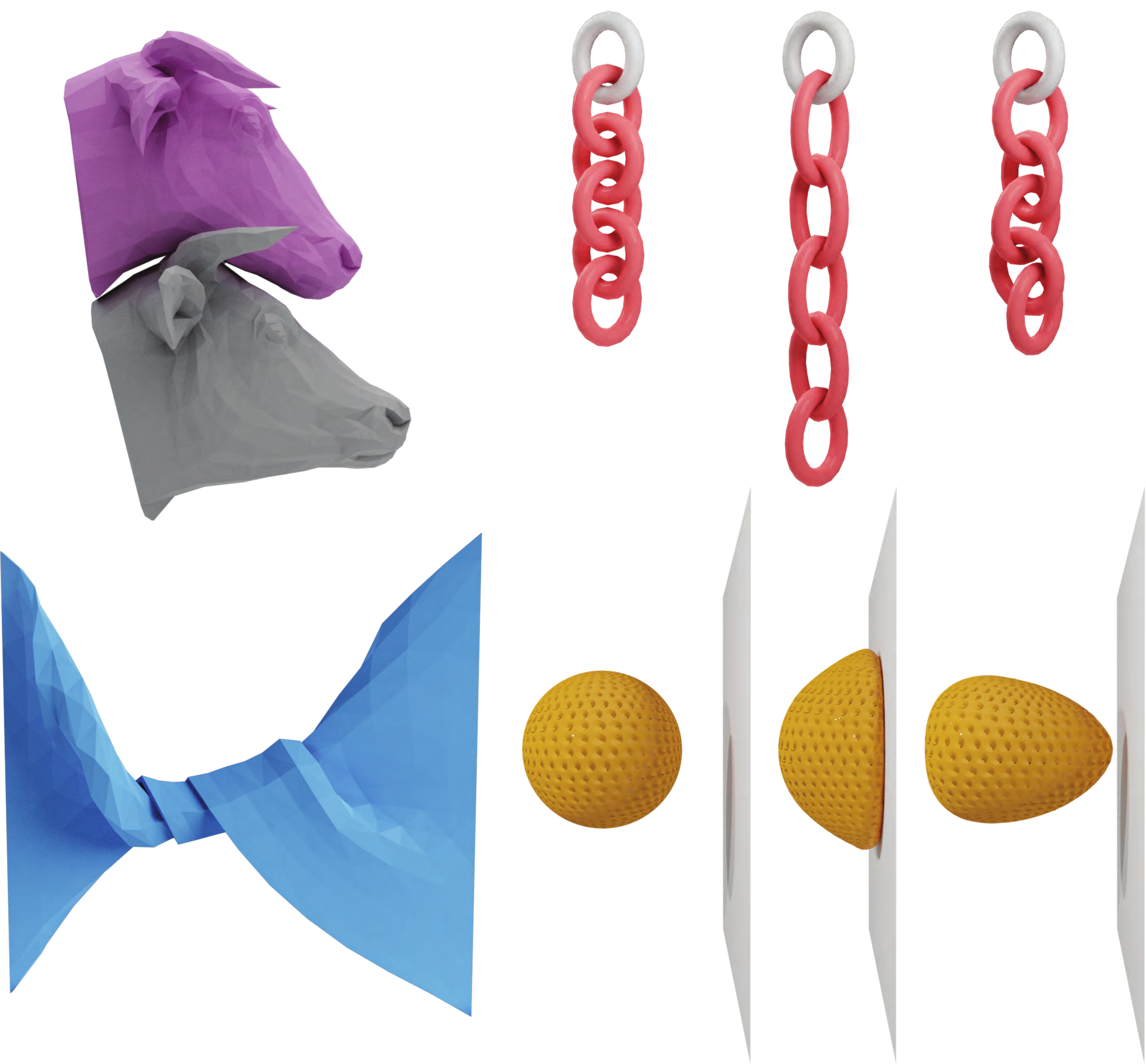}
    \caption{The scenes used to generate the simulation dataset of queries. We use two simulation methods: (top) a sequential quadratic programming (SQP) method with constraints and active set update from~\citet{verschoor2019efficient} and (bottom) the method proposed by~\citet{Li2020IPC}.}
    \label{fig:sim-dataset}
\end{figure}

We crafted two datasets to compare the performance and correctness of CCD algorithms: (1) a \emph{\handcrafted} dataset that contains over 12 thousand point-triangle and 15 thousand edge-edge queries, and (2) a \emph{simulation} dataset that contains over 18 million point-triangle and 41 million edge-edge queries. To foster replicability, we describe the format of the dataset in Appendix~\ref{app:format}.

The \handcrafted queries are the union of queries \review{simulated with~\cite{Li2020IPC} from the scenes in}~\cite{erleben2018methodology}  (Figure~\ref{fig:hand-dataset}) and a set of handcrafted pairs for degenerate geometric configurations. These include: point-point degeneracies, near collisions (within a floating-point epsilon from collision), coplanar vertex-face and edge-edge motion (where the function $f$~\eqref{eq:univariate} has infinite roots), degenerated function $F_{\text{vf}}$ and $F_{\text{ee}}$, and CCD queries with two or three roots.

The simulation queries were generated by running four nonlinear elasticity simulations. The first two simulations (Figure~\ref{fig:sim-dataset} top row) use the constraints of~\cite{verschoor2019efficient} to simulate two cow heads colliding and a chain of rings falling. The second two simulations (Figure~\ref{fig:sim-dataset} bottom row) use the method of~\cite{Li2020IPC} to simulate a coarse mat twisting and the high speed impact of a golf ball hitting a planar wall. 

\subsection{Comparison}\label{sec:comparison}

\begin{table}
    \caption{Summary of the average runtime in ${\mu}s$ (t), number of false positive (FP), and number of false negative (FN) for the six competing methods. }
    \centering\scriptsize
     \review{\Handcrafted Dataset (12K)} -- Vertex-Face CCD
    \begin{tabular}{l|cccccccc|c}
\toprule
{} &      IRF &      UIRF &  FPRF &  TCCD &    RP &     RRP &     BSC &  MSRF &   Ours \\
\midrule
t  &  14942.40 &  124242.00 &  2.18 &  0.38 &  1.41 &  928.08 &  176.17 &  12.90 &  1532.54 \\
FP &        87 &        146 &     9 &   903 &     3 &       0 &      11 &     16 &      108 \\
FN &         0 &          0 &    70 &     0 &     5 &       5 &      13 &    386 &        0 \\
\bottomrule
\end{tabular}
    \\[1.5em]
    
    \review{\Handcrafted Dataset  (15K)}-- Edge-Edge CCD
    \begin{tabular}{l|cccccccc|c}
\toprule
{} &       IRF &      UIRF &  FPRF &  TCCD &    RP &      RRP &    BSC &  MSRF &     Ours \\
\midrule
t  &  12452.60 &  18755.80 &  0.48 &  0.33 &  2.33 &  1271.32 &  121.80 &  2.72 &  3029.83 \\
FP &       141 &       268 &     5 &   404 &     3 &        0 &      28 &    14 &      214 \\
FN &         0 &         0 &   147 &     0 &     8 &        8 &      47 &   335 &        0 \\
\bottomrule
\end{tabular}
    \\[1.5em]
    
    Simulation Dataset (18M) -- Vertex-Face CCD
    \begin{tabular}{l|cccccccc|c}
\toprule
{} &     IRF &     UIRF &  FPRF &   TCCD &    RP &      RRP &    BSC &   MSRF &  Ours \\
\midrule
t  &  115.89 &  6191.98 &  7.53 &   0.24 &  0.25 &  1085.13 &  34.21 &  51.07 &  0.74 \\
FP &       2 &       18 &     0 &  95638 &     0 &        0 &  23015 &     75 &     2 \\
FN &       0 &        0 &  5184 &      0 &     0 &        0 &      0 &      0 &     0 \\
\bottomrule
\end{tabular}
    \\[1.5em]

    Simulation Dataset (41M) -- Edge-Edge CCD
\begin{tabular}{l|cccccccc|c}
\toprule
{} &     IRF &    UIRF &  FPRF &   TCCD &    RP &      RRP &    BSC &   MSRF &  Ours \\
\midrule
t  &  215.80 &  846.57 &  0.23 &   0.23 &  0.37 &  1468.70 &  12.87 &  10.39 &  0.78 \\
FP &      71 &   16781 &     0 &  82277 &     0 &        0 &   4593 &    228 &    17 \\
FN &       0 &       0 &  2317 &      0 &     7 &        7 &     27 &      1 &     0 \\
\bottomrule
\end{tabular}
    \label{tab:results}
\end{table}

We compare seven state-of-the-art methods: (1) the interval root-finder (IRF)~\cite{snyder1992interval}, (2) the univariate interval root-finder (UIRF) (a special case of the rigid-body CCD from~\cite{Redon2002fast}), (3) the floating-point time-of-impact root finder~\cite{provot1997collision} (FPRF) implemented in~\cite{vouga2010AVCM}, (4) TightCCD (TCCD)~\cite{wang2015tightccd}, (5) Root Parity (RP)~\cite{brochu2012efficient}, (6) a rational implementation of Root Parity (RRP) with the degenerate cases properly handled, and (7) Bernstein Sign Classification (BSC)~\cite{tang2014fast}. For each method we collect the average query time, the number of false positives (i.e., there is no collision but the method detects one), and the number of false negatives (i.e., there is a collision but the method misses it). To obtain the ground truth we solve the multivariate CCD formulation (equations~\eqref{eq:F-vf} and \eqref{eq:F-ee}) symbolically using Mathematica~\cite{Mathematica} which takes multiple seconds per query. Table~\ref{tab:results} summarizes the results. 
Note that ``Ours'' corresponds to our new method that will be introduced and discussed in Section~\ref{sec:method} \review{and MSRF is a minimum separation CCD discussed in Section~\ref{sec:msccd-results}.}

\paragraph{IRF}
The inclusion-based root-finding described in~\cite{snyder1992interval} can be applied to both the multivariate and univariate CCD. For the multivariate case we can simply initialize the parameters of $F$ (i.e., $t, u, v$) with the size of the domain $\Omega$, evaluate $F$ and check if the origin is contained in the output interval~\cite{Snyder1993Interval}. 
If it is, we sequentially subdivide the parameters (thus shrinking the size of the intervals of $F$) until a user-tolerance $\delta$ is reached. In our comparison we use $\delta=10^{-6}$.
The major advantage of this approach is that it is guaranteed to be conservative: it is impossible to shrink the interval of $F$ to zero. A second advantage is that a user can easily trade accuracy (number of false positives) for efficiency by simply increasing the tolerance $\delta$ \review{(Appendix~\ref{app:ib-params})}. The main drawback is that bisecting $\Omega$ in the three dimensions makes the algorithm slow, and the use of interval arithmetic further increases the computational cost and prevents the use of certain compiler optimization techniques (such as instruction reordering). We implement this approach using the numerical type provided by the Boost interval library~\cite{boost}.

\paragraph{UIRF}
\cite{snyder1992interval} can also be applied to the univariate function in Equation~\eqref{eq:univariate} by using the same subdivision technique on the single variable $t$ (as in~\cite{Redon2002fast} but for linear trajectories). The result of this step is an interval containing the earliest root in $t$ which is then plugged inside a geometric predicate to check if the primitives intersect in that interval. While finding the roots with this approach might, at a first glance, seem easier than in the multi-variate case and thus more efficient, this is not the case in our experiments. If the polynomial has infinite roots, this algorithm will have to refine the entire domain to the maximal allowed resolution, and check the validity of each interval, making it correct but very slow on degenerate cases \review{(Appendix~\ref{app:ib-params})}. This results in a longer average runtime than its multivariate counterpart. Additionally, it is impossible to control the accuracy of the other two parameters (i.e., $u, v$), thus introducing more false positives.

\paragraph{FPRF}
\citet{vouga2010AVCM} aim to solve the univariate CCD problem using only floating-point computation. To mitigate false negatives, the method uses a numerical tolerance \review{$\eta$ (Appendix~\ref{app:fprf-msccd}) shows how $\eta$ affects running time, the false positive, and negative).} The major limitations are that the number of false positives cannot be directly controlled as it depends on the relative position of the input primitives and that false negatives can appear if the parameter is not tuned accordingly to the objects velocity and scale. Additionally, the reference implementation does not handle the edge-edge CCD when the two edges are parallel. This method is one of the fastest, which makes it a very popular choice in many simulation codes.


\paragraph{TCCD}
TightCCD is a conservative floating-based implementation of~\citet{tang2014fast}. It uses the univariate formulation coupled with three inequality constraints (two for the edge-edge case) to ensure that the univariate root is a CCD root. The algorithm expresses the cubic polynomial $f$ as a product and sum of three low order polynomials in Bernstein form. With this reformulation the CCD problem becomes checking if univariate Bernstein polynomials are positive, which can be done by checking some specific points. This algorithm is extremely fast but introduces many false positives which are impossible to control. In our benchmark, this is the only non-interval method without false negatives. The major limitation of this algorithm is that it \emph{always} detects collision if the primitives are moving in the same plane, independently from their relative position.

\paragraph{RP and RRP}
These two methods use the multivariate formulation $F$ (equations~\eqref{eq:F-vf} and \eqref{eq:F-ee}). The main idea is that the \emph{parity} of the roots of $F$ can be reduced to a ray casting problem. Let $\partial\Omega$ be the boundary of $\Omega$, the algorithm shoots a ray from the origin and counts the parity of the intersection between the ray and $F(\partial\Omega)$ which corresponds to the parity of the roots of $F$. Parity is however insufficient for CCD: these algorithms cannot differentiate between zero roots (no collision) and two roots (collision), since they have the same parity. We note that this is a rare case happening only with sufficiently large time-steps and/or velocities: we found  13 (handcrafted dataset) and 7 (simulation dataset) queries where these methods report a false negative. 

We note that the algorithm described in~\cite{brochu2012efficient} (and its reference implementation) does not handle some degenerate cases leading to both false negatives and positives. For instance, in Appendix~\ref{app:britson-bug}, we show an example of a ``hourglass'' configuration where RP misses the collision, generating a false negative. To overcome this limitations and provide a fair comparison to these techniques, we implemented a na\"ive  version of this algorithm that handles all the degenerate cases using rational numbers to simplify the coding (see the additional materials). We opted for this rational implementation since properly handling the degeneracies using floating-point requires designing custom higher precision predicates for all cases. The main advantage of this method is that it is exact (when the degenerate cases are handled) as it does not contain any tolerance and thus has zero false positives. We note that the runtime of our rational implementation is extremely high and not representative of the runtime of a proper floating point implementation of this algorithm.

\paragraph{BSC}
This efficient and exact method uses the univariate formulation coupled with inequality constraints to ensure that the coplanar primitives intersects. The coplanarity problem reduces to checking if $f$ in Bernstein form has a root. \citet{tang2014fast} explain how this can be done \emph{exactly} by classifying the signs of the four coefficients of the cubic Bernstein polynomial. The classification holds only if the cubic polynomial has monotone curvature; which can be achieved by splitting the curve at the inflection point. This splitting, however, cannot be computed exactly as it requires divisions (Appendix~\ref{app:dinesh-bug}). In our comparison, we modified the reference implementation to fix a minor typo in the code and to handle $f$ with inflection points by \emph{conservatively} reporting collision. This change introduces potential false positives, and we refer to the additional material for more details and for the patch we applied to the code.


\paragraph{Discussion and Conclusions}

From our extensive benchmark of CCD algorithms, we observe that most algorithms using the univariate formulation have false negatives. While the reduction to univariate root findings provides a performance boost, filtering the roots (without introducing false positives) is a challenging problem for which a robust solution is still elusive.

Surprisingly, only the oldest method, IRF, is at the same time reasonably efficient (e.g., it does not take multiple seconds per query as Mathematica), correct (i.e., no false negatives), and returns a small number of false positives (which can be controlled by changing the tolerance $\delta$). It is however slower than other state of the art methods, which is likely the reason why it is currently not widely used. In the next section we show that it is possible to change the inclusion function used by this algorithm to keep its favorable properties, while decreasing its runtime by ${\sim}250$ times, making its performance competitive with state of the art methods.
\section{Method}\label{sec:method}

We describe the seminal bisection root-finding algorithm introduced in~\cite{snyder1992interval} (Section~\ref{sec:snyder}) and then introduce our novel Boolean inclusion function and how to evaluate it exactly and efficiently using floating point filters (Section~\ref{sec:inclusion}).

\subsection[Solve Algorithm Snyder 1992]{\textsc{Solve} Algorithm~\cite{snyder1992interval}}\label{sec:snyder}

An interval $i = [a, b]$ is defined as 
\[
i = [a,b] = \{x|a \leq x\leq b, x, a, b \in \RR\},
\]
and, similarly, an $n$-dimensional interval is defined as
\[
I= i_1 \times\dots\times i_n,
\]
where $i_k$ are intervals. We use $\LE{i}$ and $\RI{i}$ to refer to the left and right parts of an unidimensional interval $i$.
The width of an interval, written as $w(i) = w([\LE{i},\RI{i}])$, is defined by 
\[
w(i)=\LE{i}-\RI{i}
\]
and similarly, the width of an $n$-dimensional interval
\[
w(I)=\max_{k=\{1,\dots,n\}}w(i_k).
\]

An interval can be used to define an inclusion function. Formally, given an $m$-dimensional interval $D$ and a continuous function $g\colon \RR^m\to\RR^n$, 
an inclusion function for $g$, written $\incl g$, is a function such that
\[
\forall x \in D \quad g(x) \in \incl g(D).
\]
In other words, $\incl g(D)$ is a $n$-dimensional interval bounding the range of $g$ evaluated over an $m$-dimensional interval $D$ bounding its domain. We call the inclusion function $\incl g$ of a continuous function $g$ convergent if for an interval $X$
\[
 w(X) \to 0 \implies w\big(\incl g(X)\big) \to 0.
\]


\begin{figure}
    \centering
    \includegraphics[width=.33\linewidth]{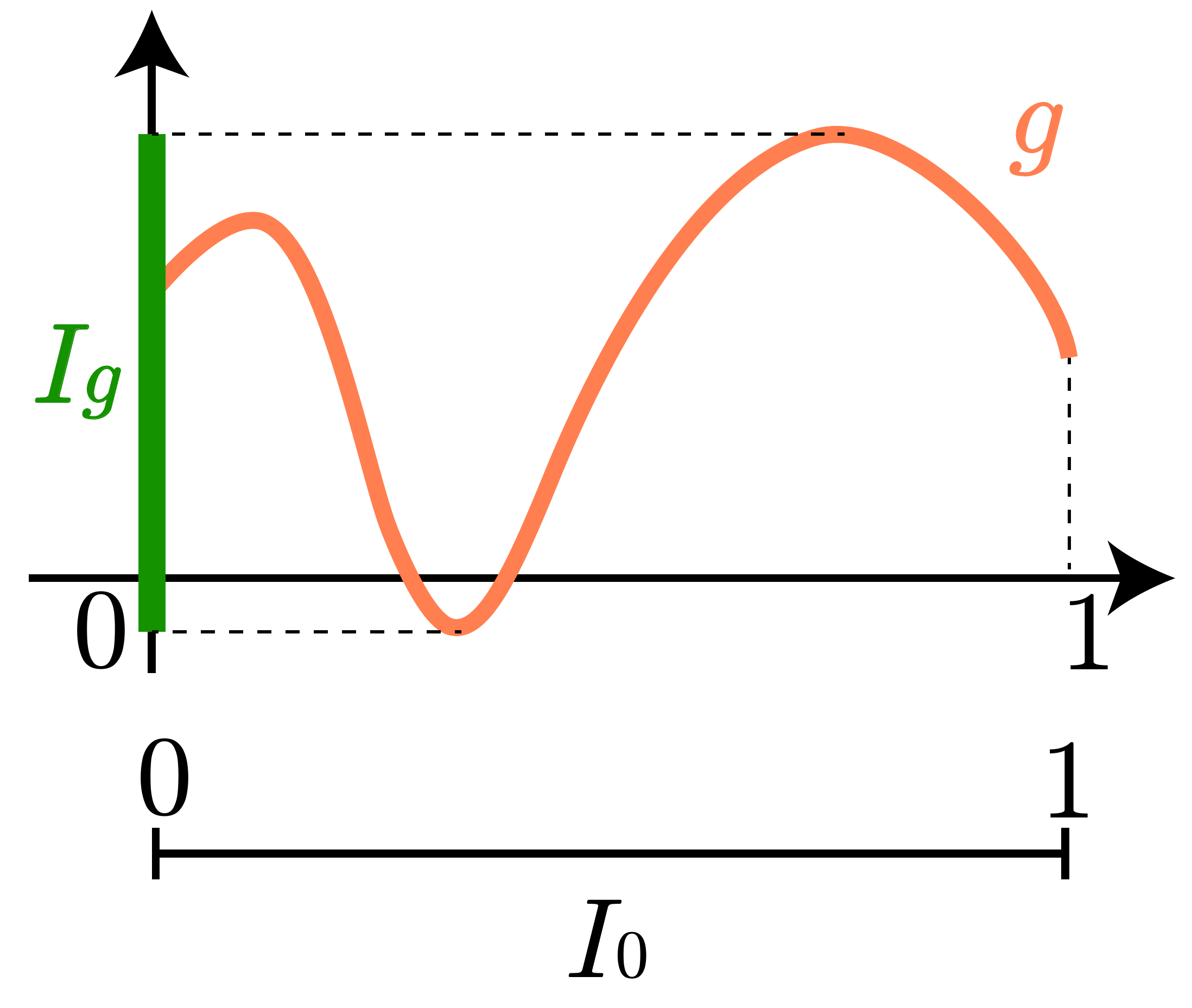}\hfill
    \includegraphics[width=.33\linewidth]{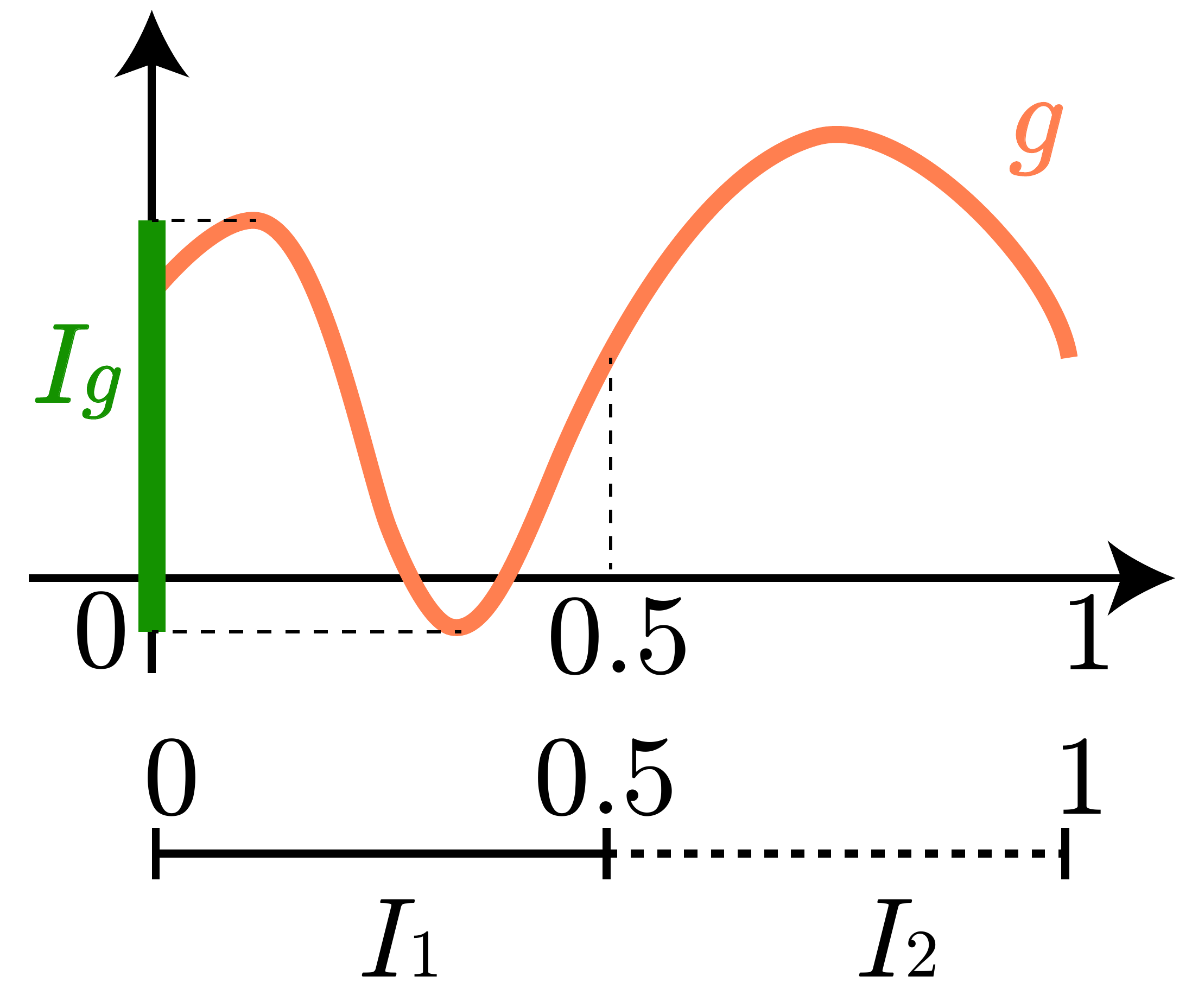}\hfill
    \includegraphics[width=.33\linewidth]{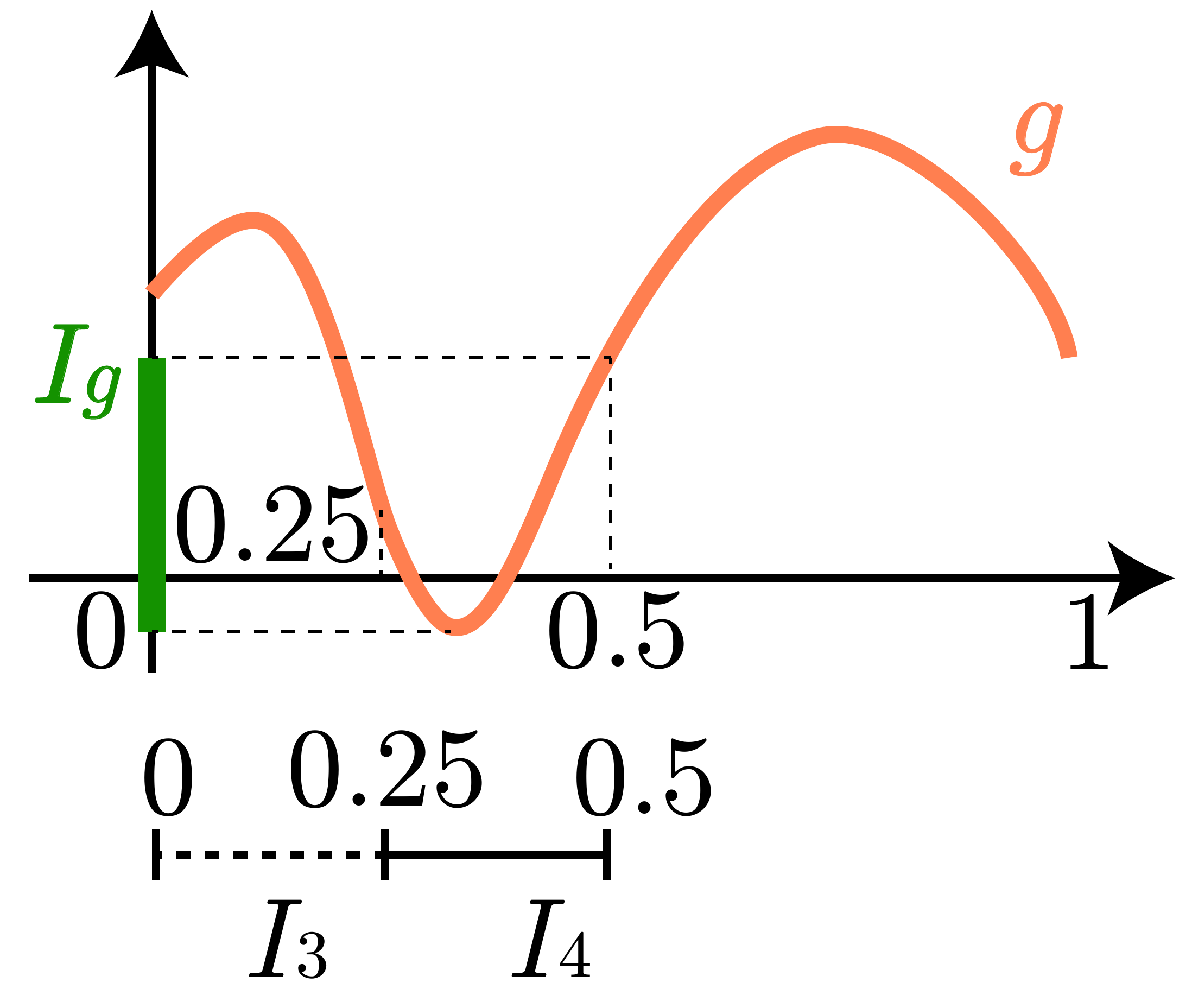}\par
    \parbox{.33\linewidth}{\centering\scriptsize{$\ell=0$}}\hfill
    \parbox{.33\linewidth}{\centering\scriptsize{$\ell=1$}}\hfill
    \parbox{.33\linewidth}{\centering\scriptsize{$\ell=2$}}\par
    \caption{\review{1D illustration of the first three levels of the inclusion based root-finder in~\cite{snyder1992interval}.}}
    \label{fig:bisection}
\end{figure}

A \emph{convergent} inclusion function can be used to find a root of a function $g$ over a domain bounded by the interval $I_0=[\LE{x_1},\RI{x_1}]\times\dots\times[\LE{x_m},\RI{x_m}]$.
To find the roots of $g$, we
sequentially bisect the initial $m$-dimensional interval $I_0$, until it becomes sufficiently small (Algorithm~\ref{alg:snyder}). \review{Figure~\ref{fig:bisection} shows a 1D example (i.e., $g\colon\RR\to\RR$) of a bisection algorithm.}
The algorithm starts by initializing a \review{stack} \revieww{$S$} of intervals to be checked with $I_0$ (line~\ref{ll:init}). At every level $\ell$ (line~\ref{ll:lopp}), the algorithm retrieves an interval $I$ from \revieww{$S$} and evaluates the inclusion function to obtain the interval $I_g$ (line~\ref{ll:inc}). Then it checks if the root is included in $I_g$ (line~\ref{alg:line:origin-in}). If not $I$ can be safely discarded since $I_g$ bounds the range of $g$ over the domain bounded by $I$. Otherwise ($0 \in I_g$), it checks if $w(I)$ is smaller than a user-defined threshold $\delta$. If so it appends $I$ to the \revieww{result} (line~\ref{ll:add-sol}). If $I$ is too large, the algorithm splits one of its dimensions (e.g., $[\LE{x_1},\RI{x_1}]$ is split in $[\LE{x_1},\tilde x_1]$ and $[\tilde x_1, \RI{x_1}]$ with $\tilde x_1 = (\LE{x_1}+\RI{x_1})/2$) and appends the two new intervals $I_1, I_2$ to the \revieww{stack $S$} (line~\ref{ll:append}).



\begin{algorithm}
\begin{algorithmic}[1]
\Function{solve}{$I_0, g, \delta$}
    \State $\revieww{\text{res}}\gets \emptyset$
    \State $\revieww{S}\gets \{I_0\}$\label{ll:init}
    \State $\ell \gets 0$
    \While{$L\neq\emptyset$}\label{ll:lopp}
        \State $I \gets$ \Call{pop}{$L$}
        \State $I_g \gets \incl g(I)$\label{ll:inc}\Comment{Compute the inclusion function}
        \If{$0 \in I_g$}\label{alg:line:origin-in}
            \If{$w(I) < \delta$}\Comment{$I$ is small enough}
                \State $\revieww{\text{res}}\gets R\cup \{I\}$\label{ll:add-sol}
            \Else
                \State $I_1, I_2 \gets \Call{split}{I}$
                \State $\revieww{S}\gets \revieww{S} \cup \{I_1, I_2\}$\label{ll:append}
            \EndIf
        \EndIf
        \State $\ell \gets \ell +1$
    \EndWhile
    
    \Return \revieww{\text{res}}
\EndFunction
\end{algorithmic}
    
    \caption{Inclusion-based root-finder}
    \label{alg:snyder}
\end{algorithm}

\paragraph{Generic Construction of Inclusion Functions}

\citet{snyder1992interval} proposes the use of interval arithmetic as a universal and automatic way to build inclusion functions for arbitrary expressions. However, interval arithmetic adds a performance overhead to the computation. For example, the product between two intervals is
\[
[a, b] \cdot[c, d] = [\min(ac, ad, bc, bd), \max(ac, ad, bc, bd)],
\]
which requires four multiplications and two min/max instead of one multiplication. In addition, the compiler cannot optimize composite expressions, since the rounding modes need to be correctly set up and the operation needs to be executed in order to avoid rounding errors~\cite{boost}. 

\subsection{Predicate-Based Bisection Root Finding}\label{sec:inclusion}

Instead of using interval arithmetic to construct the inclusion function $\incl F$ for the interval 
$
I_\Omega = I_t \times I_u \times I_v = [0, 1] \times [0, 1] \times [0, 1]
$
around the domain $\Omega$, we propose to define an inclusion function tailored for $F$ (both for Equation \eqref{eq:F-vf} and \eqref{eq:F-ee}) as the box
\begin{equation}\label{eq:box}
B_F(I_\Omega) = [m^x, M^x] \times [m^y, M^y] \times [m^z, M^z]
\end{equation}
with
\[
m^c = \min_{i=1,\dots,8}(\v_i^c),\quad
M^c = \max_{i=1,\dots,8}(\v_i^c),\quad
\review{c=\{x,y,z\}}
\]
\[
\v_i = F(t_m, u_n, v_l),\quad t_m, u_n, v_l \in \{0, 1\},\text{ and}\quad m,n,l\in\{1,2\}.
\]

\begin{proposition}\label{th:inclusion}
The inclusion function $B_F$ defined in~\eqref{eq:box} is the \emph{tightest axis-aligned inclusion function} of $F$.
\end{proposition}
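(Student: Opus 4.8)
The plan is to prove two statements: (a) $B_F$ is a genuine inclusion function of $F$ over the enclosing box $I_\Omega$, and (b) $B_F(I_\Omega)$ is contained in \emph{every} axis-aligned inclusion function of $F$ over $I_\Omega$. Together (a) and (b) say exactly that $B_F$ is the tightest axis-aligned inclusion function. Everything hinges on one structural fact that I would establish first by a direct expansion: after substituting the linear trajectories (e.g.\ $\p_i(t)=(1-t)\p_i^0+t\p_i^1$) into \eqref{eq:F-vf} and \eqref{eq:F-ee}, every surviving monomial has degree at most one in each of $t$, $u$, $v$ individually — in fact only $1,t,u,v,tu,tv$ appear — so both $F_{\text{vf}}$ and $F_{\text{ee}}$ are multilinear, i.e.\ affine in each of $t$, $u$, $v$ separately.

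For (a): given multilinearity, fix a sub-box $I=[t^l,t^r]\times[u^l,u^r]\times[v^l,v^r]\subseteq I_\Omega$ and a point $(t,u,v)\in I$, and write each coordinate as a convex combination of the endpoints of its interval, with weights $\lambda_t^0=(t^r-t)/(t^r-t^l)$, $\lambda_t^1=1-\lambda_t^0$, and analogously $\lambda_u^{\bullet},\lambda_v^{\bullet}$. Since a multilinear function agrees with its trilinear interpolant on the eight corners of $I$,
\[
F(t,u,v)=\sum_{m,n,l\in\{0,1\}}\lambda_t^m\,\lambda_u^n\,\lambda_v^l\;\v_{mnl},
\]
where $\v_{mnl}$ denotes the value of $F$ at the corner of $I$ selected by $(m,n,l)$. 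The weights $\lambda_t^m\lambda_u^n\lambda_v^l$ are nonnegative and sum to $1$, so reading this identity coordinatewise shows each $F^c(t,u,v)$ is a convex combination of $\{\v_i^c\}_{i=1}^{8}$, hence $m^c\le F^c(t,u,v)\le M^c$ and $F(t,u,v)\in B_F(I)$.

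For (b): let $B=\prod_{c}[\tilde m^c,\tilde M^c]$ be any axis-aligned box with $F(x)\in B$ for all $x\in I_\Omega$. The eight corners of $I_\Omega$ lie in $I_\Omega$, so each image $\v_i$ lies in $B$, i.e.\ $\tilde m^c\le\v_i^c\le\tilde M^c$ for every $i$ and every coordinate $c$; taking $\min_i$ and $\max_i$ gives $\tilde m^c\le m^c$ and $M^c\le\tilde M^c$, hence $B_F(I_\Omega)\subseteq B$. I expect the only real work is the expansion establishing multilinearity — the convex-combination step and the corner-containment step are then routine. One subtlety deserves a remark: in the vertex-face case the feasible set $\Omega_{\text{vf}}$ is the prism $\{u,v\ge0,\,u+v\le1\}$ rather than the full cube, so the proposition is to be read as tightness relative to the enclosing interval $I_\Omega=[0,1]^3$ that the bisection actually refines (two of the eight corners lie outside $\Omega_{\text{vf}}$), and the same argument applies verbatim to every sub-box produced during Algorithm~\ref{alg:snyder}.
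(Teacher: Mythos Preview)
Your proof is correct and rests on the same core observation as the paper's---that $F$ is affine in each of $t,u,v$ separately, so its image over a box lies in the convex hull of the eight corner values---but you arrive there by a cleaner route. The paper argues geometrically: it fixes $u$, observes that $F(\cdot,\tilde u,\cdot)$ is bilinear and hence bounded by the tetrahedron of its four control points, then sweeps those control points along linear trajectories in $u$ to obtain the convex hull of the eight vertices. You instead write $F(t,u,v)$ directly as the trilinear interpolant of the corner values with explicit nonnegative weights summing to one, which makes the convex-combination step a one-line computation rather than a two-stage geometric containment. Your tightness argument (any enclosing axis-aligned box must contain the eight corner images, hence contains $B_F$) is also more explicit than the paper's somewhat elliptical ``the convex hull is the tightest convex hull.'' Finally, your remark that in the vertex-face case two of the eight corners lie outside the prism $\Omega_{\text{vf}}$, so tightness is really a statement about the enclosing cube $I_\Omega=[0,1]^3$ that the bisection refines, is a subtlety the paper does not address; your reading is the correct one for the algorithm as written.
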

\begin{proof}
We note that for any given $\tilde{u}$ the function $F(t,\tilde{u},v)$ is bilinear; we call this function function $F_{\tilde{u}}(t,v)$. 
Thus, $F$ can be regarded as a bilinear function whose four control points move along linear trajectories $\mathcal{T}(u)_i, i=1,2,3,4$.
The range of $F_{\tilde{u}}$ is a bilinear surface which is bounded by the tetrahedron constructed by the four vertices forming the bilinear surface, which are moving on $\mathcal{T}_i$. Thus, $F$ is bounded by  every tetrahedron formed by $\mathcal{T}(u)_i$, implying that $F$ is bounded by the convex hull of the trajectories' vertices, which are the vertices $v_i, i = 1,\cdots,8$ defining $F$. Finally, since $B_F$ is the axis-aligned bounding box of the convex-hull of $v_i, i = 1,\cdots,8$, $B_F$ is an inclusion function for $F$.

Since the vertices of the convex hull belong to $F$ and the convex hull is the tightest convex hull, the bounding box $B_F$ of the convex hull is the tightest inclusion function.
\end{proof}

\begin{theorem}
The inclusion function $B_F$ defined in~\eqref{eq:box} is convergent.
\end{theorem}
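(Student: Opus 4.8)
The plan is to bound $w(B_F(X))$ directly by a constant multiple of $w(X)$, using that $F$ is a polynomial and hence Lipschitz on the compact cube $I_\Omega=[0,1]^3$. First observe that the defining formula~\eqref{eq:box} makes sense for any sub-box $X = I_t\times I_u\times I_v\subseteq I_\Omega$ produced by the bisection, not just for $I_\Omega$ itself: $B_F(X)$ is by definition the axis-aligned bounding box of the eight points $\v_1,\dots,\v_8$, where $\v_i = F(t_m,u_n,v_l)$ ranges over the images of the eight corners of $X$. Hence, straight from the definition,
\[
w\big(B_F(X)\big) = \max_{c\in\{x,y,z\}}\big(M^c - m^c\big) = \max_{c}\Big(\max_i \v_i^c - \min_i \v_i^c\Big) \le \max_{1\le i,j\le 8}\|\v_i - \v_j\|_2 .
\]

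Next I would exhibit a Lipschitz constant for $F$. Each component of $F$ (in both forms~\eqref{eq:F-vf} and~\eqref{eq:F-ee}) is a polynomial, hence $C^1$ with gradient bounded on the compact set $I_\Omega$; writing $L$ for such a bound gives $\|F(x) - F(y)\|_2 \le L\,\|x-y\|_2$ for all $x,y\in I_\Omega$. The eight corners of $X$ lie in a box whose edge lengths $w(I_t), w(I_u), w(I_v)$ are each at most $w(X)$, so any two corners $c_i,c_j$ of $X$ satisfy $\|c_i - c_j\|_2 \le \sqrt{3}\,w(X)$. Combining the two estimates,
\[
w\big(B_F(X)\big) \le \max_{i,j}\|F(c_i) - F(c_j)\|_2 \le L\sqrt{3}\,w(X),
\]
so $w(X)\to 0$ forces $w(B_F(X))\to 0$, which is exactly the definition of convergence.

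There is no genuinely hard step here; the only points needing care are bookkeeping. One should make explicit that $B_F$ is evaluated on the generic sub-boxes generated by \textsc{split}, whose defining formula is unchanged, and note that although the vertex-face domain $\Omega_{\text{vf}}$ is a triangular prism rather than a cube, all subdivisions and all evaluations of $F$ happen inside the enclosing cube $I_\Omega=[0,1]^3$, so only the polynomial $F$ and its Lipschitz bound on that cube are used. If a constant-free phrasing is preferred, one may replace the explicit $L$ by invoking uniform continuity of $F$ on the compact set $I_\Omega$; and if one wishes to tie the argument to Proposition~\ref{th:inclusion}, note that the same convex-hull reasoning gives $w(B_F(X)) \le \operatorname{diam}\{\v_1,\dots,\v_8\}$, after which the Lipschitz bound finishes the proof identically.
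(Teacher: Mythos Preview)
Your argument is correct, and in fact gives a quantitatively stronger conclusion than the paper's: you obtain a linear bound $w(B_F(X))\le L\sqrt{3}\,w(X)$, whereas the paper only proves convergence. The approaches, however, are genuinely different. The paper does not estimate $w(B_F(X))$ directly; instead it invokes Proposition~\ref{th:inclusion} (that $B_F$ is the \emph{tightest} axis-aligned inclusion function) to conclude $B_F(I)\subseteq\incl F(I)$ for the standard interval-arithmetic inclusion $\incl F$, and then inherits convergence from the known convergence of $\incl F$ established in~\cite{snyder1992interval}. Your route is more self-contained and elementary---it needs neither Proposition~\ref{th:inclusion} nor the external convergence result---while the paper's route is shorter on the page but relies on those two ingredients. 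Your care in spelling out that the defining formula of $B_F$ must be read for arbitrary sub-boxes $X$ (not just $I_\Omega$) and that the prism domain is irrelevant because all evaluations stay in the enclosing cube is a useful clarification the paper leaves implicit.
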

\begin{proof}
We first note that $F$ is trivially continuous, second that the standard interval-based inclusion function $\incl F$ constructed with intervals is axis-aligned. Therefore, from Proposition~\ref{th:inclusion}, it follows that $B_F(I)\subseteq \incl F(I)$ for any interval $I$. Finally, since $\incl F$ is convergent~\cite{snyder1992interval}, then also $B_F$ is.
\end{proof}

The inclusion function $B_F$ turns out to be ideal for constructing a predicate: to use this inclusion function in the \textsc{solve} algorithm (Algorithm \ref{alg:snyder}), we only need to check if, for a given interval $I$, $B_F(I)$ contains the origin (line~\ref{alg:line:origin-in}). Such a Boolean predicate can be conservatively evaluated using floating point filtering.

\paragraph{Conservative Predicate Evaluation}
Checking if the origin is contained in an axis-aligned box is trivial and it reduces to checking if the zero is contained in the three intervals defining the sides of the box. 
In our case, this requires us to evaluate the sign of $F$ at the eight box corners.
However, the vertices of the co-domain are computed using floating point arithmetic and can thus be inaccurate. We use forward error analysis to conservatively account for these errors as follows.

Without loss of generality, we focus only on the $x$-axis.
Let $\{v_i^x\}, {i=1,\dots, 8}$ be the set of $x$-coordinates of the 8 vertices of the box represented in double precision floating-point numbers. The error bound for $F$ (on the $x$-axis) is
\begin{equation}\label{eq:ccd-error}
\begin{array}{c}
\varepsilon_\text{ee}^x = 6.217248937900877 \times 10^{-15} \gamma_x^3 \\
\varepsilon_\text{vf}^x = 6.661338147750939 \times 10^{-15} \gamma_x^3
\end{array}
\end{equation}
with 
\begin{equation*}
\gamma_x = \max(x_\text{max}, 1) \quad\text{and}\quad
x_\text{max} = \max_{i=1,\dots,8} (|v_i^x|).
\end{equation*}
That is, the sign of $F_\text{ee}^x$ computed using floating-point arithmetic is guaranteed to be correct if $|F_\text{ee}^x| > \varepsilon_\text{ee}^x$, and similarly for the vertex face case. If this condition does not hold, we conservatively assume that the zero is contained in the interval, thus leading to a possible false positive.
The two constants $\varepsilon_\text{ee}^x$ and $\varepsilon_\text{vf}^x$ are floating point filters for $F_\text{ee}^x$ and $F_\text{vf}^x$ respectively, and were derived using~\cite{attene20}.

\paragraph{Efficient Evaluation}
The $x,y,z$ predicates defined above depend only on a subset of the coordinates of the eight corners of $B_F(I)$. We can optimally vectorize the evaluation of the eight corners using \texttt{AVX2} instructions (${\sim}4\times$ improvement in performance), since it needs to be evaluated on eight points and all the computation is standard floating-point arithmetic. Note that we used \texttt{AVX2} instructions because newer versions still have spotty support on current processors. After the eight points are evaluated in parallel, applying the floating-point filter involves only a few comparisons. To further reduce computation, we check one axis at a time and immediately return if any of the intervals do not contain the origin.


\paragraph{Algorithm}
\begin{algorithm}
\begin{algorithmic}[1]
\Function{solve}{$F, \delta, m_I$}
    \State $n \gets 0$ \Comment{Number of check counter}
    \State $Q\gets \{\{[0, 1]^3, 0\}\}$ \Comment{Push first interval and level 0 in $Q$}\label{ll:init-ours}
    \State $\ell_p \gets -1$ \Comment{Previous checked level is -1}
    
    \While{$Q\neq\emptyset$}
        \State $I, \ell \gets$ \Call{pop}{$Q$}\Comment{Retrieve level and interval}\label{ll:retreive}
        \State $B \gets B_F(I)$\Comment{Compute the box inclusion function}\label{ll:compute_bf}
        \State $n \gets n + 1$\Comment{Increase check number}
        
        \If{$B \cap \C_\varepsilon \neq \emptyset$}\label{ll:box-check}
            \If{$\ell \neq \ell_p$}\Comment{$I$ is the first colliding interval of $\ell$}
                \State $I_f\gets I_t$\label{ll:first-box}\Comment{Save $t$-component of $I$}
            \EndIf
            \If{$n \geq m_I$}\Comment{Reached max number of checks}
                \State \Return $\LE{I_f}, w(I_t)$\label{ll:stop-max-iter}\Comment{Return left side of $I_f$}
            \EndIf
            \State
            \If{$w(B) < \delta$ or $B \subseteq \C_\varepsilon$}\label{ll:termination-check}
                \If{$\ell \neq \ell_p$}\label{ll:no-return}
                    \State \Return $\LE{I_f}, w(I_t)$\Comment{Root found}
                \EndIf
            \Else
                \State $I_1, I_2 \gets \Call{split}{I}$\label{ll:split}
                \State $Q\gets Q \cup \{\{I_1, \ell+1\}, \{I_2, \ell+1\}\}$
                \State $\Call{sort}{Q, \revieww{\textproc{order}}}$\label{ll:sort}
            \EndIf
            \State $\ell_p = \ell$\Comment{Update the previous colliding level}
        \EndIf
    \EndWhile
    \State\Return $\infty, 0$\Comment{$Q$ is empty and no roots were found}
\EndFunction
\State
\Function{split}{$I = I_t \times I_u \times I_v$}
\State Compute $\kappa_t, \kappa_u, \kappa_v$ according to \eqref{eq:kappas}
\State $c_t \gets w(I_t)\kappa_t$,
$~~~c_u \gets w(I_u)\kappa_u$,
$~~~c_v \gets w(I_v)\kappa_v$
\State $c \gets \max(c_t, c_u, c_v)$\label{ll:split-algo}
\If{$c_t = c$}\Comment{$c_t$ is the largest}
\State $I_1\gets[\LE{I_t}, (\LE{I_t} + \RI{I_t})/2] \times I_u \times I_v$,  
\State $I_2\gets[(\LE{I_t} + \RI{I_t})/2, \RI{I_t}] \times I_u \times I_v$
\ElsIf{$c_u = c$}\Comment{$c_u$ is the largest}
\State $I_1\gets I_t \times [\LE{I_u}, (\LE{I_u} + \RI{I_u})/2] \times I_v$,  
\State $I_2\gets I_t \times [(\LE{I_u} + \RI{I_u})/2, \RI{I_u}] \times I_v$
\Else \Comment{$c_v$ is the largest}
\State $I_1\gets I_t \times I_u \times [\LE{I_v}, (\LE{I_v} + \RI{I_v})/2]$,  
\State $I_2\gets I_t \times I_u \times [(\LE{I_v} + \RI{I_v})/2, \RI{I_v}]$
\EndIf
\State\Return $I_1, I_2$
\EndFunction
\State
\Function{\revieww{order}}{$\{I_1, \ell_1\}, \{I_2, \ell_2\}$}
\If{$\ell_1 = \ell_2$}
    \State\Return $I_1^t < I_2^t$
\Else
    \State\Return $\ell_1 < \ell_2$
\EndIf
\EndFunction
\end{algorithmic}
\caption{Complete overview of our CCD algorithm.}
\label{alg:our}
\end{algorithm}

We describe our complete algorithm in pseudocode in Algorithm~\ref{alg:our}. The input to our algorithm are the eight points representing two primitives (either vertex-face or edge-edge), a user-controlled numerical tolerance $\delta > 0$ (if not specified otherwise, in the experiment we use the default value $\delta=10^{-6}$), and the maximum number of checks $m_I >0$ (we use the default value $m_I=10^6$). \review{These choice are based on our empirical results (figures~\ref{fig:tolerance} and~\ref{fig:max-checks}).}
The output is a conservative estimate of the earliest time of impact or infinity if the two primitives do not collide in the time intervals coupled with the reached tolerance. 

Our algorithm iteratively checks the box $B=B_F(I)$, with $I=I_t\times I_u \times I_v$ = $[t_1, t_2]\times[u_1, u_2]\times[v_1, v_2]\subset I_\Omega$ (initialized with $[0,1]^3$). To guarantee a uniform box size while allowing early termination of the algorithm, we explore the space in a breadth-first manner and record the current explored level $\ell$ (line~\ref{ll:retreive}). Since our algorithm is designed to find the earliest time of impact, we sort the visiting queue $Q$ with respect to time (line~\ref{ll:sort}).

At every iteration we check if $B$ intersects the cube $\C_\varepsilon = [-\varepsilon^x, \varepsilon^x]\times [-\varepsilon^y, \varepsilon^y]\times[-\varepsilon^z, \varepsilon^z]$ (line~\ref{ll:box-check}); if it does not, we can safely ignore $I$ since there are no collisions.

\begin{figure}
    \centering
    \parbox{.55\linewidth}{\centering\includegraphics[width=0.9\linewidth]{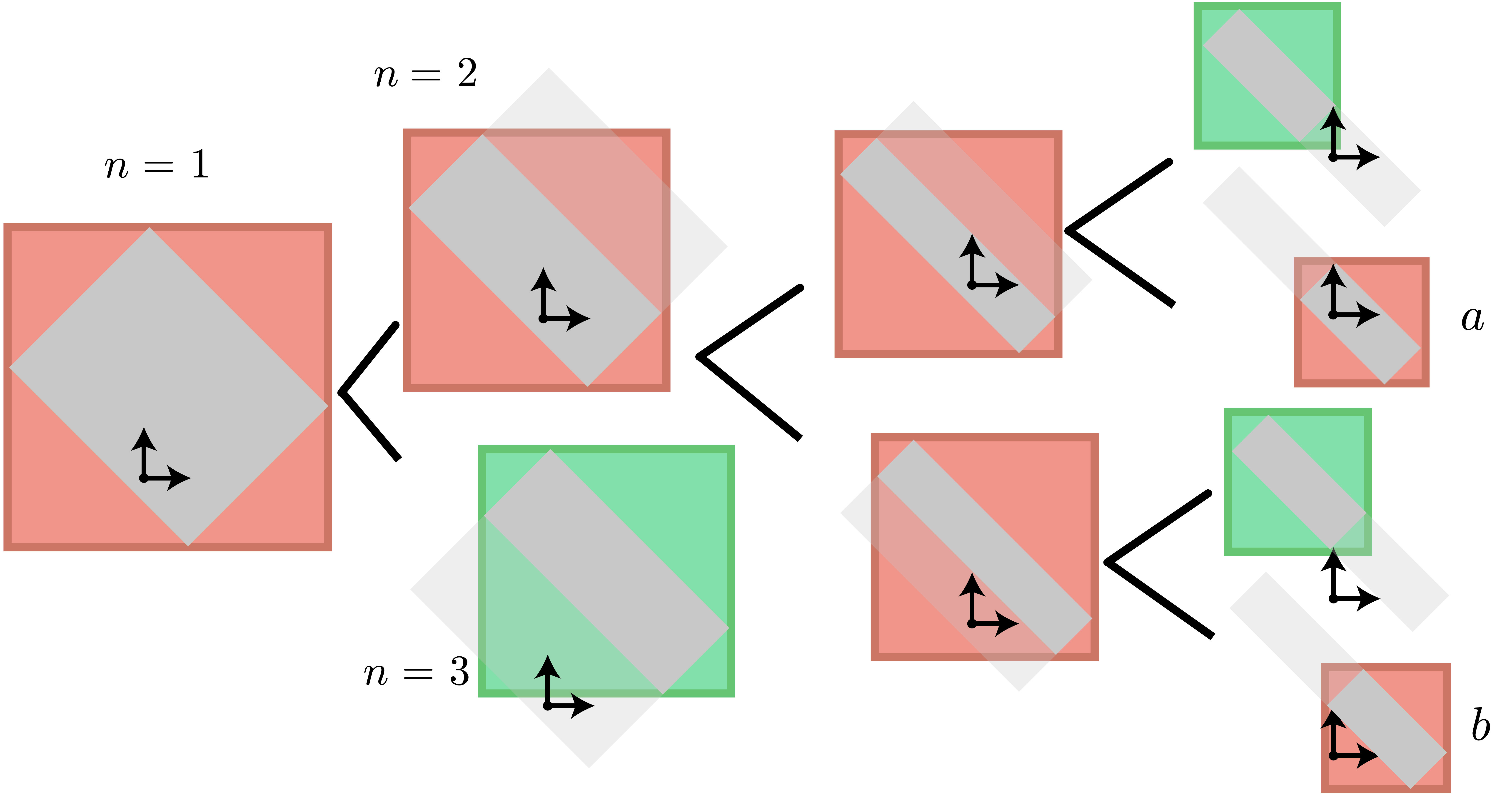}}\hfill
    \parbox{.44\linewidth}{\centering\includegraphics[width=0.9\linewidth]{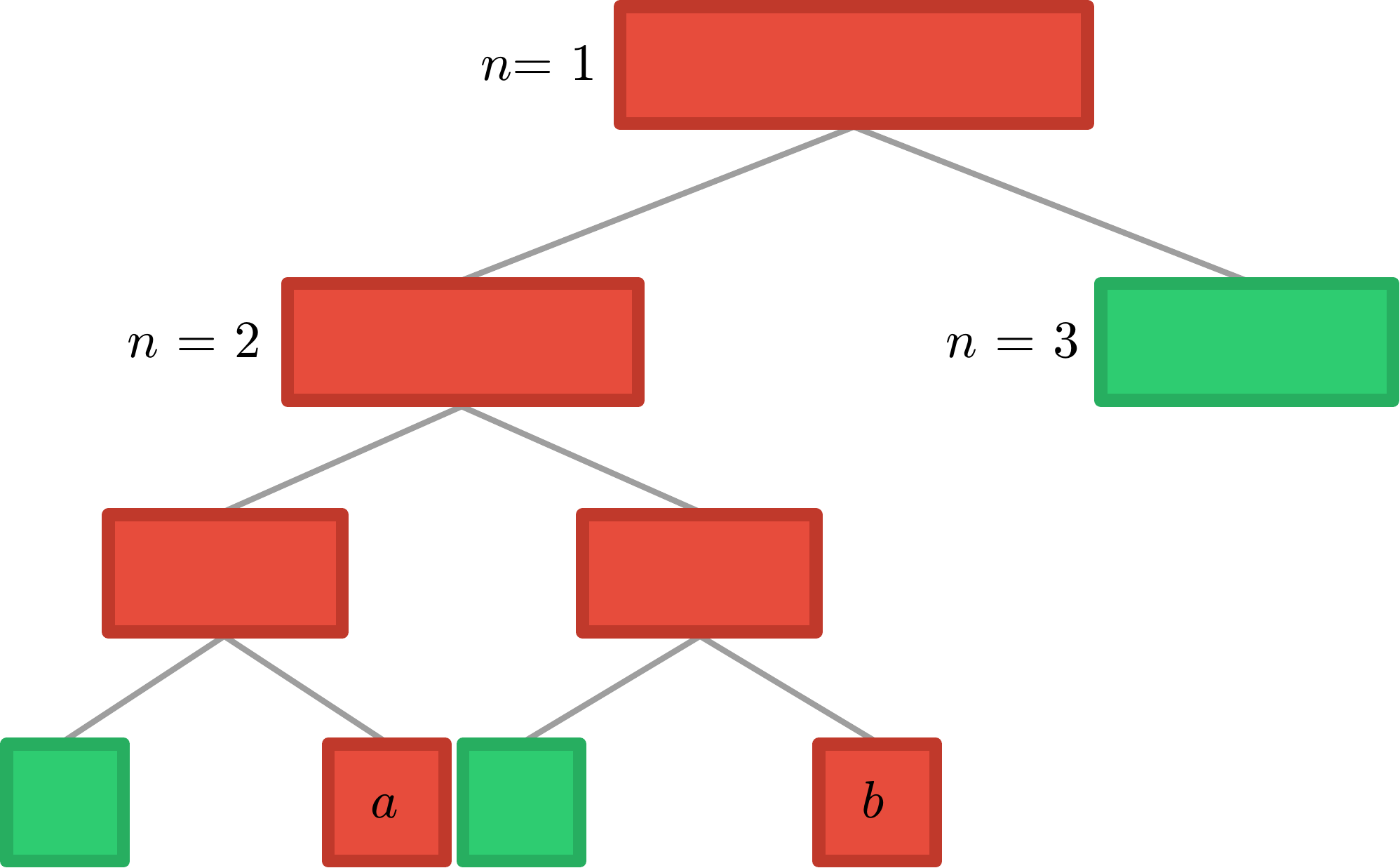}}\par
    
    \caption{\review{\revieww{A 2D example of root finding (left) and its corresponding diagram (right).} A small colliding (red) box $b$ that is not the earliest, since another box $a$ exists in the same level ($a$ did not trigger the termination of the algorithm since it is too big).}}
    \label{fig:toi-problem-left}
\end{figure}

\begin{figure}
    \parbox{.55\linewidth}{\centering\includegraphics[width=0.9\linewidth]{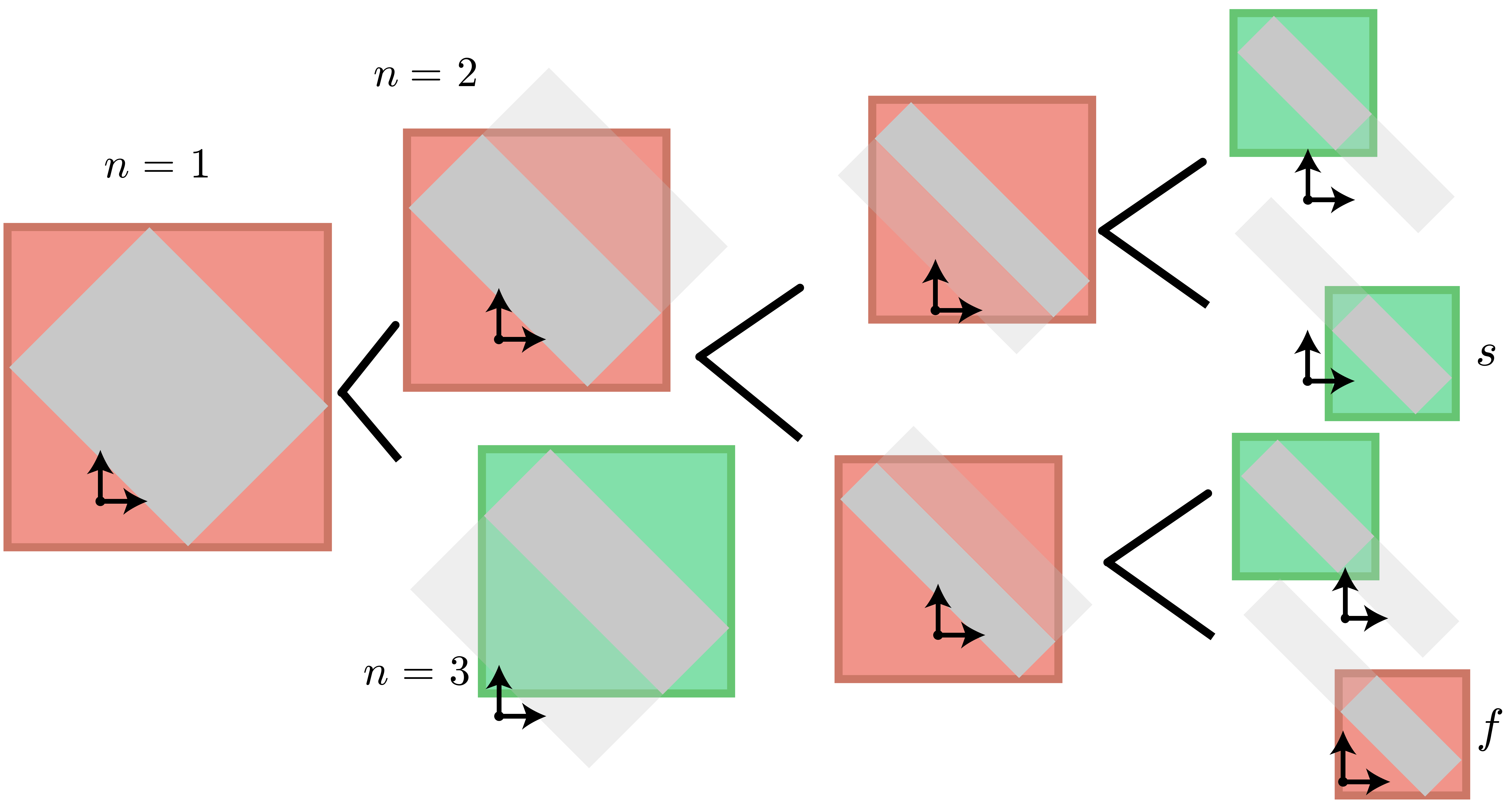}}\hfill
    \parbox{.44\linewidth}{\centering\includegraphics[width=0.9\linewidth]{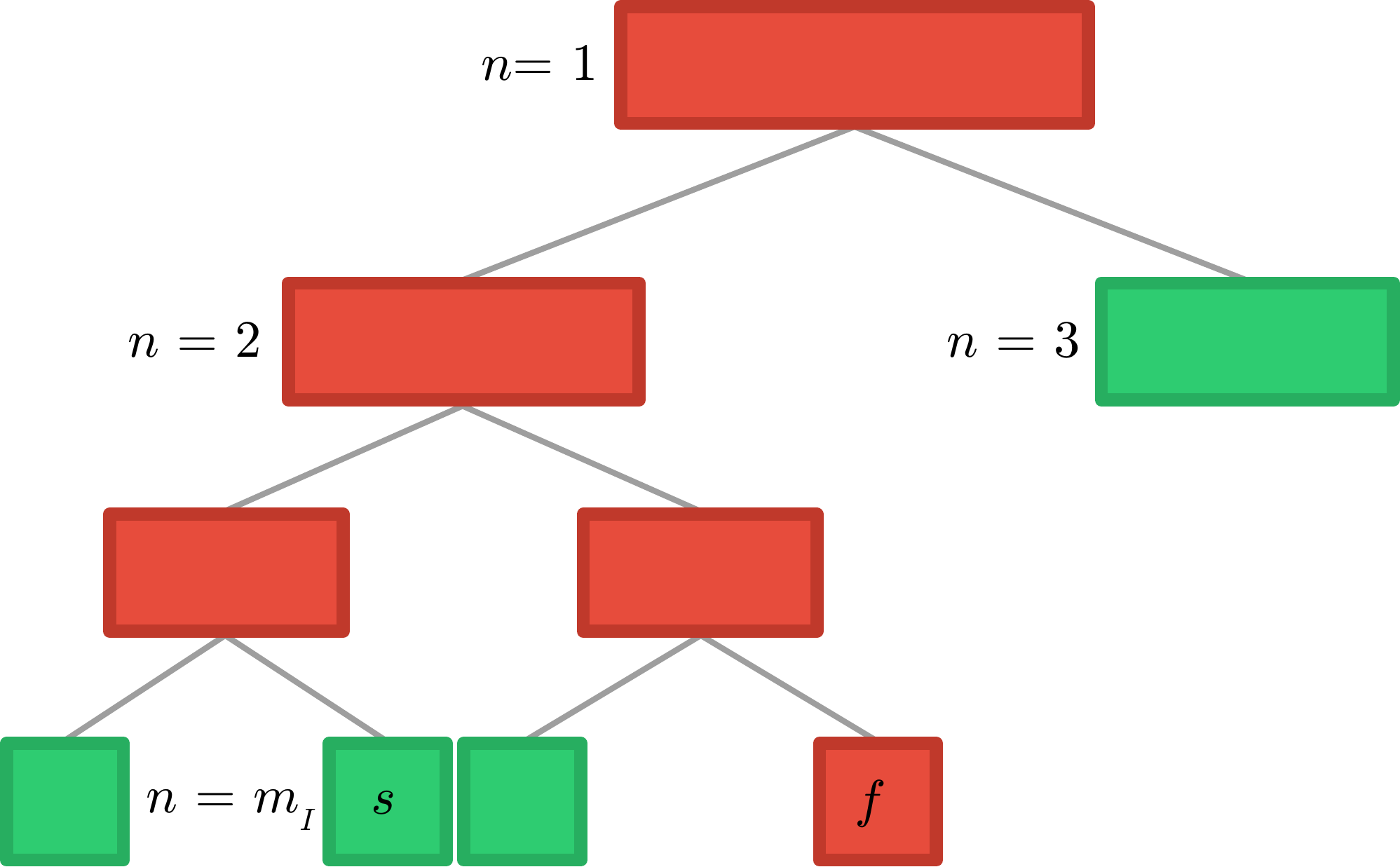}}\par
    
    \caption{\review{\revieww{A 2D example of root finding (left) and its corresponding diagram (right).}
    Our algorithm stops when the number of checks $n$ reaches $m_I$ after checking the box $s$, which is a non-colliding box (green). The algorithm will return the first colliding box ($f$) of the same level, right.}}
    \label{fig:toi-problem-right}
\end{figure}

If $B \cap \C_\varepsilon \neq \emptyset$, we first check if $w(B) < \delta$ or if $B$ is contained inside the $\varepsilon$-box (line~\ref{ll:termination-check}). In this case, it is unnecessary to refine the interval $I$ more since it is either already small enough (if $w(B) < \delta$) or any refinement will lead to collisions (if $B \subseteq \C_\varepsilon$). We return $I_t^l$ (i.e., the left hand-side of the $t$ interval of $I$) only if $I$ was the first intersecting interval of this current level (line~\ref{ll:no-return}). If $I$ is not the first intersecting in the current level, there is an intersecting box (which is larger than $\delta$) with an earlier time since the queue is sorted according to time (Figure~\ref{fig:toi-problem-left}). 

If $B$ is too big we split the interval $I$ in two sub-intervals and push them to \review{the priority queue} $Q$ (line~\ref{ll:split}). \revieww{Note that, differently from Algorithm~\ref{alg:snyder}, we use a \emph{priority queue} $Q$ instead of the stack $S$.} For the vertex-triangle CCD, the domain $\Omega$ is a prism, thus, after spitting the interval (line~\ref{ll:split}), we append $I_1, I_2$ to $Q$ only if they intersect with $\Omega$.
To ensure that $B$ shrinks uniformly \review{(since the termination criteria, Line~\ref{ll:termination-check}, is $w(B) < \delta$)}  we \emph{conservatively} estimate the width of $B$ \review{(in the codomain)} from the widths of the \review{domain's (i.e., where the algorithm is acting)} intervals $I_t, I_u, I_v$:
\begin{equation}
\label{eq:codomain}
\alpha>0, w(I_t) < \frac{\alpha}{\kappa_t}, w(I_u) < \frac{\alpha}{\kappa_u}, w(I_v) < \frac{\alpha}{\kappa_v} \implies w(B_F(I)) < \alpha
\end{equation}
with $\alpha$ a given constant and 
\begin{equation}\label{eq:kappas}
\begin{aligned}
\kappa_t &=3 \max_{i,j=1,2}\| F(0,u_i,v_j)-F(1,u_i,v_j)\|_\infty,\\
\kappa_u &=3 \max_{i,j=1,2}\| F(t_i,0,v_j)-F(t_i,1,v_j)\|_\infty,\\
\kappa_v &=3 \max_{i,j=1,2}\| F(t_i,u_j,0)-F(t_i,u_j,1)\|_\infty.
\end{aligned}
\end{equation}
\begin{proposition}
Equation~\ref{eq:codomain} holds for any positive constant $\alpha$.
\end{proposition}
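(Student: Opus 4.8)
The plan is to reduce the claim to a bound on the pairwise $L^\infty$-distances between the eight corner evaluations of $F$ that define $B_F(I)$, and then prove that bound by a telescoping (one‑coordinate‑at‑a‑time) argument that exploits the fact that $F$ is affine in each of $t,u,v$ separately.

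First I would record two preliminary facts. (i) By the definition of $B_F(I)$ in~\eqref{eq:box} and of the width of an $n$-dimensional interval, $w\big(B_F(I)\big)=\max_{c\in\{x,y,z\}}(M^c-m^c)=\max_{c}\max_{i,j}(\v_i^c-\v_j^c)=\max_{i,j}\|\v_i-\v_j\|_\infty$, where $\v_1,\dots,\v_8$ are the values of $F$ at the eight corners $(t_m,u_n,v_l)$, $m,n,l\in\{1,2\}$, of $I=I_t\times I_u\times I_v$ (with $t_1,t_2$ the endpoints of $I_t$, etc.). Hence it suffices to show $\|F(P)-F(Q)\|_\infty<\alpha$ for every pair of corners $P,Q$ of $I$. (ii) Both $F_{\mathrm{vf}}$ in~\eqref{eq:F-vf} and $F_{\mathrm{ee}}$ in~\eqref{eq:F-ee} are affine in each of $t$, $u$, $v$ separately: each $\p_i(t),\v_i(t)$ is affine in $t$, and the barycentric (resp. edge) coefficients are affine in $u$ (resp. $v$), so although there are $tu$ and $tv$ product terms, every variable enters linearly when the other two are fixed. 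In particular, for fixed $u,v$ we have $F(t,u,v)=(1-t)F(0,u,v)+tF(1,u,v)$, and similarly in $u$ and in $v$.

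Then, for corners $P=(t_P,u_P,v_P)$ and $Q=(t_Q,u_Q,v_Q)$ (each coordinate an endpoint of the corresponding interval), I would telescope
\[
F(P)-F(Q)=\big[F(t_P,u_P,v_P)-F(t_Q,u_P,v_P)\big]+\big[F(t_Q,u_P,v_P)-F(t_Q,u_Q,v_P)\big]+\big[F(t_Q,u_Q,v_P)-F(t_Q,u_Q,v_Q)\big].
\]
For the first bracket, affineness in $t$ gives $(t_P-t_Q)\big(F(1,u_P,v_P)-F(0,u_P,v_P)\big)$; since $|t_P-t_Q|\le w(I_t)$ and $u_P\in\{u_1,u_2\}$, $v_P\in\{v_1,v_2\}$, its $\infty$-norm is at most $w(I_t)\,\tfrac13\kappa_t<\tfrac{\alpha}{3}$ using $w(I_t)<\alpha/\kappa_t$ and the definition of $\kappa_t$ in~\eqref{eq:kappas}. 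For the second bracket, affineness in $u$ gives $(u_P-u_Q)\big(F(t_Q,1,v_P)-F(t_Q,0,v_P)\big)$; here the frozen coordinates $t_Q\in\{t_1,t_2\}$ and $v_P\in\{v_1,v_2\}$ are exactly those ranged over in $\kappa_u$, so this term has norm at most $w(I_u)\,\tfrac13\kappa_u<\tfrac{\alpha}{3}$. The third bracket is bounded by $w(I_v)\,\tfrac13\kappa_v<\tfrac{\alpha}{3}$ in the same way. Adding the three bounds via the triangle inequality gives $\|F(P)-F(Q)\|_\infty<\alpha$, and since this holds for each of the finitely many corner pairs, fact (i) yields $w\big(B_F(I)\big)<\alpha$, which is exactly~\eqref{eq:codomain}.

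The part that needs care—rather than real difficulty—is the bookkeeping of which coordinates are frozen at each telescoping step and matching them to the maxima in~\eqref{eq:kappas}: note that $\kappa_t$ is taken over the \emph{global} range $t\in\{0,1\}$ (which is precisely why the displacement factor $w(I_t)$ appears, through the affine reparametrization of $F$ over $[0,1]$), whereas $\kappa_u,\kappa_v$ are taken over the \emph{local} endpoints $t_i,v_j$ (resp. $t_i,u_j$), so one must pick the order of coordinate changes so that the frozen coordinates are always endpoints of the current sub-intervals (changing $t$ first, then $u$, then $v$, works). One should also dispatch the degenerate case $\kappa_\bullet=0$, where $\alpha/\kappa_\bullet$ is read as $+\infty$ and the corresponding bracket vanishes identically, and include the short verification that $F$ is componentwise affine for both primitive types. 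Everything else is the triangle inequality.
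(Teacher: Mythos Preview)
Your proof is correct and follows essentially the same approach as the paper: both reduce to bounding $\max_{i,j}\|\v_i-\v_j\|_\infty$ over the eight corner values and then decompose each corner-to-corner difference into three one-coordinate displacements whose $\infty$-norms are bounded by $w(I_t)\kappa_t/3$, $w(I_u)\kappa_u/3$, $w(I_v)\kappa_v/3$ respectively. Your telescoping is just a more explicit version of the paper's edge-vector decomposition $s_1\alpha_{l,m}+s_2\beta_{n,p}+s_3\gamma_{p,q}$, and your careful bookkeeping of frozen indices (and the $\kappa_\bullet=0$ case) is a welcome addition the paper glosses over.
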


\begin{proof}
While $B_F(I)$ is an interval, for the purpose of the proof we equivalently define it as an axis-aligned bounding box whose eight vertices are $b_i$. We will use the super-script notation to refer to the $x,y,z$ component of a 3D point (e.g., $b_i^x$ is the $x$-component of $b_i$) and define the set $\mathcal{I}=\{1,\dots,8\}$. By using the box definition the width of $B_F(I)$ can be written as
\[
w(B_F(I)) = \|b_M-b_m\|_\infty
\]
with
\[
b_M^k=\max_{i\in\mathcal{I}}(b_i^k)\quad\text{and}\quad
b_m^k=\min_{i\in\mathcal{I}}(b_i^k).
\]
Since $B_F(I)$ is the tightest axis-aligned inclusion function (Proposition~\ref{th:inclusion}) 
\[
b_M^k \leq \max_{i\in\mathcal{I}}{v_i^k}, \quad
b_m^k \leq \min_{i\in\mathcal{I}}{v_i^k},
\]
where 
$
v_i = F(I_t^j, I_u^k, I_v^l),
$
with $j,k,l\in\{l,r\}$, thus for any coordinate $k$ we bound
\[
b_M^k-b_m^k=
\max_{i,j\in\mathcal{I}}(v_i^k-v_j^k)\leq
\max_{i,j\in\mathcal{I}}\|v_i-v_j\|_\infty.
\]

For any pair of $v_i$ and $v_j$ we have 
\[
v_i-v_j = s_1\alpha_{l,m}+s_2\beta_{n,p}+s_3\gamma_{p,q},
\]
for some indices $l,m,n,o,p,q\in\{1,2\}$ and constant $s_1,s_2,s_3 \in\{-1,0,1\}$ with
\[
\alpha_{i,j}=w(I_t)\big(F(0,u_i,v_j)-F(1,u_i,v_j)\big), 
\]
\[
\beta_{i,j}=w(I_u)\big(F(t_i,0,v_j)-F(t_i,1,v_j)\big),
\]
\[
\gamma_{i,j}=w(I_v)\big(F(t_i,u_j,0)-F(t_i,u_j,1)\big),
\]
since $F$ is linear on the edges.
We note that $\alpha_{i,j}, \beta_{i,j}$, and $\gamma_{i,j}$ are the 12 edges of the box $B_F$. We now define
\[
e_t^k = \max_{i,j\in\{1,2\}} |\alpha^k_{i,j}|,\quad
e_u^k = \max_{i,j\in\{1,2\}} |\beta^k_{i,j}|,\quad
e_v^k = \max_{i,j\in\{1,2\}} |\gamma^k_{i,j}|
\]
which allows us to bound
\[
\max_{i,j\in\mathcal{I}}\|v_i-v_j\|_\infty\leq
\|e_t+e_u+e_v\|_\infty\leq
\|e_t\|_\infty+\|e_u\|_\infty+\|e_v\|_\infty.
\]
Since 
\[
\|e_t\|_\infty \leq  w(I_t)\max_{i,j=1,2}\| F(t_1,u_i,v_j)-F(t_2,u_i,v_j)\|_\infty=w(I_t)\kappa_t/3,
\]
and similarly $\|e_u\|_\infty \leq \kappa_u/3,\|e_v\|_\infty \leq \kappa_v/3$, we have
\[
\|e_t\|_\infty+\|e_u\|_\infty+\|e_v\|_\infty\leq
\frac{w(I_t)\kappa_t+w(I_u)\kappa_u+w(I_v)\kappa_v}3
\]
Finally, from the assumption~\eqref{eq:codomain}
it follows that
\[
w(B_F(I)) 
\leq
\max_{i,j\in\mathcal{I}}\|v_i-v_j\|_\infty \leq
\|e_t\|_\infty+\|e_u\|_\infty+\|e_v\|_\infty<
\alpha.
\]

\end{proof}
Using the estimate of the width of $I_t, I_u, I_v$ we split the dimension that leads to the largest estimated dimension in the range of $F$ (line~\ref{ll:split-algo}).

\paragraph{Fixed Runtime or Fixed Accuracy.} To ensure a bounded runtime, which is very useful in many simulation applications, we stop the algorithm after an user-controlled number of checks $m_I$. To ensure that our algorithm always returns a \emph{conservative} time of impact we record the first colliding interval $I_f$ of every level (line~\ref{ll:first-box}). When the maximum number of check is reached we can safely return the latest recorded interval $I_f$ (line~\ref{ll:stop-max-iter}) (Figure~\ref{fig:toi-problem-right}). We note that our algorithm will not respect the user specified accuracy when it terminates early: if a constant accuracy is required by applications, this additional termination criteria could be disabled, obtaining an algorithm with guaranteed accuracy but sacrificing the bound on the maximal running time. \review{Note that without the termination criteria $m_I$, it is possible (while rare in our experiments) that the algorithm will take a long time to terminate, or run out of memory due to storing the potentially large list of candidate intervals $L$.}




\subsection{Results}\label{sec:our-results}
Our algorithm is implemented in C++ and uses Eigen~\cite{eigen} for the linear algebra routines (with the \texttt{-avx2} \texttt{g++} flag). We run our experiments on a 2.35 GHz AMD EPYC\texttrademark~7452. We attach the reference implementation and the data used for our experiments, which will be released publicly.

The running time of our method is comparable to the floating-point methods, while being provably correct, for any choice of parameters. For this comparison we use a default tolerance $\delta=10^{-6}$ and default number of iterations $m_I=10^6$. All queries in the simulation dataset terminate within $10^6$ checks, while for the \handcrafted dataset only {$0.25\%$} and {$0.55\%$} of the vertex-face and edge-edge queries required more than $10^6$ checks, reaching an actual maximal tolerance $\delta$ of $2.14\times10^{-5}$ and $6.41\times10^{-5}$ for vertex-face and edge-edge respectively. We note that, despite the percentages begin small, by removing $m_I$ the \handcrafted queries take {0.015774} and {0.042477} \emph{seconds} on average for vertex-face and edge-edge respectively. This is due to the large number of degenerate queries, as can be seen from the long tail in the histogram of the run-times (Figure~\ref{fig:time-historams}). We did not observe any noticeable change of running time for the simulation dataset. 

\begin{figure}
    \centering
    \parbox{0.02\linewidth}{~}\hfill
    \parbox{.48\linewidth}{\centering Vertex-Face CCD}\hfill
    \parbox{.48\linewidth}{\centering Edge-Edge CCD}\par
    \parbox{0.02\linewidth}{\centering\rotatebox{90}{\scriptsize{\# Queries}}}\hfill
    \parbox{.48\linewidth}{\centering\includegraphics[width=0.9\linewidth]{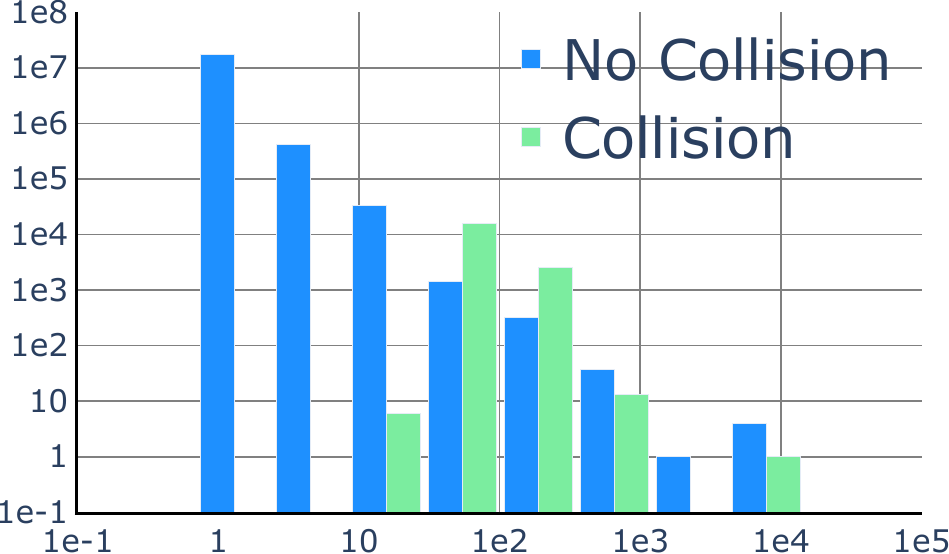}}\hfill
    \parbox{.48\linewidth}{\centering\includegraphics[width=0.9\linewidth]{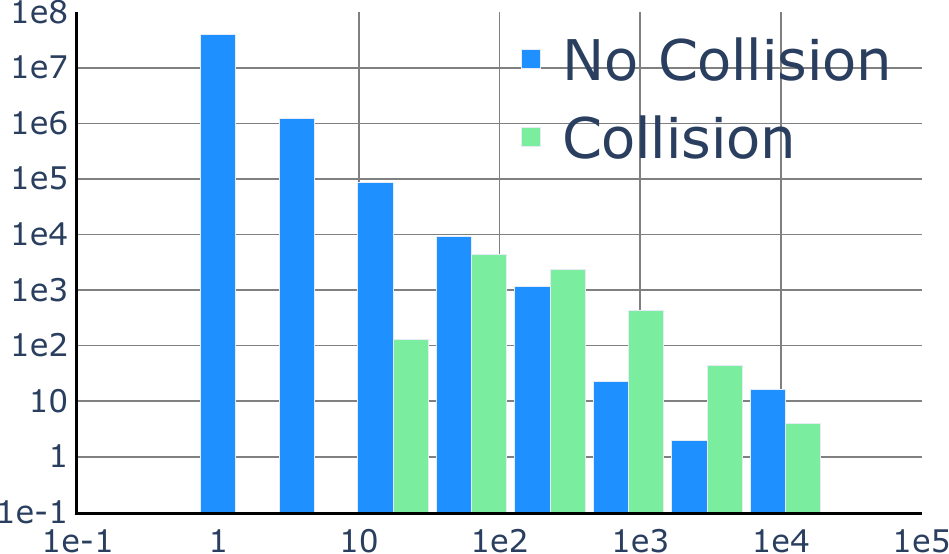}}\par
    \parbox{0.02\linewidth}{~}\hfill
    \parbox{.48\linewidth}{\centering\scriptsize{Running Time (${\mu}s$)}}\hfill
    \parbox{.48\linewidth}{\centering\scriptsize{Running Time (${\mu}s$)}}\par
    Simulation dataset\\[1em]
    \parbox{0.02\linewidth}{\centering\rotatebox{90}{\scriptsize{\# Queries}}}\hfill
    \parbox{.48\linewidth}{
    
    \centering\includegraphics[width=0.9\linewidth]{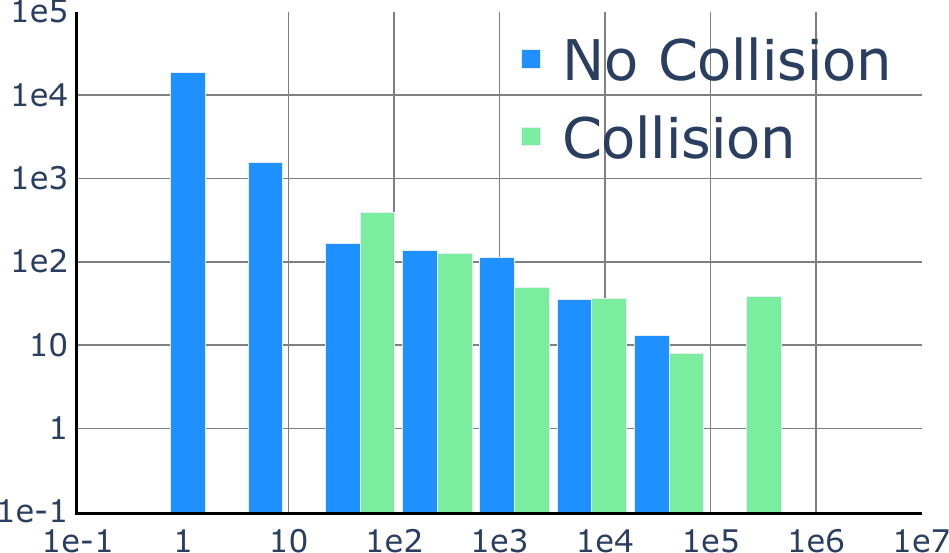}
    }\hfill
    \parbox{.48\linewidth}{
    
     \centering\includegraphics[width=0.9\linewidth]{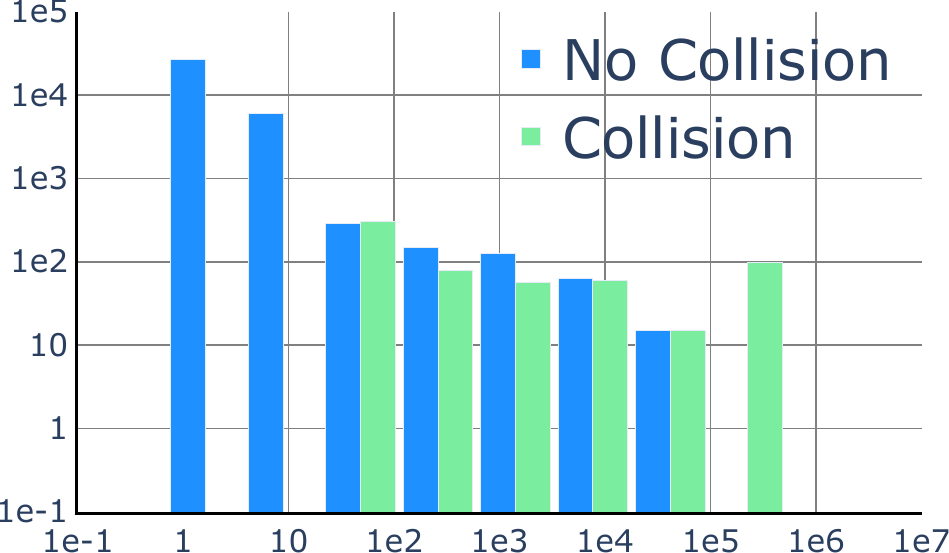}
    }\par
    \parbox{0.02\linewidth}{~}\hfill
    \parbox{.48\linewidth}{\centering\scriptsize{Running Time (${\mu}s$)}}\hfill
    \parbox{.48\linewidth}{\centering\scriptsize{Running Time (${\mu}s$)}}\par
    \Handcrafted dataset
    \caption{Log histograms of the running time of positive queries and negative queries on both dataset.}
    \label{fig:time-historams}
\end{figure}

Our algorithm has two user-controlled  parameters ($\delta$ and $m_I$) to control the accuracy and running time. The tolerance $\delta$ provides a direct control on the achieved accuracy and provides an indirect effect on the running time (Figure~\ref{fig:tolerance}). The other parameter, $m_I$, directly controls the maximal running time of each query: for small $m_I$ our algorithm will terminate earlier, resulting in a lower accuracy and thus more chances of false positives (Figure~\ref{fig:max-checks} top). We remark that, in practice, very few queries require so many subdivisions: by reducing $m_I$ to the very low value of $100$, our algorithm early-terminates only on ${\sim}0.07$\% of the 60 million queries in the simulation dataset.

\begin{figure}
    \centering
    \parbox{0.02\linewidth}{~}\hfill
    \parbox{.48\linewidth}{\centering Vertex-Face CCD}\hfill
    \parbox{.48\linewidth}{\centering Edge-Edge CCD}\par
    \parbox{0.02\linewidth}{\centering\rotatebox{90}{\scriptsize{Average Running Time ($\mu$s)}}}\hfill
    \parbox{.48\linewidth}{\centering\includegraphics[width=0.9\linewidth]{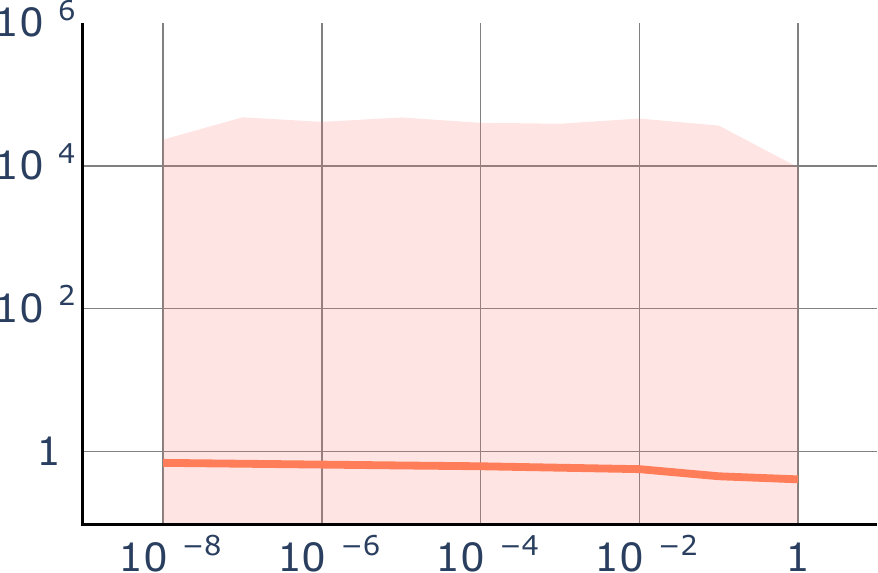}}\hfill
    \parbox{.48\linewidth}{\centering\includegraphics[width=0.9\linewidth]{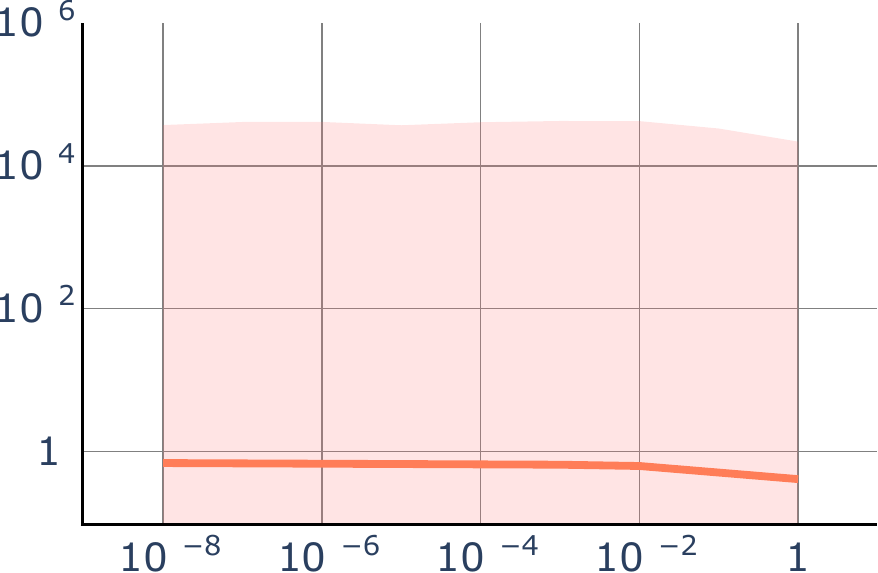}}\par
    \parbox{0.02\linewidth}{~}\hfill
    \parbox{.48\linewidth}{\centering\scriptsize{Tolerance $\delta$}}\hfill
    \parbox{.48\linewidth}{\centering\scriptsize{Tolerance $\delta$}}\\[2ex]
    \parbox{0.02\linewidth}{\centering\rotatebox{90}{\scriptsize{\# Queries}}}\hfill
    \parbox{.48\linewidth}{\centering\includegraphics[width=0.9\linewidth]{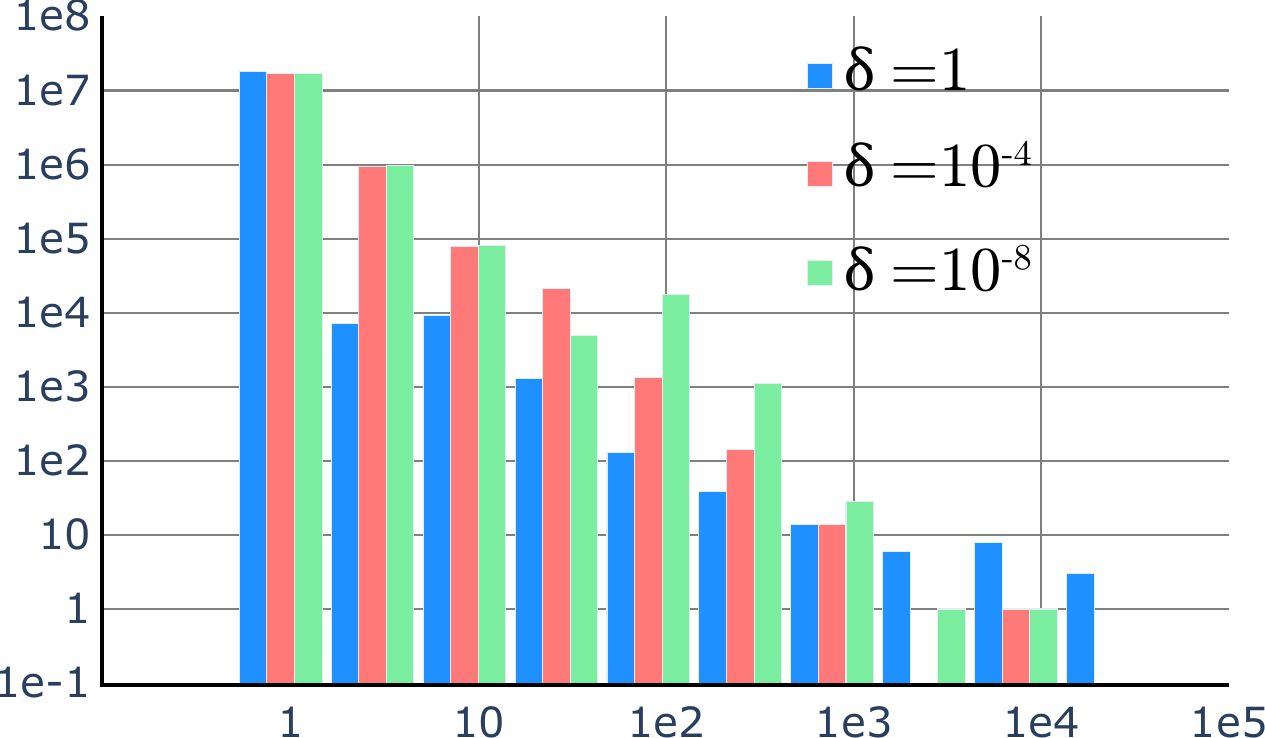}}\hfill
    \parbox{.48\linewidth}{\centering\includegraphics[width=0.9\linewidth]{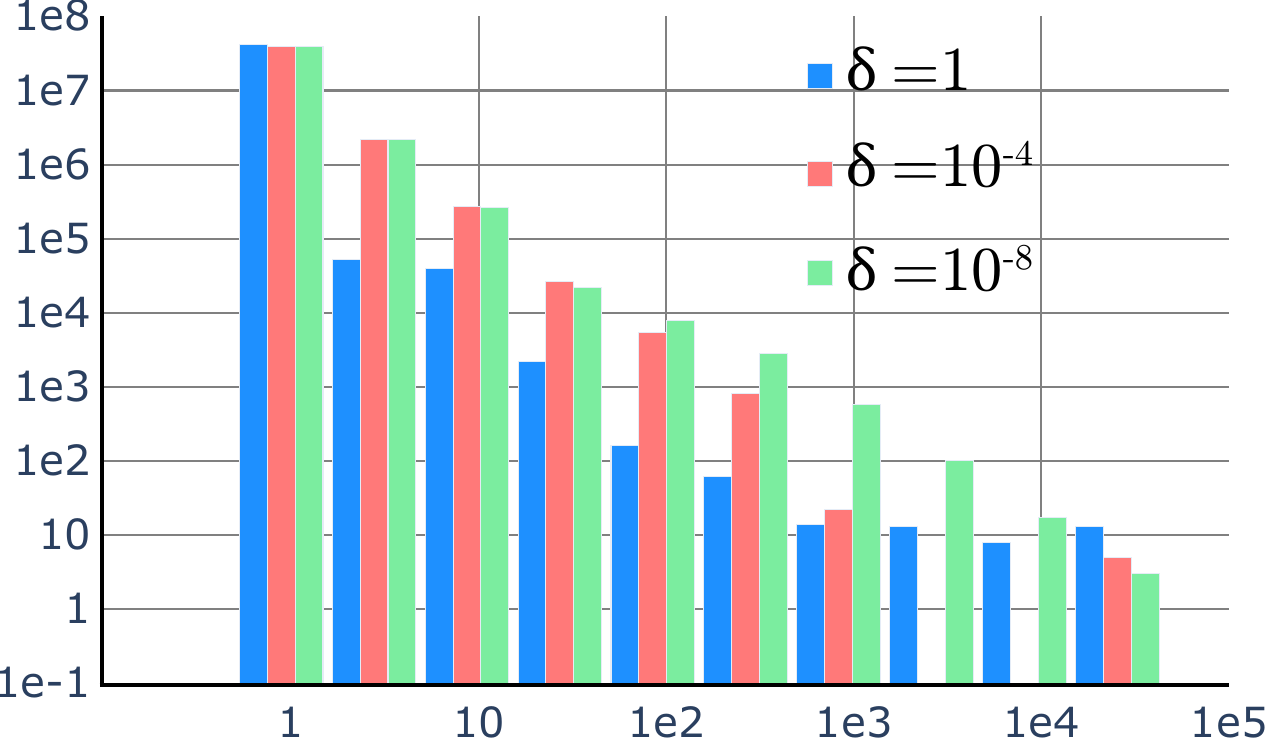}}\par
    \parbox{0.02\linewidth}{~}\hfill
    \parbox{.48\linewidth}{\centering\scriptsize{Running Time ($\mu$s)}}\hfill
    \parbox{.48\linewidth}{\centering\scriptsize{Running Time ($\mu$s)}}\par
    \caption{\review{Top, average runtime of our algorithm for different tolerances $\delta$ for the simulation dataset. The shaded area shows the range of the distribution (min and max).} \revieww{Bottom, distribution of running times of our algorithm for three different tolerances $\delta = 10^{-8}$, $10^{-4}$, and $1$ over the simulation dataset.}}
    \label{fig:tolerance}
\end{figure}

\begin{figure}
    \centering
    \parbox{0.02\linewidth}{~}\hfill
    \parbox{.48\linewidth}{\centering Vertex-Face CCD}\hfill
    \parbox{.48\linewidth}{\centering Edge-Edge CCD}\par
    \parbox{0.02\linewidth}{\centering\rotatebox{90}{\scriptsize{$\delta$ \review{max} }}}\hfill
    \parbox{.48\linewidth}{\centering\includegraphics[width=0.9\linewidth]{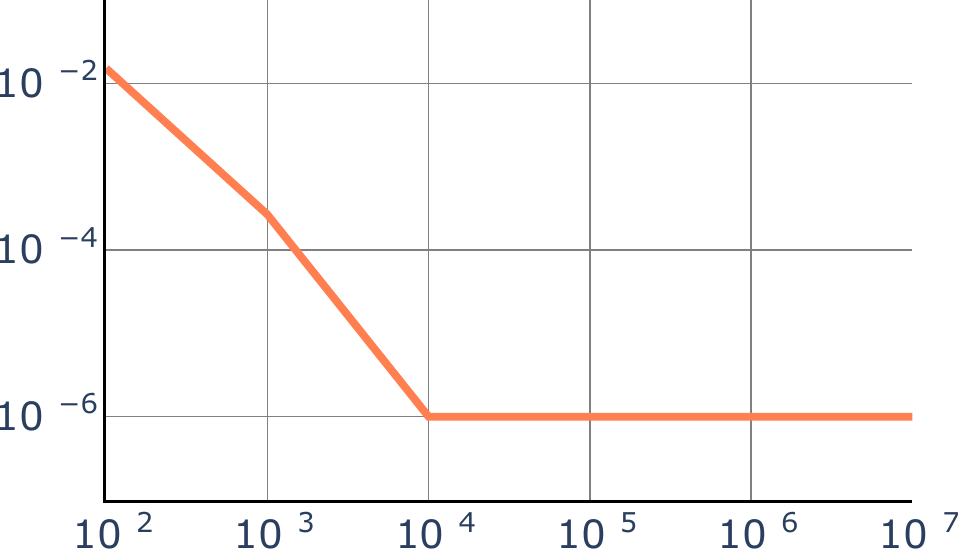}}\hfill
    \parbox{.48\linewidth}{\centering\includegraphics[width=0.9\linewidth]{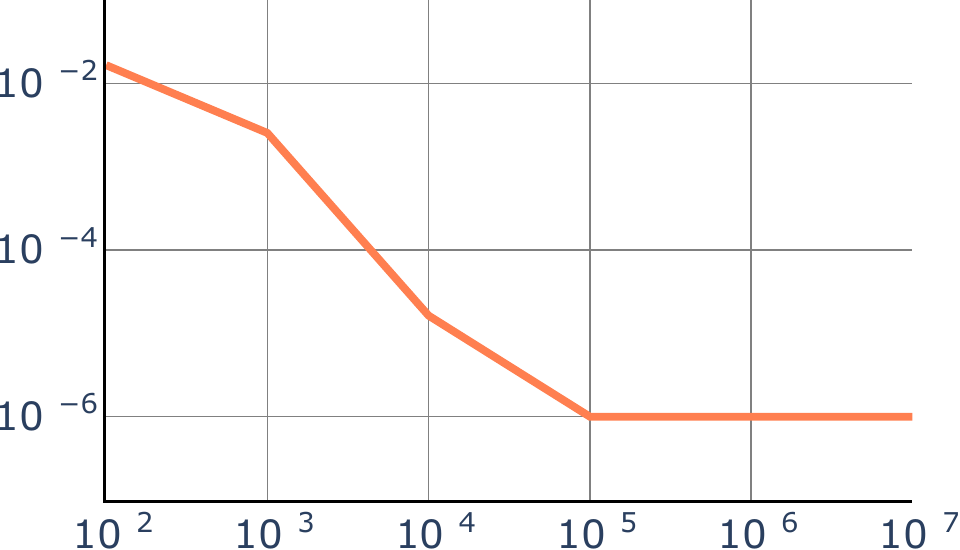}}\par
    \parbox{0.02\linewidth}{~}\hfill
    \parbox{.48\linewidth}{\centering\scriptsize{Maximum Number of Checks $m_I$}}\hfill
    \parbox{.48\linewidth}{\centering\scriptsize{Maximum Number of Checks $m_I$}}\par
    \parbox{0.02\linewidth}{\centering\rotatebox{90}{\scriptsize{Early Termination (\%)}}}\hfill
    \parbox{.48\linewidth}{\centering\includegraphics[width=0.9\linewidth]{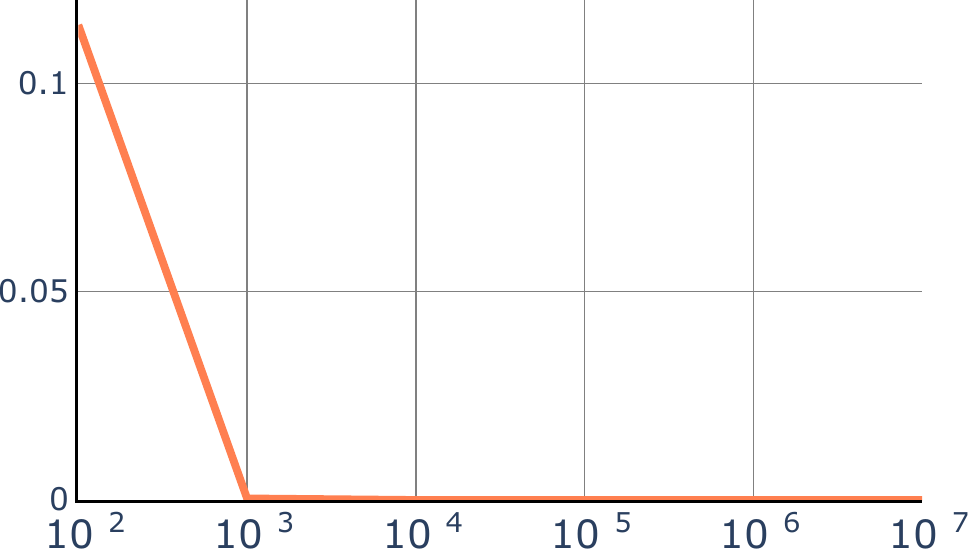}}\hfill
    \parbox{.48\linewidth}{\centering\includegraphics[width=0.9\linewidth]{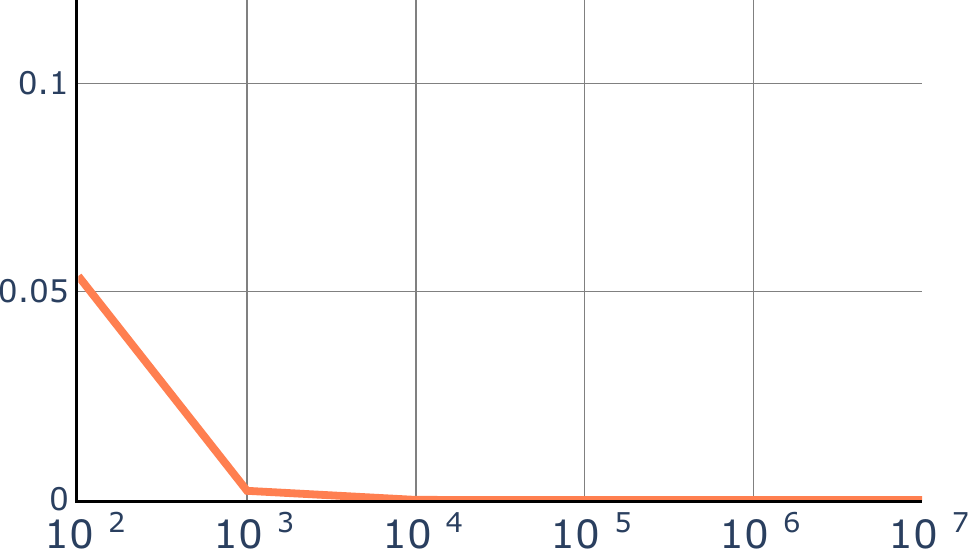}}\par
    \parbox{0.02\linewidth}{~}\hfill
    \parbox{.48\linewidth}{\centering\scriptsize{Maximum Number of Checks $m_I$}}\hfill
    \parbox{.48\linewidth}{\centering\scriptsize{Maximum Number of Checks $m_I$}}\par
    \caption{The percentage of early-termination and \review{maximum value of the} tolerance $\delta$ for different $m_I$ for the simulation dataset.}
    \label{fig:max-checks}
\end{figure}

\section{Minimum Separation CCD}\label{sec:msccd}

\begin{table*}
    \caption{{Summary of the average runtime in ${\mu}s$ (t), number of false positive (FP), and number of false negative (FN) for MSRF and our method.}}
    \centering
    \scriptsize
\setlength\tabcolsep{4pt}
\begin{tabular}{lccc|ccc|ccc|ccc|ccc|ccc|ccc|ccc}
\toprule
{} & \multicolumn{6}{|c|}{\Handcrafted{} -- Vertex-Face MSCCD} & \multicolumn{6}{c|}{\Handcrafted{} -- Edge-Edge MSCCD} & \multicolumn{6}{c|}{Simulation -- Vertex-Face MSCCD} & \multicolumn{6}{c}{Simulation -- Edge-Edge MSCCD} \\
\midrule
\multicolumn{1}{l|}{} & \multicolumn{3}{c|}{MSRF} & \multicolumn{3}{c|}{Ours} & \multicolumn{3}{c|}{MSRF} & \multicolumn{3}{c|}{Ours} & \multicolumn{3}{c|}{MSRF} & \multicolumn{3}{c|}{Ours} & \multicolumn{3}{c|}{MSRF} & \multicolumn{3}{c}{Ours} \\
\midrule
\multicolumn{1}{l|}{$d$} &    t &   FP &   FN &    t &    FP & FN &    t &   FP &   FN &     t &    FP & FN &     t &      FP &     FN &    t &       FP & FN &     t &      FP &    FN &    t &       FP & FN \\
\midrule
\multicolumn{1}{l|}{$10^{-2}$}    & 12.89 & 854 & 114 & 18.86K & 2.6K & 0 & 3.84 & 774 & 189 & 9.64K & 4.8K & 0 & 55.47 & 156.8K & 18.3K & 12.04 & 8.1M & 0 & 14.42 & 354.1K & 7.0K & 19.12 & 8.3M & 0\\
\multicolumn{1}{l|}{$10^{-8}$}   & 15.05 & 216 & 2 & 1.60K & 159 & 0 & 2.89 & 230 & 18 & 3.42K & 309 & 0 & 55.26 & 75 & 0 & 0.72 & 8 & 0 & 11.12 & 228 & 1 & 0.73 & 40 & 0\\
\multicolumn{1}{l|}{$10^{-16}$}   & 13.90 & 151 & 35 & 1.51K & 108 & 0 & 2.90 & 231 & 21 & 2.92K & 214 & 0 & 54.83 & 4 & 3.8K & 0.71 & 2 & 0 & 10.70 & 10 & 4 & 0.72 & 17 & 0\\

\multicolumn{1}{l|}{$10^{-30}$}    & 13.59 & 87 & 141 & 1.39K & 108 & 0 & 2.89 & 118 & 157 & 2.79K & 214 & 0 & 53.73 & 0 & 10.2K & 0.66 & 2 & 0 & 10.68 & 0 & 1.7K & 0.67 & 17 & 0\\
\multicolumn{1}{l|}{$10^{-100}$}   & 14.45 & 16 & 384 & 1.43K & 108 & 0 & 3.05 & 14 & 335 & 2.82K & 214 & 0 & 53.53 & 0 & 18.6K & 0.66 & 2 & 0 & 10.59 & 0 & 5.0K & 0.68 & 17 & 0\\
\bottomrule
\end{tabular}
    \label{tab:us-vs-stiv}
\end{table*}

An additional feature of some CCD algorithms is \emph{minimal separation}, that is, the option to report collision at a controlled distance from an object, which is used to ensure that objects are never too close. This is useful to avoid possible inter-penetrations introduced by numerical rounding after the collision response, or for modeling fabrication tolerances for additive or subtractive manufacturing. 
A minimum separation CCD (MSCCD) query is similar to a standard query: instead of checking if a point and a triangle (or two edges) are exactly overlapping, we want to ensure that they are always separated by a user-defined distance $d$ during the entire linear trajectory. Similarly to the standard CCD (Section~\ref{sec:preliminaries}) MSCCD can be express using a multivariate or a univariate formulation, usually measuring distances using the Euclidean distance. We focus on the multivariate formulation since it does not require to filter spurious roots, we refer to Section \ref{sec:comparison} for a more detailed justification of this choice.

\paragraph{Multivariate Formulation.} We observed that using the Euclidean distance leads to a challenging problem, which can be geometrically visualized as follows: the primitives will not be closer than $d$ if $F(\Omega)$ does not intersect a sphere of radius $d$ centered on the origin. This is a hard problem, since it requires checking conservatively the intersection between a sphere (which is a rational polynomial when explicitly parametrized) and $F(\Omega)$.

Studying the applications currently using minimal separation, we realized that they are not affected by a different choice of the distance function. Therefore, we propose to change the distance definition from Euclidean to Chebyshev distance (i.e., from the $L^2$ to the $L^\infty$ distance). With this minor change the problem dramatically simplifies: instead of solving for $F = 0$ (Section \ref{sec:method}), we need to solve for $|F| \leq d$. The corresponding geometric problem becomes checking if $F(\Omega)$ intersects a cube of side $2d$ centered on the origin.

\paragraph{Univariate Formulation.} The univariate formulation is more complex since it requires to redefine the notion of co-planarity for minimum separation. We remark that the function $f$ in~\eqref{eq:univariate} measures the length of the projection of $q(t)$ along the normal, thus to find point at distance $d$ the equation becomes $f(t) \leq \langle n(t), q(t)\rangle = d\|n(t)\|$. To keep the equation polynomial, remove the inequality, and avoid square roots, the univariate MSCCD root finder becomes
\[
\langle n(t), q(t)\rangle^2 - d^2 \|n(t)\|^2.
\]
We note that this polynomial becomes sextic, and not cubic as in the zero-distance version. To account for replacing the inequality with an equality, we also need to check for distance between $q$ and the edges and vertices of the triangle~\cite{harmon2011IAGM}. In addition to finding the roots of several high-order polynomials, this formulation, similarly to the standard CCD, suffers from infinite roots when the two primitives are moving on a plane at distance $d$ from each other.

\subsection{Method}\label{sec:msccd-method}

The input to our MSCCD algorithm are the same as the standard CCD (eight coordinates, $\delta$, and $m_I$) and the minimum separation distance $d \geq 0$. Our algorithm returns the earliest time of impact indicating if two primitives become closer than $d$ as measured by the $L^\infty$ norm.

\begin{figure}
    \centering
    \includegraphics[width=.33\linewidth]{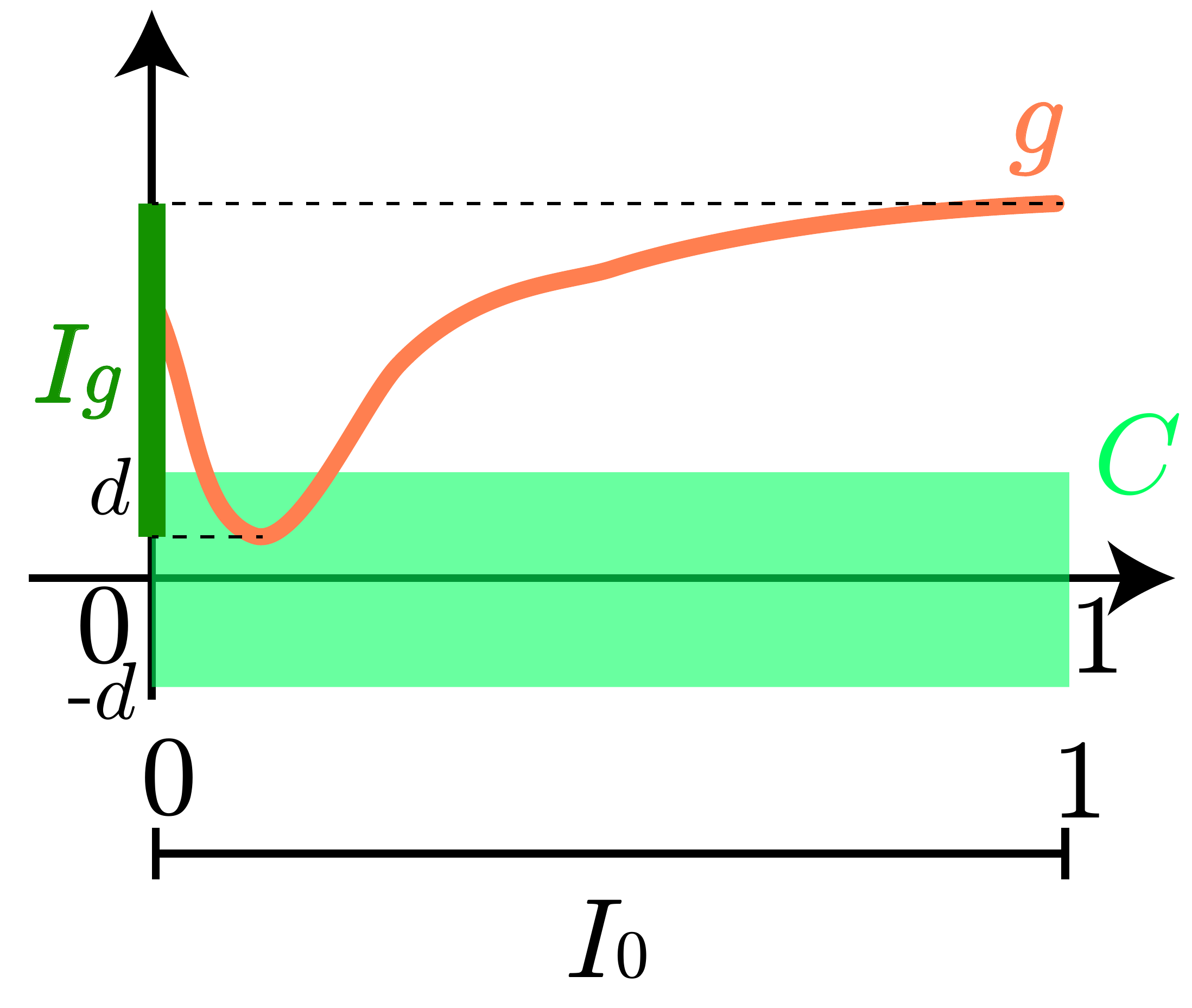}\hfill
    \includegraphics[width=.33\linewidth]{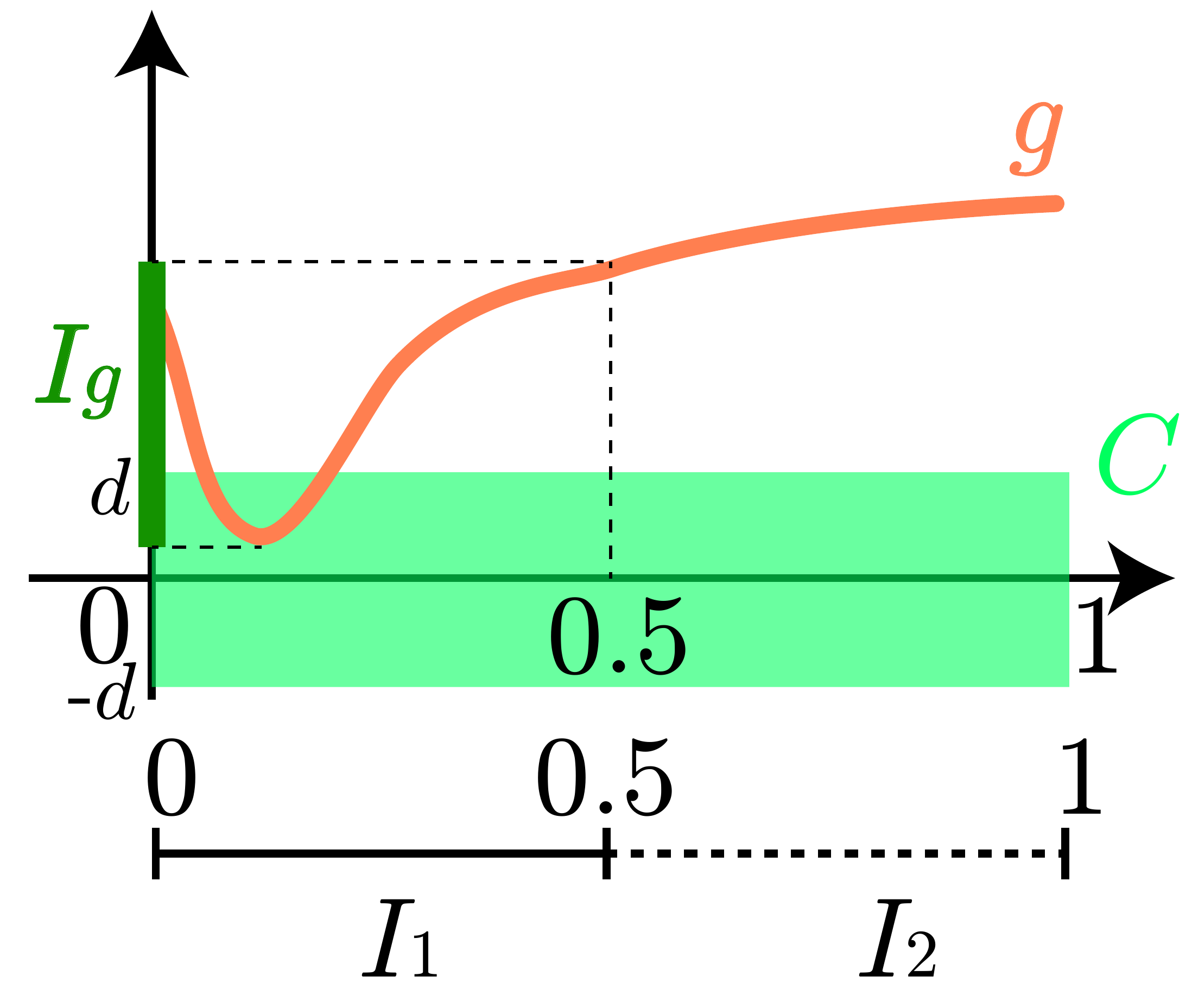}\hfill
    \includegraphics[width=.33\linewidth]{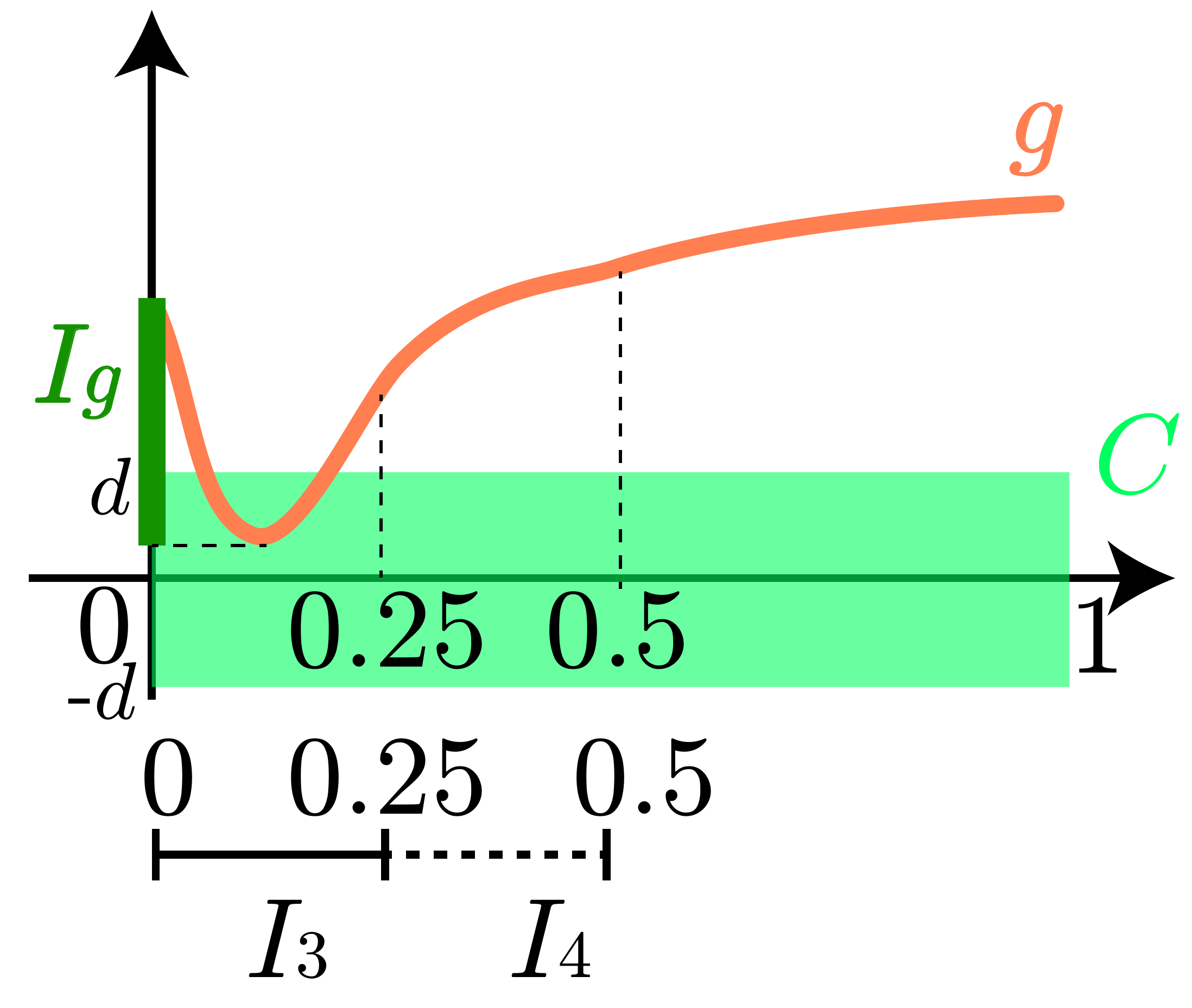}\par
    \parbox{.33\linewidth}{\centering\scriptsize{$\ell=0$}}\hfill
    \parbox{.33\linewidth}{\centering\scriptsize{$\ell=1$}}\hfill
    \parbox{.33\linewidth}{\centering\scriptsize{$\ell=2$}}\par
    \caption{\review{1D illustration of the first three levels of our MSCCD inclusion based root-finder. Instead of checking if $I_g$ intersects with the origin, we check if it intersects the interval $[-d, d]$ marked in light green.}}
    \label{fig:msccd-illustration}
\end{figure}

We wish to check whether the box $B_F(\Omega)$ intersects a cube of side $2d$ centered on the origin (Figure~\ref{fig:msccd-illustration}). Equivalently, we can construct another box $B_F'(\Omega)$ by displacing the six faces of $B_F(\Omega)$ outward at a distance $d$, and then check whether this enlarged box contains the origin.
This check can be done as for the standard CCD (Section~\ref{sec:method}), but the floating point filters must be recalculated to account for the additional sum (indeed, we add/subtract $d$ to/from all the coordinates). Hence, the filters for $F'$ are:
\begin{equation}\label{eq:msccd-error}
\begin{array}{c}
\epsilon_\text{ee}^x = 7.105427357601002 \times 10^{-15} \gamma_x^3 \\
\epsilon_\text{vf}^x = 7.549516567451064 \times 10^{-15} \gamma_x^3
\end{array}
\end{equation}
As before, the filters are calculated as described in~\cite{attene20} and they additionally assume that $d < \gamma_x$.

To account for minimum separations, the only change in our algorithm is at line~\ref{ll:compute_bf} where we need to enlarge $B$ by $d$ and in lines~\ref{ll:box-check} and~\ref{ll:termination-check} since $\C_\varepsilon$ needs to be replaced with $\C_\epsilon = [-\epsilon^x, \epsilon^x]\times [-\epsilon^y, \epsilon^y]\times[-\epsilon^z, \epsilon^z]$.


\subsection{Results}\label{sec:msccd-results}

To the best of our knowledge, the minimum separation floating-point time-of-impact root finder~\cite{harmon2011IAGM} (MSRF) implemented in~\cite{Libin:2019}, is the only public code supporting minimal separation queries. \review{While not explicitly constructed for MSCCD, FPRF uses a distance tolerance to limit false negatives, similarly to an explicit minimum separation. We compare the results and performance in Appendix~\ref{app:fprf-msccd}.}
\paragraph{\review{MSRF}} uses the univariate formulation, which requires to find the roots of a high-order polynomial, and it is thus unstable when implemented using floating-point arithmetic.

Table~\ref{tab:us-vs-stiv} reports timings, false positive, and false negatives for different separation distances $d$. 
As $d$ shrinks (around $10^{-16}$) the results of our method with MSCDD coincide with the ones with $d=0$ since the separation is small. For these small tolerances, MSRF runs into numerical problems and the number of false negatives increases. Figure~\ref{fig:eps-vs-time} shows the average query time versus the separation distance $d$ for the simulation dataset, since our method only requires to check the intersection between boxes, the running time largely depends on the number of detected collision, and the average is only mildly affected by the choice of $d$.

\begin{figure}
    \centering
    \parbox{0.02\linewidth}{~}\hfill
    \parbox{.48\linewidth}{\centering Vertex-Face MSCCD}\hfill
    \parbox{.48\linewidth}{\centering Edge-Edge MSCCD}\par
    \parbox{0.02\linewidth}{\centering\rotatebox{90}{\scriptsize{Average Running Time ($\mu$s)}}}\hfill
    \parbox{.48\linewidth}{\centering\includegraphics[width=0.9\linewidth]{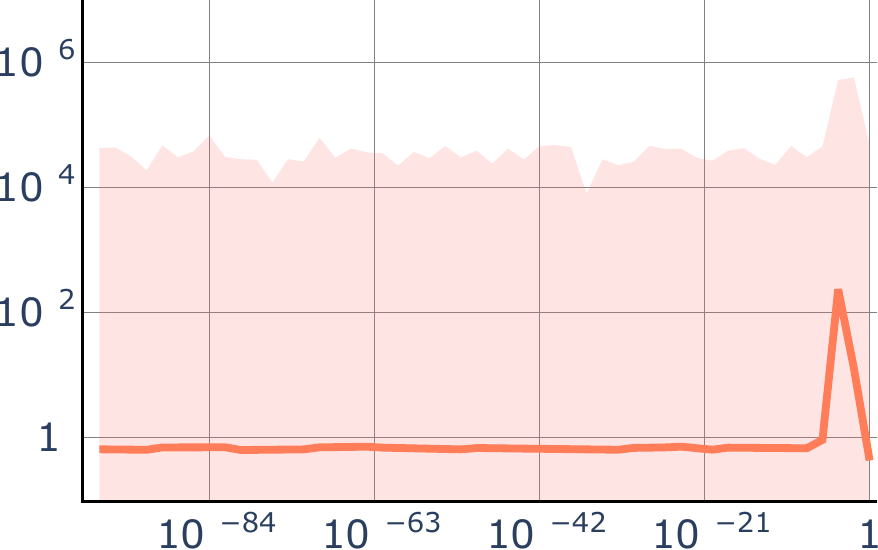}}\hfill
    \parbox{.48\linewidth}{\centering\includegraphics[width=0.9\linewidth]{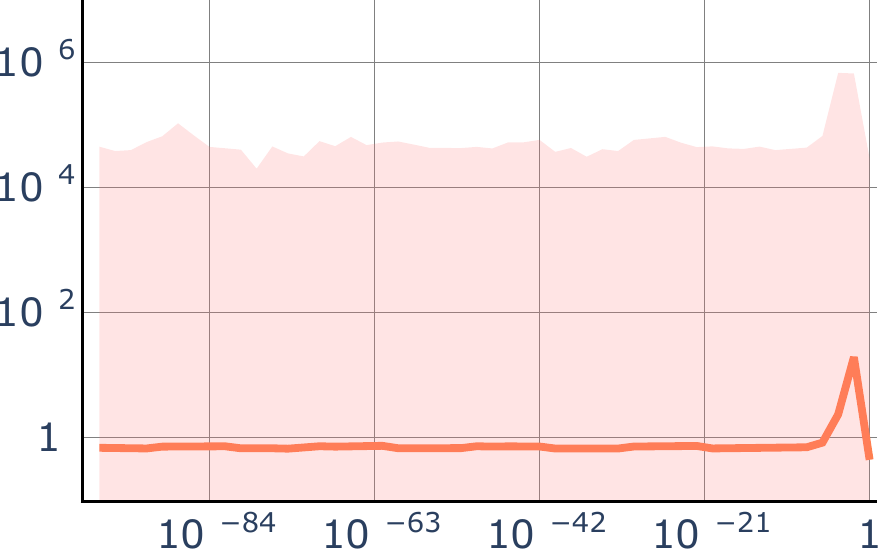}}\par
    \parbox{0.02\linewidth}{~}\hfill
    \parbox{.48\linewidth}{\centering\scriptsize{Distance $d$}}\hfill
    \parbox{.48\linewidth}{\centering\scriptsize{Distance $d$}}\\[2ex]
    \parbox{0.02\linewidth}{\centering\rotatebox{90}{\scriptsize{\# Queries}}}\hfill
    \parbox{.48\linewidth}{\centering\includegraphics[width=0.9\linewidth]{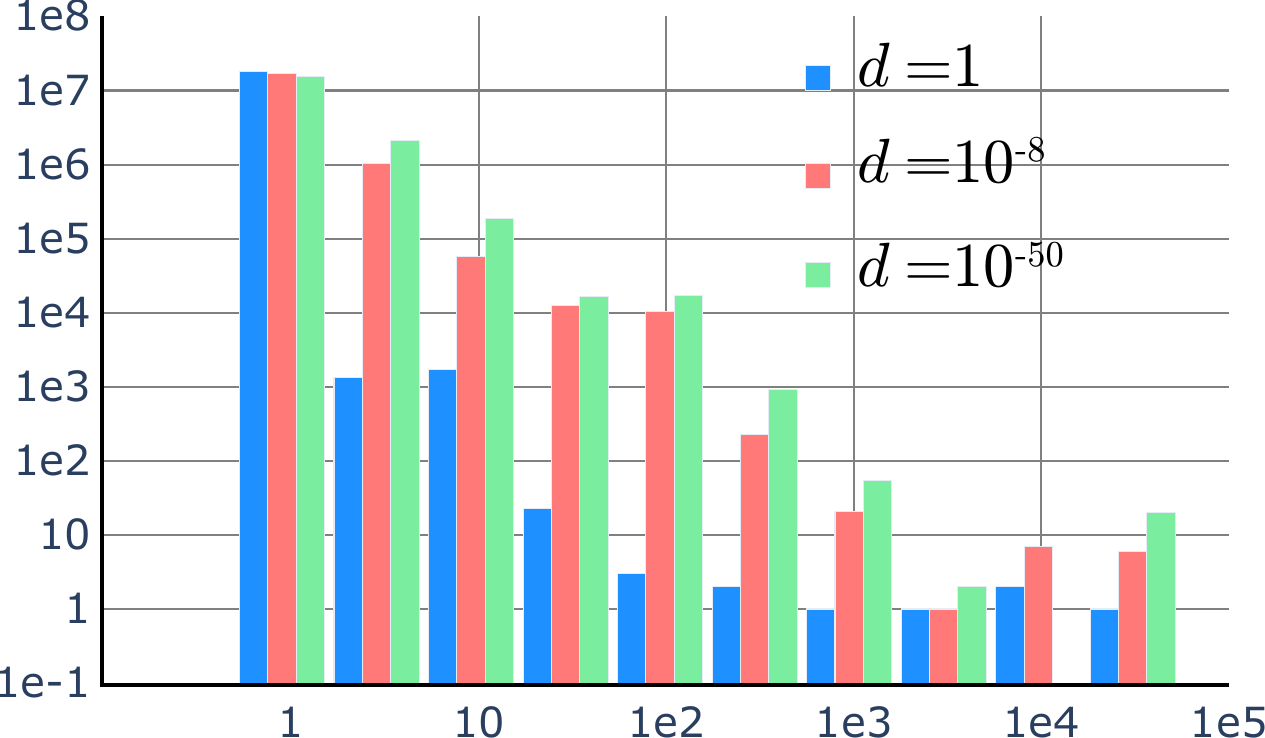}}\hfill
    \parbox{.48\linewidth}{\centering\includegraphics[width=0.9\linewidth]{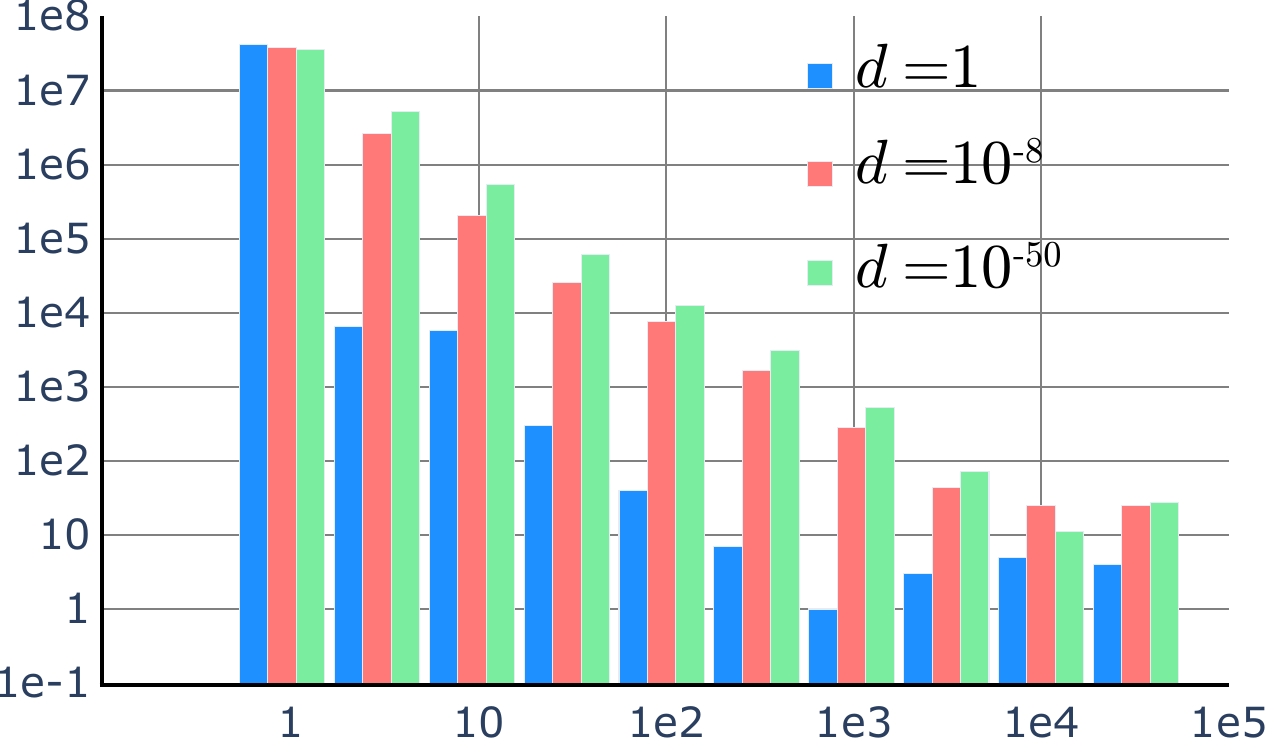}}\par
    \parbox{0.02\linewidth}{~}\hfill
    \parbox{.48\linewidth}{\centering\scriptsize{Running Time ($\mu$s)}}\hfill
    \parbox{.48\linewidth}{\centering\scriptsize{Running Time ($\mu$s)}}\par
    \caption{\review{Top, average runtime of our algorithm for varying minimum separation $d$ in the simulation dataset. The shaded area depicts the range of the values.} \revieww{Bottom, distribution of running time for three different minimum separation distanced $d = 10^{-50}$, $10^{-8}$, and $1$ over the simulation dataset.}}
    \label{fig:eps-vs-time}
\end{figure}

\section{Integration in Existing Simulators}\label{sec:applications}



\review{In a typical simulation the objects are represented using triangular meshes and the vertices are moving along a linear trajectory in a timestep. At each  timestep, collisions might happen when a vertex hits a triangle, or when an edge hits another edge. A CCD algorithm is then used to prevent interpenetration; this can be done in different ways. In an active set construction method (Section~\ref{sec:active-set}) the CCD is used to compute contact forces to avoid penetration assuming linearized contact behaviour. For a line-search based method (Section~\ref{sec:line-search}), CCD and time of impact are used to prevent the Newton trajectory from causing penetration by limiting the step length. Note that, the latter approach requires a conservative CCD, while the former can tolerate false negatives.}

\review{The integration of a CCD algorithm with collision response algorithms is a challenging problem on its own, which is beyond the scope of this paper. As a preliminary study, to show that our method can be integrated in existing response algorithm, }
we examine two use cases in elastodynamic simulations:
\begin{enumerate}
    \item constructing an active set of collision constraints~\cite{Wriggers1995Finte, verschoor2019efficient, harmon2008robust}, Section~\ref{sec:active-set};
    \item during a line search to prevent intersections~\cite{Li2020IPC}, Section~\ref{sec:line-search}.
\end{enumerate}

\review{We leave as future work a more comprehensive study including how to use our CCD to further improve the physical fidelity of existing simulators or how to deal with challenging cases such as sliding contact response.}

To keep consistency across queries, we compute the numerical tolerances~\eqref{eq:ccd-error} and~\eqref{eq:msccd-error} for the whole scene. That is, $x_{\max}$, $y_{\max}$, and $z_{\max}$ are computed as the maximum over all the vertices in the simulation. \review{In algorithms \ref{alg:active-set} and \ref{alg:line-search} we utilize a broad phase method (e.g., spatial hash) to reduce the number of candidates $C$ that need to be evaluated with out narrow phase CCD algorithm.} 

\subsection{Active Set Construction}\label{sec:active-set}

\begin{algorithm}
    \begin{algorithmic}[1]
    \Function{ConstructActiveSet}{$x_0, x_1, \delta, m_I$}\label{alg:line:activeset}
    \State $C \gets$ \Call{\review{BroadPhase}}{$x_0, x_1$}
    \State $C_A \gets \emptyset$
    \For{$c \in C$}\Comment{\review{Iterate over the collision candidates}}
        \State $t \gets$ \Call{CCD}{$x_0 \cap c, x_1 \cap c, \delta, m_I$}
        \If{$0 \leq t \leq 1$}
            \State $C_A \gets C_A \cup \{(c, t)\}$
        \EndIf
    \EndFor
    \State \Return $C_A$
    \EndFunction
    \State
    \revieww{
    \Function{CCD}{$c_0, c_1, \delta, m_I$}
    \If{$c_0$ and $c_1$ are edges}
        \State $F\gets$ build $F_{\text{ee}}$ from $c_0$ and $c_1$\Comment{Equation~\eqref{eq:F-ee}}
    \Else
        \State $F\gets$ build $F_{\text{vf}}$ from $c_0$ and $c_1$\Comment{Equation~\eqref{eq:F-vf}}
    \EndIf
    \State \Return \Call{Solve}{$F, \delta, m_I$}
    \EndFunction
    }
    \end{algorithmic}
    
    \caption{Active Set Construction Using Exact CCD}
    \label{alg:active-set}
\end{algorithm}

\review{In the traditional constraint based collision handling (such as that of~\citet{verschoor2019efficient}), collision response is handled by performing an implicit timestep as a constrained optimization. The goal is to minimize a elastic potential while avoiding interpenetration through gap constraints. } To avoid handling all possible collisions during a simulation, a subset of  \textit{active} collisions constraints \review{$C_A$} is usually constructed. This set not only avoids infeasibilities, but also improves performance by having fewer constraints. There are many activation strategies, but for the sake of brevity we focus here on the strategies used by~\citet{verschoor2019efficient}. 

\review{Algorithm \ref{alg:active-set} shows how CCD is used to compute the active set $C_A$. Given the starting and ending vertex positions, $x_0$ and $x_1$, we compute the time of impact for each collision candidate $c \in C$. We use the notation $x_i \cap c$ to indicate selecting the constrained vertices from $x_i$. If the candidate $c$ is an actual collision, that is $0 \leq t \leq 1$, then we add this constraint and the time of impact, $t$, to the active set, $C_A$.}

\review{From the active constraint set the constraints of~\citet{verschoor2019efficient} are computed as}
$$
\langle {n}, p_{c}^1 - \p_{c}^2 \rangle \geq 0,
$$
where ${n}$ is the contact normal \review{(i.e., for a point-triangle the triangle normal at the time of impact and for edge-edge the edge-edge cross product at the time of impact)}, $p_{c}^1$ is the point (or the contact point on the first edge), and $p_{c}^2$ is the point of contact on the triangle (or on the second edge) at the end of the timestep. 
Note that, this constraint requires to compute the point of contact, which depends on the the time-of-impact which can be obtained directly from our method.

Because of the difficulty for a simulation solver to maintain and not violate constraints, it is common to offset the constraints such that 
\[
\langle {n}, p_{c}^1 - p_{c}^2 \rangle \geq \eta > 0.
\]
In such a way, even if the $\eta$ constraint is violated, the real constraint is still satisfied. This common trick, implies that the constraints need to be activated early (i.e., when the distance between two objects is smaller than $\eta$) which is exactly what our MSCCD can compute when using $d = \eta$. In Figure~\ref{fig:active-set-delta}, we use a value of $\eta=\SI{0.001}{\meter}$. When using large values of $\eta$, the constraint of~\citet{verschoor2019efficient} can lead to infeasibilities because all triangles are extended to planes and edges to lines.

Figure~\ref{fig:active-set-delta} shows example of simulations run with different numerical tolerance $\delta$. Changing $\delta$ has little effect on the simulation in terms of run-time, but for large values of $\delta$, it can affect accuracy. We observe that for a $\delta \geq 10^{-2}$ the simulation is more likely to contain intersections. This is most likely due to the inaccuracies in the contact points used in the constraints. 

\begin{figure}
    \centering
    \includegraphics[width=\linewidth]{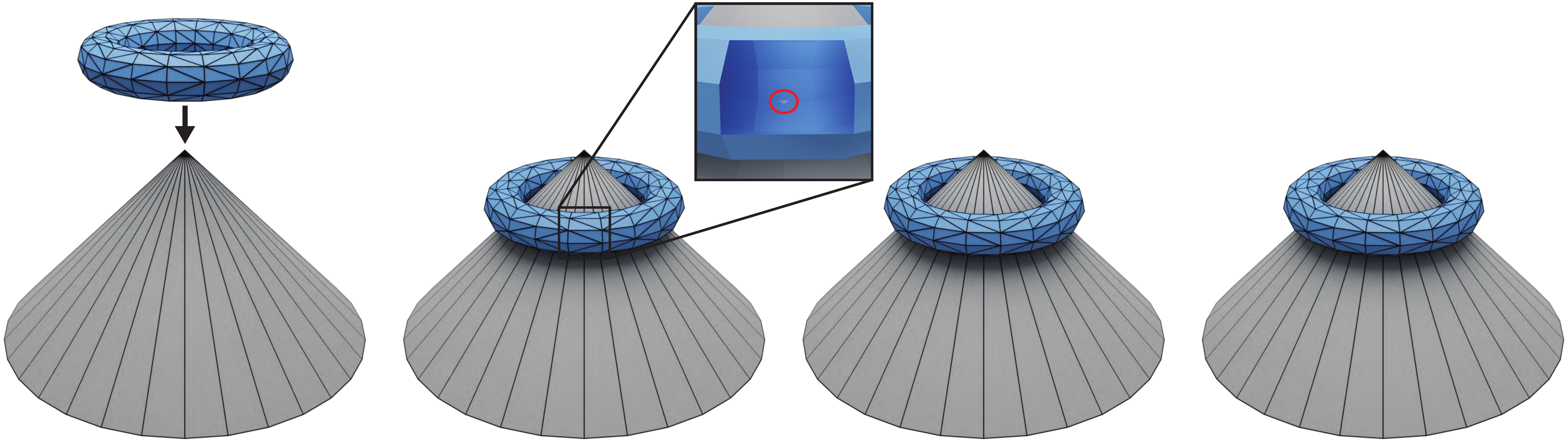}\\
    \hspace*{0.25\linewidth}
    \parbox[0]{0.24\linewidth}{\centering $\delta=10^{-1}$}
    \parbox[0]{0.24\linewidth}{\centering $\delta=10^{-3}$}
    \parbox[0]{0.24\linewidth}{\centering $\delta=10^{-6}$}
    \caption{An elastic simulation using the constraints and active set method of~\citet{verschoor2019efficient}. From an initial configuration (left) we simulate an elastic torus falling on a fixed cone using three values of $\delta$ (from left to right: $10^{-1}, 10^{-3}, 10^{-6}$). The total runtime of the simulation is affected little by the change in $\delta$ ($24.7$, $25.2$, and $26.2$ seconds from left to right compared to $32.3$ seconds when using FPRF). For $\delta=10^{-1}$, inaccuracies in the time-of-impact lead to inaccurate contact points in the constraints and, ultimately, intersections (inset). }
    \label{fig:active-set-delta}
\end{figure}

\subsection{Line Search}\label{sec:line-search}



    
    

\begin{algorithm}
    \begin{algorithmic}[1]
\Function{LineSearch}{$E, x_0, \Delta x, p, \delta, m_I$}

\State $x_1 \gets x_0 + \Delta{x}$
\State $C \gets $ \Call{\review{BroadPhase}}{$x_0, x_1$}\Comment{Collision candidates}
\State $\alpha \gets 1$
\State $d_i, \rho_i \gets \Call{Distance}{C}$\label{ll:distance-c}
\State Compute $\epsilon_i$ from~\eqref{eq:msccd-error}
\State $d \gets \max(p \d_i, \delta)$\label{ll:ls-msccd}
\While{$p  < (d - \delta - \epsilon_i - \rho_i)/d$}\label{ll:find-p}
    \State$p\gets p/2$
    \State $d \gets p \d_i$
\EndWhile
\State$\delta_i\gets\delta$
\For{$c \in C$}\Comment{$\alpha$ is bounded by earliest time-of-impact}\label{ll:msccd}
    \State $t, \bar\delta_i \gets$ \Call{MSCCD}{$x_0 \cap c, x_1 \cap c, d, \alpha, \delta, m_I$} \label{ll:msccd-ls}
    \State $\alpha \gets \min(t, \alpha)$
    \State $\delta_i \gets \max(\bar\delta_i, \delta_i)$
\EndFor
\If{$p  < (d - \delta_i - \epsilon_i - \rho_i)/d$}\label{ll:validate-after}
\State$\delta \gets \delta_i$\Comment{Repeat with $p$ validated from $\delta_i$}
\State Go to line~\ref{ll:find-p}.
\EndIf
\State
\While{$\alpha > \alpha_{\min}$}\Comment{\review{Backtracking} line-search}
    \State $x_1 \gets x_0 + \alpha \Delta{x}$
    \If{$E(x_1$) < $E(x_0)$ }\Comment{Objective energy decrease}\label{ll:energy-dec}
        \State \textbf{break}
    \EndIf
    
    \State $\alpha \gets \alpha/2$
    
\EndWhile
\State \Return $\alpha$
\EndFunction
\State
\revieww{
\Function{MSCCD}{$c_0, c_1, d, t, \delta, m_I$}
\If{$c_0$ and $c_1$ are edges}
    \State $F\gets$ build $F_{\text{ee}}$ from $c_0$ and $c_1$\Comment{Equation~\eqref{eq:F-ee}}
\Else
    \State $F\gets$ build $F_{\text{vf}}$ from $c_0$ and $c_1$\Comment{Equation~\eqref{eq:F-vf}}
\EndIf
\State \Return \Call{SolveMSCCD}{$F, t, \delta, m_I, d$}
\EndFunction
}
    
\end{algorithmic}
\caption{Line Search with Exact CCD}
\label{alg:line-search}
\end{algorithm}

\review{A line search is used in a optimization to ensure that every update decreases the energy $E$. That is, given an update, $\Delta x$, to the optimization variable $x$, we want to find a step size $\alpha$ such that $E(x + \alpha \Delta x) < E(x)$. This ensure that we make progress towards a minimum.}

When used in a line search algorithm, CCD can be used to prevent intersections and tunneling. This requires modifying the maximum step length to the time of impact. As observed by~\citet{Li2020IPC}, the standard CCD formulation without minimal separation cannot be used directly in a line search algorithm. Let $t^\star$ the earliest time of impact (i.e., $F(t^\star, \tilde u, \tilde v) = 0$ for some $\tilde u, \tilde v$ and there is no collision between $0$ and  $t^\star$) and assume that the energy at $E(x_0 + t^\star \Delta x) < E(x_0)$ (Algorithm~\ref{alg:line-search}, line~\ref{ll:energy-dec}). In this case the step $\alpha = t^\star$ is a valid descent step which will be used to update the position $x$ in outer iteration (e.g., Newton optimization loop). In the next iteration, the line search will be called with the updated position and the earliest time of impact will be zero since we selected $t^\star$ in the previous iteration. This prevents the optimization from making progress because any direction $\Delta x$ will lead to a time of impact $t=0$. To avoid this problem we need the line search to find an appropriate step-size $\alpha$ along the update direction that leaves ``sufficient space'' for the next iteration, so that the barrier in~\cite{Li2020IPC} will be active and steer the optimization away from the contact position. Formally, we aim at finding a \emph{valid CCD sequence} $\{t_i\}$ such that
\begin{align*}
    t_i < t_{i+1},\quad
    \lim_{i\to\infty} t_i = t^\star,\quad
    \text{and}\quad
    t_i/t_{i+1}\approx 1.
\end{align*}
The first requirement ensures that successive CCD checks will report an increasing time, the second one ensures that we will converge to the true minimum, and the last one aims at having a ``slowly'' convergent sequence (necessary for numerical stability). \citet{Li2020IPC} exploit a feature of FPRF to simulate a minimal separation CCD: in this work we propose to directly use our MSCCD algorithm (Section \ref{sec:msccd}).

\paragraph{Constructing a Sequence.} Let $0<p<1$ be a user-defined tolerance ($p$ close to 1 will produce a sequence $\{t_i\}$ converging faster) and $d_i$ be the distance between two primitives. We propose to set $d=p d_i$, and ensure that no primitive are closer than $d$. Without loss of generality, we assume that $F(x+\Delta x) =0$, that is, taking the full step will lead to contact. By taking successive steps in the same direction, $d_i$ will shrink to zero ensuring $t_i$ to converge to $t^\star$. Similarly we will obtain a growing sequence $t_i$ since $d$ decreases as we proceed with the iterations. Finally, it is easy to see that $p=t_i/t_{i+1}$ which can be close to one.

To account for the aforementioned problem, we propose to use our MSCCD algorithm to return a valid CCD sequence when employed in a line search scenario. For a step $i$, we define $\delta^i$ as the tolerance, $\epsilon_i$ the numerical error~\eqref{eq:msccd-error}, and $\rho_i$ as the maximum numerical error in computing \review{the distances} $d_i$ \review{from the candidates set $C$ (line~\ref{ll:distance-c})}. $\rho_i$ should be computed using forward error analysis on the application-specific distance computation: since the applications are not the focus of our paper, we used a fixed  $\rho_i=10^{-9}$, and we leave the complete forward analysis as a future work. (We note that our approximation might thus introduce zero length steps, this however did not happen in our experiments.)
If $d_i -(\delta_i +\epsilon_i + \rho_i) > d$, our MSCCD is guaranteed to find a time of impact larger than zero. Thus if we set $d=p d_i$ (line~\ref{ll:ls-msccd}), we are guaranteed to find a positive time of impact if
\[
d_i > \frac{\delta_i + \epsilon_i + \rho_i}{1-p}.
\]
To ensure that this inequality holds, we propose to validate $p$ before using the MSCCD with $\delta$ (line~\ref{ll:find-p}), find the time of impact and the actual $\delta_i$ (line~\ref{ll:msccd}), and check if the used $p$ is valid (line~\ref{ll:validate-after}). In case $p$ is too large, we divide it by two until it is small enough. Note that, it might be that
\[
d_i < \delta_i + \epsilon_i + \rho_i,
\]
in this case we can still enforce the inequality by increasing the number of iterations, decreasing $\delta$, or using multi-precision in the MSCCD to reduce $\epsilon_i$. However, this was never necessary in any of our queries, and we thus leave a proper study of these options as a future work.

As visible from Table~\ref{tab:us-vs-stiv}, our MSCCD slows down as $d$ grows. Since the actual minimum distance is not relevant in the line search algorithm, our experiments suggest to cap it at $\delta$ (line~\ref{ll:ls-msccd}). To avoid unnecessary computations and speedup the MSCCD computations, our algorithm, as suggested by~\citet{Redon2002fast}, can be easily modified to accept a shorter time interval (line~\ref{ll:msccd-ls}): it only requires to  change the initialization of $I$ \revieww{(Algorithm~\ref{alg:our} line~\ref{ll:init-ours})}. These two modifications lead to a $8\times$ 
speedup in our experiments. \revieww{We refer to this algorithm with MSCCD (i.e., Algorithm~\ref{alg:our} with MSCDD, Section~\ref{sec:msccd-method}, and modified initialization of $I$) as \texttt{SolveMSCCD}.}


Figure~\ref{fig:line-search-delta} shows a simulation using our MSCCD in line search to keep the bodies from intersecting for different $\delta$. As illustrated in the previous section, the effect of $\delta$ is negligible as long as $\delta\leq 10^{-3}$. Timings vary depending on the maximum number of iterations. Because the distance $d$ varies throughout the simulation, some steps take longer than others (as seen in Figure~\ref{fig:eps-vs-time}). We note that, if we use the standard CCD formulation $F=0$, the line search gets stuck in all our experiments, and we were not able to find a solution. \review{Note that for a line search based method it is crucial to have a conservative CCD/MSCCD algorithm: the videos in the additional material shows that a false negative leads to an artefact in the simulation.}

\begin{figure}
    \centering
    \includegraphics[width=\linewidth]{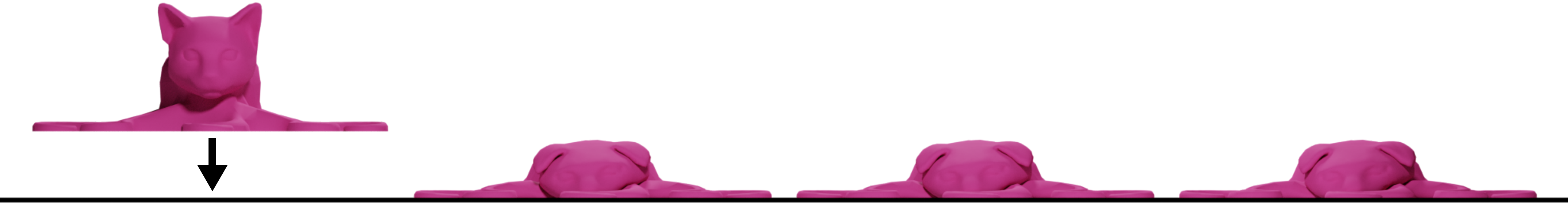}\\
    \hspace*{0.25\linewidth}
    \parbox[0]{0.24\linewidth}{\centering $\delta=10^{-3}$}
    \parbox[0]{0.24\linewidth}{\centering $\delta=10^{-4.5}$}
    \parbox[0]{0.24\linewidth}{\centering $\delta=10^{-6}$}
    \caption{An example of an elastic simulation using our line search (Section~\ref{sec:line-search}) and the method of~\citet{Li2020IPC} to keep the bodies from intersecting. An octocat is falling under gravity onto a triangulated plane. From left to right: the initial configuration, the final frame with $\delta=10^{-3}$, $\delta=10^{-4.5}$, $\delta=10^{-6}$ all with a maximum of $10^{6}$ iterations. There are no noticeable differences in the results, and the entire simulations takes 
    $63.3$, $67.9$, and $67.0$ 
    seconds from left to right (a speed up compared to using FPRF which takes $102$ seconds). \copyright Brian Enigma under CC BY-SA 3.0.
    }
    \label{fig:line-search-delta}
\end{figure}






\section{Limitations and Concluding Remarks}

We constructed a benchmark of CCD queries and used it to study the properties of existing CCD algorithms. The study highlighted that the multivariate formulation is more amenable to robust implementations, as it avoids a challenging filtering of spurious roots. This formulation, paired with an interval root finder and modern predicate construction techniques leads to a novel simple, robust, and efficient algorithm, supporting minimal separation queries with runtime comparable to state of the art, non conservative, methods. 

While we believe that it is practically acceptable, our algorithm still suffers from false positive and it will be interesting to see if the multivariate root finding could be done exactly with reasonable performances, for example employing expansion arithmetic in the predicates. Our definition of minimal separation distance is slightly different from the classical definition, and it would be interesting to study how to extend out method to directly support Euclidean distances. Another interesting venue for future work is the extension of our inclusion function to non-linear trajectories \review{and their efficient evaluation using static filters or exact arithmetic.}

\reviewww{Our benchmark focuses only on CPU implementations: reimplementing our algorithm on a GPU with our current guarantees is a major challenge. It will require to control the floating-point rounding on the GPU (and compliant with the IEEE floating-point standard), to ensure that the compiler does not reorder the operations or skip the computation of temporaries. Additionally it would require to recompute the ground truth and the numerical constants for single precision arithmetic, as most GPUs do not yet support double computation. This is an exciting direction for future work to further improve the performance of our approach.}

We will release an open-source reference implementation of our technique with an MIT license to foster adoption of our technique by existing commercial and academic simulators. We will also release the dataset and the code for all the algorithms in our benchmark to allow researchers working on CCD to easily compare the performance and correctness of future CCD algorithms.
\ifx\archive\undefined
\begin{acks}
\else
\section{Acknowledgements}
\fi
We thank Danny Kaufman for valuable discussions and NYU IT High Performance Computing for resources, services, and staff expertise.
This work was partially supported by the NSF CAREER award under Grant No. 1652515, the NSF grants OAC-1835712, OIA-1937043, CHS-1908767, CHS-1901091, National Key Research and Development Program of China No. 2020YFA0713700, EU ERC Advanced Grant CHANGE No. 694515, a Sloan Fellowship, a gift from Adobe Research, a gift from nTopology, and a gift from Advanced Micro Devices, Inc.

\ifx\archive\undefined
\end{acks}
\fi



\begin{thebibliography}{10}

\bibitem{attene20}
Marco Attene.
\newblock Indirect predicates for geometric constructions.
\newblock {\em Computer-Aided Design}, 126, 2020.

\bibitem{brochu2009numerically}
Tyson Brochu and Robert Bridson.
\newblock Numerically robust continuous collision detection for dynamic
  explicit surfaces.
\newblock {\em University of British Columbia, Vancouver, BC, Canada, Tech.
  Rep. TR-2009-03}, 2009.

\bibitem{brochu2012efficient}
Tyson Brochu, Essex Edwards, and Robert Bridson.
\newblock Efficient geometrically exact continuous collision detection.
\newblock {\em {ACM} Transactions on Graphics}, 31(4):96:1--96:7, July 2012.

\bibitem{bronnimann98}
Herv{\'e} Br\"{o}nnimann, Christoph Burnikel, and Sylvain Pion.
\newblock Interval arithmetic yields efficient dynamic filters for
  computational geometry.
\newblock In {\em Proceedings of the Fourteenth Annual Symposium on
  Computational Geometry}, SCG '98, pages 165--174, New York, NY, USA, 1998.
  Association for Computing Machinery.

\bibitem{canny1986collision}
John Canny.
\newblock Collision detection for moving polyhedra.
\newblock {\em IEEE Transactions on Pattern Analysis and Machine Intelligence},
  PAMI-8(2):200--209, 1986.

\bibitem{Curtis2008Fast}
Sean Curtis, Rasmus Tamstorf, and Dinesh Manocha.
\newblock Fast collision detection for deformable models using
  representative-triangles.
\newblock pages 61--69, January 2008.

\bibitem{Eberly08}
David Eberly.
\newblock Intersection of convex objects: The method of separating axes.
\newblock
  \url{http://www.geometrictools.com/Documentation/MethodOfSeparatingAxes.pdf},
  2008.

\bibitem{erleben2018methodology}
Kenny Erleben.
\newblock Methodology for assessing mesh-based contact point methods.
\newblock {\em {ACM} Transactions on Graphics}, 37(3), July 2018.

\bibitem{Govindaraju2005collision}
Naga Govindaraju, David Knott, Nitin Jain, Ilknur Kabul, Rasmus Tamstorf,
  Russell Gayle, Ming Lin, and Dinesh Manocha.
\newblock Collision detection between deformable models using chromatic
  decomposition.
\newblock {\em {ACM} Transactions on Graphics}, 24:991--999, July 2005.

\bibitem{gmp}
Torbj{\"o}rn Granlund and {the GMP Development Team}.
\newblock {\em {GNU MP}: The {GNU} Multiple Precision Arithmetic Library},
  5.0.5 edition, 2012.

\bibitem{eigen}
Ga\"{e}l Guennebaud, Beno\^{i}t Jacob, et~al.
\newblock Eigen v3, 2010.

\bibitem{harmon2011IAGM}
David Harmon, Daniele Panozzo, Olga Sorkine, and Denis Zorin.
\newblock Interference-aware geometric modeling.
\newblock {\em {ACM} Transactions on Graphics}, 30(6):1–10, December 2011.

\bibitem{harmon2008robust}
David Harmon, Etienne Vouga, Rasmus Tamstorf, and Eitan Grinspun.
\newblock Robust treatment of simultaneous collisions.
\newblock {\em {ACM} Transactions on Graphics}, 27, August 2008.

\bibitem{Hu2019fast}
Yixin Hu, Teseo Schneider, Bolun Wang, Denis Zorin, and Daniele Panozzo.
\newblock Fast tetrahedral meshing in the wild.
\newblock {\em {ACM} Transactions on Graphics}, 39(4), July 2020.

\bibitem{Hubbard1995Interactive}
Philip~Martyn Hubbard.
\newblock Collision detection for interactive graphics applications.
\newblock {\em IEEE Transactions on Visualization and Computer Graphics},
  1(3):218--230, 1995.

\bibitem{Hutter2007optimized}
Marco Hutter and Arnulph Fuhrmann.
\newblock Optimized continuous collision detection for deformable triangle
  meshes.
\newblock {\em Journal of WSCG}, 15:25--32, July 2007.

\bibitem{joldes2016}
Mioara Joldes, Olivier Marty, Jean-Michel Muller, and Valentina Popescu.
\newblock Arithmetic algorithms for extended precision using floating-point
  expansions.
\newblock {\em IEEE Transactions on Computers}, 65(4):1197--1210, April 2016.

\bibitem{kim2003collision}
Byungmoon Kim and Jarek Rossignac.
\newblock Collision prediction for polyhedra under screw motions.
\newblock pages 4--10, June 2003.

\bibitem{levy19}
Bruno L{\'e}vy.
\newblock Geogram, 2019.

\bibitem{Li2020IPC}
Minchen Li, Zachary Ferguson, Teseo Schneider, Timothy Langlois, Denis Zorin,
  Daniele Panozzo, Chenfanfu Jiang, and Danny~M. Kaufman.
\newblock Incremental potential contact: Intersection-and inversion-free,
  large-deformation dynamics.
\newblock {\em {ACM} Transactions on Graphics}, 39(4), July 2020.

\bibitem{Libin:2019}
Libin Lu, Matthew~J. Morse, Abtin Rahimian, Georg Stadler, and Denis Zorin.
\newblock Scalable simulation of realistic volume fraction red blood cell flows
  through vascular networks.
\newblock In {\em Proceedings of the International Conference for High
  Performance Computing, Networking, Storage and Analysis}, SC ’19, New York,
  NY, USA, 2019. Association for Computing Machinery.

\bibitem{Mezger2003Hierarchical}
Johannes Mezger, Stefan Kimmerle, and Olaf Etzmuss.
\newblock Hierarchical techniques in collision detection for cloth animation.
\newblock volume~11, pages 322--329, January 2003.

\bibitem{Mirtich1996Impulse}
Brian~Vincent Mirtich.
\newblock {\em Impulse-Based Dynamic Simulation of Rigid Body Systems}.
\newblock PhD thesis, 1996.

\bibitem{nocedal2006numerical}
Jorge Nocedal and Stephen~J. Wright.
\newblock {\em Numerical Optimization}.
\newblock Springer, New York, NY, USA, second edition, 2006.

\bibitem{Pabst2010Fast}
Simon Pabst, Artur Koch, and Wolfgang Stra{\ss}er.
\newblock Fast and scalable cpu/gpu collision detection for rigid and
  deformable surfaces.
\newblock {\em Computer Graphics Forum}, 29:1605--1612, July 2010.

\bibitem{Pan2012Collision}
Jia Pan, Liangjun Zhang, and Dinesh Manocha.
\newblock Collision-free and smooth trajectory computation in cluttered
  environments.
\newblock {\em The International Journal of Robotics Research},
  31(10):1155--1175, 2012.

\bibitem{protter2012maximum}
Murray~H Protter and Hans~F Weinberger.
\newblock {\em Maximum principles in differential equations}.
\newblock Springer Science \& Business Media, 2012.

\bibitem{provot1997collision}
Xavier Provot.
\newblock Collision and self-collision handling in cloth model dedicated to
  design garments.
\newblock In {\em Computer Animation and Simulation}, pages 177--189. Springer,
  1997.

\bibitem{Redon2002fast}
Stephane Redon, Abderrahmane Kheddar, and Sabine Coquillart.
\newblock Fast continuous collision detection between rigid bodies.
\newblock {\em Computer Graphics Forum}, 21, May 2002.

\bibitem{boost}
Boris Schling.
\newblock {\em The Boost C++ Libraries}.
\newblock XML Press, 2011.

\bibitem{rationalrp}
Teseo Schneider.
\newblock Rational root parity, 2020.

\bibitem{Schvartzman2010Star}
Sara Schvartzman, {\'A}lvaro P{\'e}rez, and Miguel Otaduy.
\newblock Star-contours for efficient hierarchical self-collision detection.
\newblock {\em {ACM} Transactions on Graphics}, 29, July 2010.

\bibitem{shewchuk97}
Jonathan~Richard Shewchuk.
\newblock Adaptive precision floating-point arithmetic and fast robust
  geometric predicates.
\newblock {\em Discrete \& Computational Geometry}, 18(3):305--363, 1997.

\bibitem{snyder1992interval}
John~M. Snyder.
\newblock Interval analysis for computer graphics.
\newblock In {\em Proceedings of the 19th Annual Conference on Computer
  Graphics and Interactive Techniques}, SIGGRAPH '92, page 121–130, New York,
  NY, USA, 1992. Association for Computing Machinery.

\bibitem{Snyder1993Interval}
John~M. Snyder, Adam~R. Woodbury, Kurt Fleischer, Bena Currin, and Alan~H.
  Barr.
\newblock Interval methods for multi-point collisions between time-dependent
  curved surfaces.
\newblock In {\em Proceedings of the 20th Annual Conference on Computer
  Graphics and Interactive Techniques}, SIGGRAPH ’93, page 321–334, New
  York, NY, USA, 1993. Association for Computing Machinery.

\bibitem{Stam2009Nucleus}
Jos Stam.
\newblock Nucleus: Towards a unified dynamics solver for computer graphics.
\newblock {\em IEEE International Conference on Computer-Aided Design and
  Computer Graphics}, pages 1--11, 2009.

\bibitem{osqp}
Bartolomeo Stellato, Goran Banjac, Paul Goulart, Alberto Bemporad, and Stephen
  Boyd.
\newblock {OSQP}: An operator splitting solver for quadratic programs.
\newblock {\em Mathematical Programming Computation}, 2020.

\bibitem{sterbenz74}
Pat~H Sterbenz.
\newblock {\em Floating-point computation}.
\newblock Prentice-Hall series in automatic computation. Prentice-Hall,
  Englewood Cliffs, NJ, 1974.

\bibitem{Tang2009ICCD}
Min Tang, Sean Curtis, Sung-eui Yoon, and Dinesh Manocha.
\newblock {ICCD}: Interactive continuous collision detection between deformable
  models using connectivity-based culling.
\newblock {\em IEEE Transactions on Visualization and Computer Graphics},
  15:544--57, July 2009.

\bibitem{Tang2009C2A}
Min Tang, Young Kim, and Dinesh Manocha.
\newblock {C$^2$A}: Controlled conservative advancement for continuous
  collision detection of polygonal models.
\newblock {\em Proceedings of International Conference on Robotics and
  Automation}, pages 849--854, June 2009.

\bibitem{tang2010continuous}
Min Tang, Young Kim, and Dinesh Manocha.
\newblock Continuous collision detection for non-rigid contact computations
  using local advancement.
\newblock {\em Proceedings - IEEE International Conference on Robotics and
  Automation}, pages 4016--4021, May 2010.

\bibitem{Tang2010fast}
Min Tang, Dinesh Manocha, and Ruofeng Tong.
\newblock Fast continuous collision detection using deforming non-penetration
  filters.
\newblock pages 7--13, February 2010.

\bibitem{Tang2011VolCCD}
Min Tang, Dinesh Manocha, Sung-eui Yoon, Peng du, Jae-Pil Heo, and Ruofeng
  Tong.
\newblock {VolCCD}: Fast continuous collision culling between deforming volume
  meshes.
\newblock {\em {ACM} Transactions on Graphics}, 30, January 2011.

\bibitem{tang2014fast}
Min Tang, Ruofeng Tong, Zhendong Wang, and Dinesh Manocha.
\newblock Fast and exact continuous collision detection with bernstein sign
  classification.
\newblock {\em {ACM} Transactions on Graphics}, 33:186:1--186:8, November 2014.

\bibitem{Tang2008Adjacency}
Min Tang, Sung-eui Yoon, and Dinesh Manocha.
\newblock Adjacency-based culling for continuous collision detection.
\newblock {\em The Visual Computer}, 24:545--553, July 2008.

\bibitem{verschoor2019efficient}
Mickeal Verschoor and Andrei~C. Jalba.
\newblock Efficient and accurate collision response for elastically deformable
  models.
\newblock {\em {ACM} Transactions on Graphics}, 38(2), March 2019.

\bibitem{volino1994efficient}
Pascal Volino and Nadia~Magnenat Thalmann.
\newblock Efficient self-collision detection on smoothly discretized surface
  animations using geometrical shape regularity.
\newblock {\em Computer Graphics Forum}, 13(3):155--166, 1994.

\bibitem{VonHerzen:1990}
Brian Von~Herzen, Alan~H. Barr, and Harold~R. Zatz.
\newblock Geometric collisions for time-dependent parametric surfaces.
\newblock {\em Computer Graphics (Proceedings of SIGGRAPH)}, 24(4):39–48,
  September 1990.

\bibitem{vouga2010AVCM}
Etienne Vouga, David Harmon, Rasmus Tamstorf, and Eitan Grinspun.
\newblock Asynchronous variational contact mechanics.
\newblock {\em Computer Methods in Applied Mechanics and Engineering},
  200:2181--2194, July 2010.

\bibitem{Wang2014Defending}
Huamin Wang.
\newblock Defending continuous collision detection against errors.
\newblock {\em {ACM} Transactions on Graphics}, 33:1--10, July 2014.

\bibitem{wang2015tightccd}
Zhendong Wang, Min Tang, Ruofeng Tong, and Dinesh Manocha.
\newblock {TightCCD}: Efficient and robust continuous collision detection using
  tight error bounds.
\newblock {\em Computer Graphics Forum}, 34:289--298, September 2015.

\bibitem{Williams:2005}
Amy Williams, Steve Barrus, R.~Keith Morley, and Peter Shirley.
\newblock An efficient and robust ray-box intersection algorithm.
\newblock In {\em ACM SIGGRAPH 2005 Courses}, SIGGRAPH ’05, page 9–es, New
  York, NY, USA, 2005. Association for Computing Machinery.

\bibitem{Mathematica}
{Wolfram Research Inc.}
\newblock {\em Mathematica 12.0}, 2020.

\bibitem{Wong2006randomized}
Wingo Sai-Keung Wong and George Baciu.
\newblock A randomized marking scheme for continuous collision detection in
  simulation of deformable surfaces.
\newblock In {\em Proceedings of the 2006 ACM International Conference on
  Virtual Reality Continuum and Its Applications}, VRCIA '06, page 181–188,
  New York, NY, USA, 2006. Association for Computing Machinery.

\bibitem{Wriggers1995Finte}
Peter Wriggers.
\newblock Finite element algorithms for contact problems.
\newblock {\em Archives of Computational Methods in Engineering}, 2:1--49,
  December 1995.

\bibitem{Zhang2007Continous}
Xinyu Zhang, Stephane Redon, Minkyoung Lee, and Young~J. Kim.
\newblock Continuous collision detection for articulated models using taylor
  models and temporal culling.
\newblock {\em {ACM} Transactions on Graphics}, 26(3):15–25, July 2007.

\bibitem{Zheng2012Energy}
Changxi Zheng and Doug James.
\newblock Energy-based self-collision culling for arbitrary mesh deformations.
\newblock {\em {ACM} Transactions on Graphics}, 31, July 2012.

\end{thebibliography}

\appendix

\begin{table*}
    \caption{\review{Summary of the average runtime in ${\mu}s$ (t), number of false positive (FP), and number of false negative (FN) for FPRF and our method.}}
    \centering
    \scriptsize
\setlength\tabcolsep{4pt}
\begin{tabular}{lccc|ccc|ccc|ccc|ccc|ccc|ccc|ccc}
\toprule
{} & \multicolumn{6}{|c|}{\Handcrafted{} -- Vertex-Face MSCCD} & \multicolumn{6}{c|}{\Handcrafted{} -- Edge-Edge MSCCD} & \multicolumn{6}{c|}{Simulation -- Vertex-Face MSCCD} & \multicolumn{6}{c}{Simulation -- Edge-Edge MSCCD} \\
\midrule
\multicolumn{1}{l|}{} & \multicolumn{3}{c|}{FPRF} & \multicolumn{3}{c|}{Ours} & \multicolumn{3}{c|}{FPRF} & \multicolumn{3}{c|}{Ours} & \multicolumn{3}{c|}{FPRF} & \multicolumn{3}{c|}{Ours} & \multicolumn{3}{c|}{FPRF} & \multicolumn{3}{c}{Ours} \\
\midrule
\multicolumn{1}{l|}{$d$} &    t &   FP &   FN &    t &    FP & FN &    t &   FP &   FN &     t &    FP & FN &     t &      FP &     FN &    t &       FP & FN &     t &      FP &    FN &    t &       FP & FN \\
\midrule
\multicolumn{1}{l|}{$10^{-2}$}   & 2.41 & 1.8K & 4 & 18.86K & 2.6K & 0 & 1.16 & 3.3K & 19 & 9.64K & 4.8K & 0 & 8.04 & 869.1K & 1 & 12.04 & 8.1M & 0 & 8.01 & 1.1M & 0 & 19.12 & 8.3M & 0\\
\multicolumn{1}{l|}{$10^{-8}$}   & 4.53 & 83 & 3 & 1.60K & 159 & 0 & 0.60 & 160 & 28 & 3.42K & 309 & 0 & 8.00 & 4 & 2 & 0.72 & 8 & 0 & 0.77 & 16 & 0 & 0.73 & 40 & 0\\
\multicolumn{1}{l|}{$10^{-16}$}    & 2.23 & 29 & 69 & 1.51K & 108 & 0 & 0.55 & 45 & 145 & 2.92K & 214 & 0 & 7.78 & 0 & 5.2K & 0.71 & 2 & 0 & 0.25 & 0 & 2.3K & 0.72 & 17 & 0\\

\multicolumn{1}{l|}{$10^{-30}$}    & 2.24 & 9 & 70 & 1.39K & 108 & 0 & 0.58 & 5 & 147 & 2.79K & 214 & 0 & 7.77 & 0 & 5.2K & 0.66 & 2 & 0 & 0.25 & 0 & 2.3K & 0.67 & 17 & 0\\

\multicolumn{1}{l|}{$10^{-100}$}    & 2.31 & 9 & 70 & 1.43K & 108 & 0 & 0.80 & 5 & 147 & 2.82K & 214 & 0 & 7.75 & 0 & 5.2K & 0.66 & 2 & 0 & 0.25 & 0 & 2.3K & 0.68 & 17 & 0\\

\bottomrule
\end{tabular}
    \label{tab:us-vs-fprf}
\end{table*}

\section{Dataset Format}\label{app:format}

To avoid any loss of precision we convert every input floating-point coordinate in rationals using GMP~\cite{gmp}. This conversion is exact since every floating point can be converted in a rational number, as long as the numerator and denominator are arbitrarily large integers. We then store the numerator and denominator as a \texttt{string} since the numerator and denominator can be larger than a \texttt{long} number. To retrieve the floating point number we allocate a GMP rational number with the two strings and convert it to \texttt{double}.

In summary, one CCD query is represented by a $8 \times 7$ matrix where every row is one of the 8 CCD input points, and the columns are the interleaved $x, y, z$ coordinates of the point, represented as numerator and denominator. For convenience, we appended several such matrices in a common CSV file. The last column represents the result of the ground truth. For instance a CC query between $\p_1^0, \p_2^0, \p_3^0, \p_4^0$ and $\p_1^1, \p_2^1, \p_3^1, \p_4^1$ is represented as
{\footnotesize
\[
\begin{matrix}
p_{1^x_n}^0& p_{1^x_d}^0& p_{1^y_n}^0& p_{1^y_d}^0& p_{1^z_n}^0& p_{1^z_d}^0 & T\\
p_{2^x_n}^0& p_{2^x_d}^0& p_{2^y_n}^0& p_{2^y_d}^0& p_{2^z_n}^0& p_{2^z_d}^0 & T\\
p_{3^x_n}^0& p_{3^x_d}^0& p_{3^y_n}^0& p_{3^y_d}^0& p_{3^z_n}^0& p_{3^z_d}^0 & T\\
p_{4^x_n}^0& p_{4^x_d}^0& p_{4^y_n}^0& p_{4^y_d}^0& p_{4^z_n}^0& p_{4^z_d}^0 & T\\
p_{1^x_n}^1& p_{1^x_d}^1& p_{1^y_n}^1& p_{1^y_d}^1& p_{1^z_n}^1& p_{1^z_d}^1 & T\\
p_{2^x_n}^1& p_{2^x_d}^1& p_{2^y_n}^1& p_{2^y_d}^1& p_{2^z_n}^1& p_{2^z_d}^1 & T\\
p_{3^x_n}^1& p_{3^x_d}^1& p_{3^y_n}^1& p_{3^y_d}^1& p_{3^z_n}^1& p_{3^z_d}^1 & T\\
p_{4^x_n}^1& p_{4^x_d}^1& p_{4^y_n}^1& p_{4^y_d}^1& p_{4^z_n}^1& p_{4^z_d}^1 & T,
\end{matrix}
\]
}
where $p_{i^x_n}^t$ and $p_{i^x_d}^t$ are respectively the numerator and denominator of the $x$-coordinate of $p$, and $T$ is the same ground truth. \review{The dataset and a query viewer can be downloaded from the \href{https://archive.nyu.edu/handle/2451/61518}{NYU Faculty Digital Archive}.}

\section[Example of Degenerate Case not Properly Handled by Brochu et al. 2012]{Example of Degenerate Case not Properly Handled by~\cite{brochu2012efficient}}\label{app:britson-bug}
Let 
{\footnotesize{
\begin{equation}\label{eq:butterfly}
\begin{aligned}
\p^0 &= [0.1, 0.1, 0.1],&
\v_1^0 &= [0, 0, 1],&
\v_2^0 &= [1, 0, 1],&
\v_3^0 &= [0, 1, 1],&\\
\p^1 &= [0.1, 0.1, 0.1],&
\v_1^1 &= [0, 0, 0],&
\v_2^1 &= [0, 1, 0],&
\v_3^1 &= [1, 0, 0]&
\end{aligned}
\end{equation}
}}
be the input point and triangle. Checking if the point intersects the triangle is equivalent to check if the prism shown in Figure ~\ref{fig:butterfly} contains the origin. However, the prism contains a bilinear face that is degenerate (it looks like a ``hourglass''). The algorithm proposed in \cite{brochu2012efficient} does not consider this degenerate case and erroneously reports no collision.

\begin{figure}
    \centering
    \includegraphics[width=0.8\linewidth]{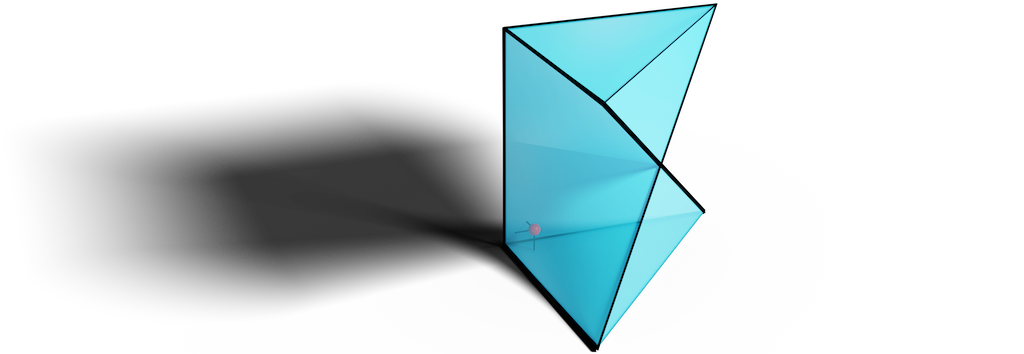}
    \caption{Prism resulting from the input points and triangle in~\eqref{eq:butterfly}. The origin is marked by the red dot.}
    \label{fig:butterfly}
\end{figure}

\section[Example of Inflection Point not Properly Handled by Tang et al. 2014]{Example of Inflection Point not Properly Handled by~\cite{tang2014fast}}\label{app:dinesh-bug}

Let
{\footnotesize{
\begin{align*}
\p^0 &= [1, 1, 0],&
\v_1^0 &= [0, 0, 5],&
\v_2^0 &= [2, 0, 2],&
\v_3^0 &= [0, 1, 0],&\\
\p^1 &= [1, 1, 0],&
\v_1^1 &= [0, 0, -1],&
\v_2^1 &= [0, 0, -2],&
\v_3^1 &= [0, 7, 0]&
\end{align*}
}}
be the input point and triangle. Checking if they intersect at time $t$ is equivalent to finding the roots of
\[
-72 t^3 + 120 t^2 - 44 t + 3.
\]
To apply the method in~\cite{tang2014fast} we need to rewrite the polynomial in form of~\citet[Equation (1)]{tang2014fast}:
\[
1 B_0^3(t)-\frac{35} 3 B_1^3(t) + \frac{82} 3  B_2^3(t) + 14  B_3^3(t).
\]
Their algorithm assumes no inflection points in the Bezier curve. Thus it proposes to split the curve at the eventual inflection point (as in the case above). The formula proposed in~\cite[Section 4.1]{tang2014fast} contains a typo, by fixing it we obtain the inflection point at:
\[
t = \frac{6k_0 - 4k_1 + k_2}{6k_0 - 6k_1 + 3k_2 - k_3} = \frac{5}{9}.
\]
By using the incorrect formula we obtain $t=155/312$, which  is not an inflection point. In both cases, $t$ cannot be computed exactly since it contains a division, and computing it approximately breaks the assumption of not having inflection  points in the Bezier form. In the reference code, the authors detect the presence of an inflection point using predicates, but do not split the curve (the case is not handled). We modified the code (patch attached in the additional material) to conservatively return a collision in these cases.

Independently from this problem, their reference implementation returns false negative (i.e. misses collisions) for certain configurations, such as the following degenerate configuration:

{\footnotesize{
\begin{align*}
\p^0 &= [1, 0.5, 1],&
\v_1^0 &= [0, 0.57, 1],&
\v_2^0 &= [1, 0.57, 1],&
\v_3^0 &= [1, 1.57, 1],&\\
\p^1 &= [1, 0.5, 1],&
\v_1^1 &= [0, 0.28, 1],&
\v_2^1 &= [1, 0.28, 1],&
\v_3^1 &= [1, 1.28, 1].&
\end{align*}
}}

We could not find out why this is happening, and we do not know if this is a theoretical or numerical problem, or a bug in the implementation.

\section[Effect of delta on the interval-based methods]{Effect of $\delta$ on the interval-based methods}\label{app:ib-params}
\review{UIRF, IRF, and our method have a single parameter $\delta$ to control the size of the interval. Increasing $\delta$ will introduce more false positive, while making the algorithms faster (Figure~\ref{fig:ib-params}). Note that we limit the total running time to 24h, thus UIRF does not have result for  $\delta > 10^{-6}$ (for $\delta = 10^{-6}$ it takes 1ms per query in average). 
$\delta$ has a similar effect on the number of false positives for the three interval based methods, while it has a more significant impact on the running time for UIRF and IRF.}

\begin{figure}
    \centering
    \parbox{0.02\linewidth}{~}\hfill
    \parbox{.48\linewidth}{\centering Vertex-Face CCD}\hfill
    \parbox{.48\linewidth}{\centering Edge-Edge CCD}\par
    \parbox{0.02\linewidth}{\centering\rotatebox{90}{\scriptsize{Average Runtime ($\mu$s)}}}\hfill
    \parbox{.48\linewidth}{\centering\includegraphics[width=0.9\linewidth]{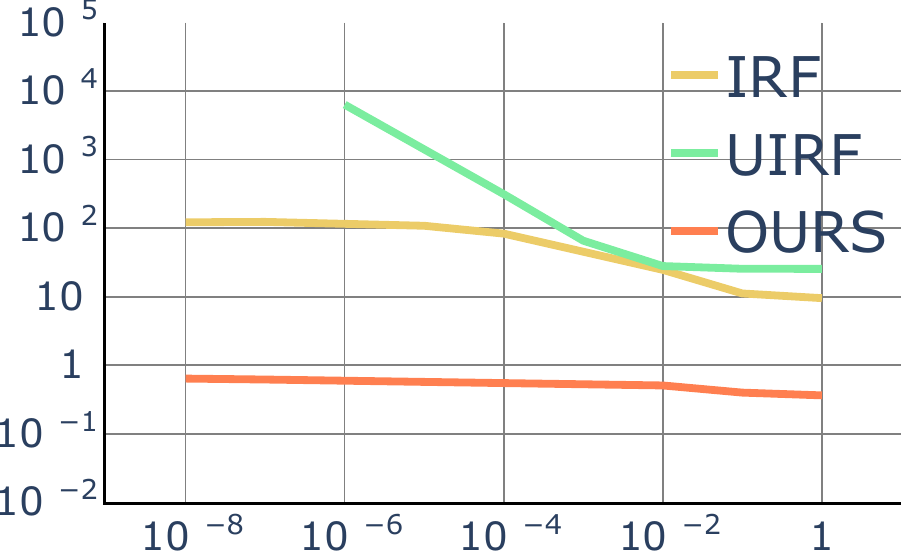}}\hfill
    \parbox{.48\linewidth}{\centering\includegraphics[width=0.9\linewidth]{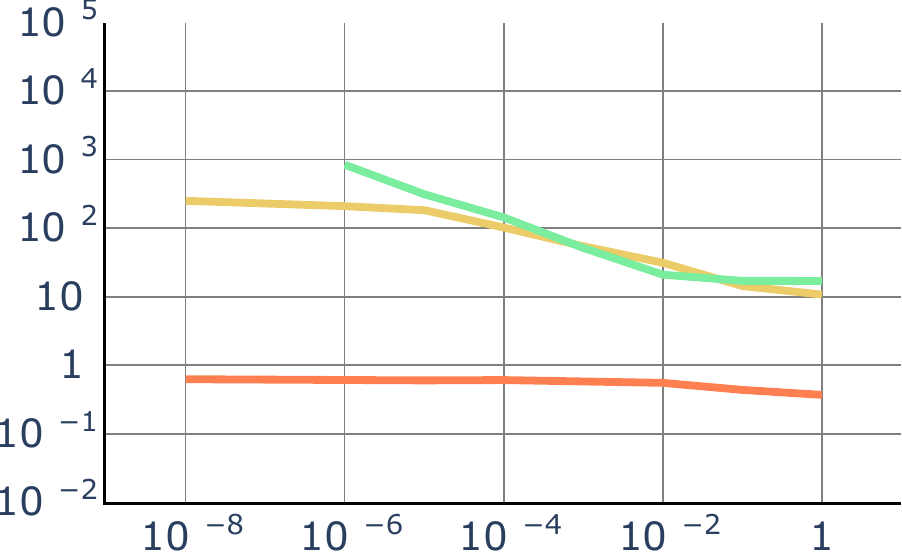}}\par
    \parbox{0.02\linewidth}{\centering\rotatebox{90}{\scriptsize{Number of False Positive}}}\hfill
    \parbox{.48\linewidth}{\centering\includegraphics[width=0.9\linewidth]{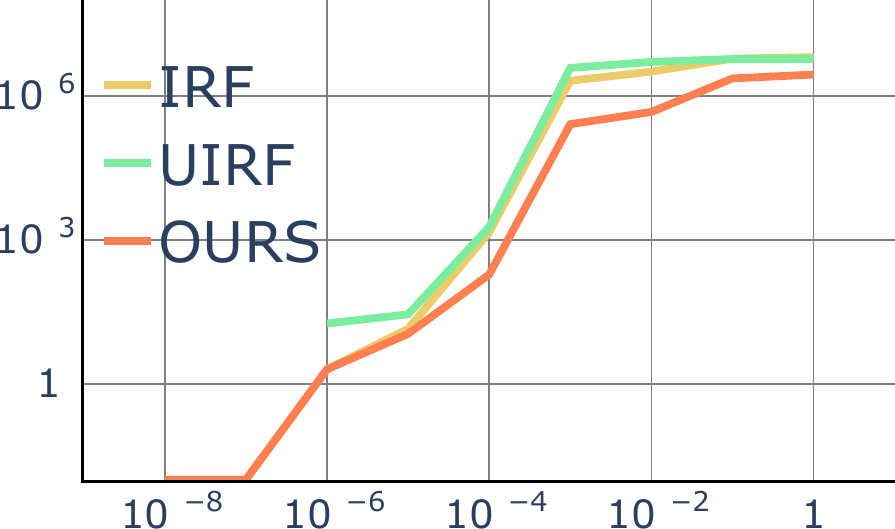}}\hfill
    \parbox{.48\linewidth}{\centering\includegraphics[width=0.9\linewidth]{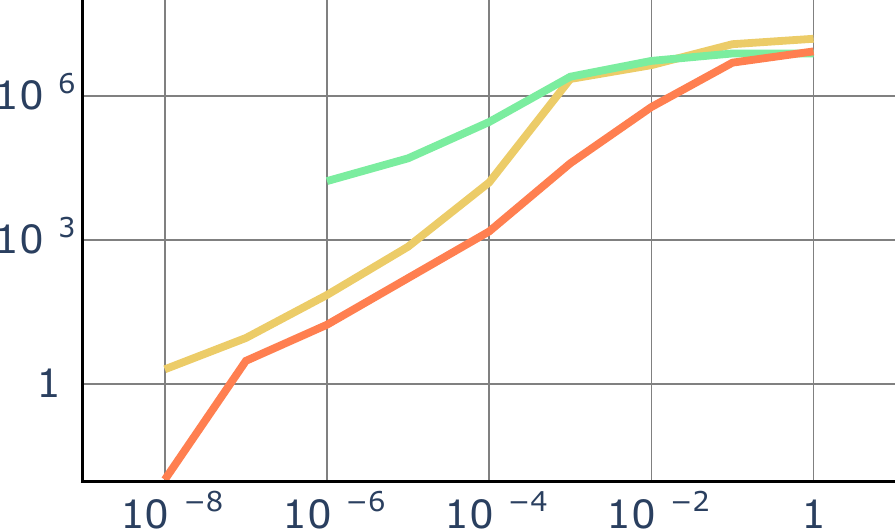}}\par
    \parbox{0.02\linewidth}{~}\hfill
    \parbox{.48\linewidth}{\centering\scriptsize{Tolerance $\delta$}}\hfill
    \parbox{.48\linewidth}{\centering\scriptsize{Tolerance $\delta$}}\par
    \caption{\review{Log plot of the effect of the tolerance $\delta$ on the running time (top) and false positives (bottom) for the three (Ours, UIRF, and IRF) interval based methods on the simulation dataset.}}
    \label{fig:ib-params}
\end{figure}

\section{Minimum separation with FPRF}\label{app:fprf-msccd}

\review{In Table~\ref{tab:us-vs-fprf} we compare our method with FPRF by changing the parameter $\eta$ that mimics minimum separation.}

\end{document}